\newif\ifisfullpaper
\definecolor{teal}{HTML}{008080}
\newcommand{\dotdiv}{\mathbin{\text{\@dotdiv}}}
\newcommand{\@dotdiv}{%
  \ooalign{\hidewidth\raise1ex\hbox{.}\hidewidth\cr$\m@th-$\cr}%
}
\newcommand{\descr}[1]{\vspace{0.2cm} \noindent \textbf{#1}}
\newtheorem{proposition}{Proposition}
\newtheorem{theorem}{Theorem}
\theoremstyle{definition}
\newtheorem{example}{Example}
\newtheorem{definition}{Definition}
\newcommand{\vtr}[1]{\mathbf{#1}}
\newcommand{\norm}[1]{\lVert{#1}\rVert}
\newcommand{\database}{\mathcal{D}}
\begin{document}
%
\title{Property-Preserving Hashing for \texorpdfstring{$\ell_1$}{l1}-Distance Predicates: Applications to Countering Adversarial Input Attacks}
%
%
%
%

\author*[1]{\fnm{Hassan} \sur{Asghar}}\email{hassan.asghar@mq.edu.au}
\author[1]{\fnm{Chenhan} \sur{Zhang}}\email{chzhang@ieee.org}
\author[1]{\fnm{Dali} \sur{Kaafar}}\email{dali.kaafar@mq.edu.au}

\affil*[1]{\orgdiv{School of Computing}, \orgname{Macquarie University}, \orgaddress{\country{Australia}}}


\abstract{Perceptual hashing is widely used to detect whether an input image is similar to a reference image with a variety of security applications. Recently, it has been shown to succumb to adversarial input attacks which make small imperceptible changes to the input image yet the hashing algorithm does not detect its similarity to the original image. Property-preserving hashing (PPH) is a recent construct in cryptography, which preserves some property (predicate) of its inputs in the hash domain. Researchers have so far shown constructions of PPH for Hamming distance predicates, e.g., the predicate which outputs $1$ if two inputs are within Hamming distance $t$. A key feature of PPH is its strong correctness guarantee, i.e., the probability that the predicate will not be correctly evaluated in the hash domain is negligible. Motivated by the use case of detecting similar images under adversarial setting, we propose the first PPH construction for an $\ell_1$-distance predicate. Roughly, this predicate checks if the two one-sided $\ell_1$-distances between two images are within a threshold $t$. Since many adversarial attacks use $\ell_2$-distance (related to $\ell_1$-distance) as the objective function to perturb the input image, by appropriately choosing the threshold $t$, we can force the attacker to add considerable noise to evade detection, and hence significantly deteriorate the image quality. Our proposed scheme is highly efficient, and runs in time $\mathcal{O}(t^2)$. For grayscale images of size $28 \times 28$, we can evaluate the predicate in $0.0784$ seconds when pixel values are perturbed by up to $1 \%$. For larger RGB images of size $224 \times 224$, by dividing the image into $1,000$ blocks, we achieve times of $0.0128$ seconds per block for $1 \%$ change, and up to $0.2641$ seconds per block for $14\%$ change. Furthermore, the time to process the entire image can be considerably improved since the scheme is highly parallel.}    
%
\keywords{Property-preserving hashing, adversarial attacks, error-correcting codes}

\maketitle

%
\section{Introduction}
\label{sec:Intro}
\ifisfullpaper
Consider
\else
Consider\footnote{This is the preprint of the paper with the same title, which has been accepted for publication in Cryptography and Communications from Springer Nature.}
\fi
a scenario in which an image needs to be checked against a database of images for any similarity. Privacy demands that both the image and the database not be revealed during this process. This scenario stems from several real-world use cases. For instance, this is required in face recognition for border control where the identity of a passenger is verified against a gallery of photos from other passengers on the same flight.\footnote{See ``2024 Update on DHS’s Use of Face Recognition \& Face Capture Technologies'' at \url{https://www.dhs.gov/archive/news/2025/01/16/2024-update-dhss-use-face-recognition-face-capture-technologies}.} Likewise, a cloud service provider may wish to ensure that an image uploaded to its cloud is not one of the flagged images in its database~\cite{struppek2022neuralhashbreak}. A possible solution to this is via \emph{perceptual hashing}~\cite{klinger2013phash, hao2021pph-attack, struppek2022neuralhashbreak}, variants of which have been used by Microsoft,\footnote{See PhotoDNA at \url{https://www.microsoft.com/en-us/photodna}.} Meta,\footnote{See \url{https://about.fb.com/news/2019/08/open-source-photo-video-matching/}.} and Apple.\footnote{See ``CSAM Detection -- Technical Summary'', at \url{https://www.apple.com/child-safety/pdf/CSAM_Detection_Technical_Summary.pdf}.} Perceptual hashing is a type of \emph{locality sensitive hashing} (LSH)~\cite{indyk1998ann} which produces similar hashes to perceptually similar images, unlike cryptographic hash functions. The result is that we can not only check images for their similarities, but can also do so more efficiently using their succinct hash digests. 

Recently, several attacks have been demonstrated on perceptual hashing. One type of attack, called an \emph{evasion attack}, 
slightly perturbs an image to construct a perceptually similar image but with dissimilar hash digests under the perceptual hashing scheme~\cite{struppek2022neuralhashbreak, hao2021pph-attack}. Many perceptual hash functions involve two main steps (among others): extracting features from the image and creating a hash of the feature vector~\cite{hao2021pph-attack}. The guarantee that two perceptually similar images produce a similar hash digest is only in probability, which in practice is not negligible. Thus, there is significant space available to the attacker for image alterations to launch an evasion attack. For instance, the basic attack in~\cite{hao2021pph-attack} formulates the competing requirements of perceptual similarity and dissimilarity of the hash digests as an optimization problem, the solution to which is the required adversarial image. 

As mentioned above, the reason for the success of evasion attacks is that perceptual hashing, and locality sensitive hashing in general, does not tend to have negligible \emph{correctness error}. Informally, let $\vtr{x}$ and $\vtr{y}$ be two perceptually similar images, and let $h$ be a perceptual hash function, then ideally the requirement is that $\Pr[h(\vtr{x}) = h(\vtr{y}] \geq 1 - \epsilon$~\cite{zauner2010ph-thesis, hao2021pph-attack, holmgren2022nearly}. However, there are strong lower bounds suggesting that $\epsilon$ cannot be made negligibly small~\cite{o2014lowerbound-ph, holmgren2022nearly}. A related notion of hashing is \emph{property-preserving hashing} (PPH)~\cite{boyle2018adversarially}. Such hash functions have the property that the hash digests of two inputs preserve some predicate of the two inputs. For instance, a predicate that outputs 1 if the Hamming distance of its two inputs are within a certain threshold. Moreover, the definition postulates that the correctness error be negligible. Generally, these hash functions are sampled from a larger PPH family of functions. Recent works have constructed robust versions of PPH in which the adversary can choose inputs depending on the description of the hash function chosen from the family~\cite{boyle2018adversarially, holmgren2022nearly, fleischhacker2021robust, fleischhacker2022property}. These works have focused on Hamming distance predicates. 

Our interest in PPH for the aforementioned application of checking similarity of images stems from the fact that adversarial attacks to induce image misclassification introduce small perturbations in the images by using a distance metric such as the $\ell_2$-distance in the objective function~\cite{madry2018towards, carlini2017l2}. Furthermore, researchers in this space have also used the $\ell_2$-norm as a measure of perceptual similarity between images~\cite{hao2021pph-attack, carlini2017l2}. Thus, if we have a PPH function family that preserves Euclidean distance predicates we can be assured that any evasion attacks possible in the hash space of the images are precisely those that are possible in the original space over the Euclidean distance. As a result, setting an appropriate threshold for similarity over the Euclidean distance ensures that the attacker can only succeed by substantially distorting the adversarial image, i.e., reduced perceptual similarity to the original image. Our contributions are as follows

\begin{itemize}
    \item We propose the first PPH family for the asymmetric $\ell_1$-distance predicate for $n$-element vectors with each element from the set $\{0, 1, \ldots, q-1\}$. Roughly, given two vectors $\vtr{x}$ and $\vtr{y}$, the asymmetric $\ell_1$-distance predicate outputs 1 if both $\norm{\vtr{x} \dotdiv \vtr{y}}_1$ and $\norm{\vtr{y} \dotdiv \vtr{x}}_1$ are less than $t/2$ for some threshold $t$, where $\vtr{x} \dotdiv \vtr{y}$ is the vector whose $i$th element is defined as $\max\{x_i - y_i, 0\}$. Prior to this work, the only known PPH constructions are for Hamming distance predicates~\cite{boyle2018adversarially, holmgren2022nearly, fleischhacker2021robust, fleischhacker2022property}. Our scheme is based on $\ell_1$-error correcting codes from Tallini and Rose~\cite{tallinil1codes}. The $\ell_1$-distance metric is related to $\ell_2$-distance (see Section~\ref{sec:pre}), and can be used as a proxy for adversarial attacks or perceptual similarity. 
    \item We show that our PPH construction is robust to one-sided errors: the predicate outputs 1 yet the PPH evaluates to 0. This property is important to prevent the aforementioned evasion attacks. We also give evidence that the other one-sided error can be made practically small in the non-robust setting. This error is important to rule out collisions, i.e., two dissimilar images that produce the same hash. We discuss collision attacks and inverting the hash function, i.e., finding the input vector given the hash digest, and show some evidence that they may be computationally expensive.  
    \item We prove lower bounds on the possible compression, i.e., the length of the digest, to preserve any $\ell_1$-distance predicate. Our scheme produces hash digests of length $t \log_2 n$, which is considerably less than $n \log_2 q$ (size of images) if $t$ is small. We show that for practical parameters this is close to the lower bounds for small $t$. For larger $t$ the digest size is large, but on par with Hamming distance PPHs which achieve compression of around $t \log_2 n$~\cite{holmgren2022nearly}.\footnote{We remark that the digest size cannot be  independent of the size of the image $n$ and the threshold $t$, as the digest needs to contain information about the distance. However, in many applications images are resized to a fixed resolution before being processed, and hence we can assume that all images are of the same size in our application.}
    \item We implement our scheme using the Python library \verb+galois+~\cite{Hostetter_Galois_2020} and show the computational time of our scheme against increasing values of $t$. We further implement two adversarial evasion attacks and two generic image transformations over the public Imagenette dataset~\cite{imagenette}, and show how the scheme can prevent such attacks with a tradeoff between adversarial attack prevention and computational time.
\end{itemize}

\section{Preliminaries}
\label{sec:pre}
\descr{Images.} 
Let $q, n$ be positive integers. The image space is the set $\mathbb{Z}_q^n$, where $\mathbb{Z}_q = \{0, 1, \ldots, q - 1\}$, and $n$ is the product of the number of elements in a pixel times the total number of pixels in the image. While we assume that $q \geq 2$, we are not interested in $q = 2$, as binary images can be handled via Hamming distance predicates. 

\begin{example}
\label{ex:images}
Consider a set $S$ of images of size $28 \times 28$ with each pixel having a binary value, white or black. Each pixel is therefore from $\mathbb{Z}_2$. The $28 \times 28$ pixel matrix can be ``flattened'' to a vector of 784 pixels. Thus each image in $S$ is in $\mathbb{Z}_2^{784}$. Consider instead that $S$ contains only grayscale images. Each pixel now has a grayscale value in $\{0, 1, \ldots, 255\}$. We can represent images in $S$ as vectors from $\mathbb{Z}_{256}^{784}$ (with $q = 256$). Consider now that the images in $S$ are RGB colored images. Each pixel is now a vector containing R, G and B values, each in $\{0, 1, \ldots, 255\}$, i.e., from the vector space $\mathbb{Z}_{256}^3$. Thus the images in $S$ are vectors from the product space $(\mathbb{Z}_{256}^{3})^{784}$. This can be flattened further to obtain the product space $\mathbb{Z}_{256}^{2352}$.
\qed
\end{example}

\descr{Metrics.} The Hamming, $\ell_1$ and $\ell_2$-distances between vectors $\vtr{x}, \vtr{y} \in \mathbb{Z}_q^n$ are denoted by $\norm{\vtr{x} - \vtr{y}}_0$, $\norm{\vtr{x} - \vtr{y}}_1$ and $\norm{\vtr{x} - \vtr{y}}_2$, respectively. A refresher on these is given in Appendix~\ref{app:useful}.

\descr{Dot Product.} Let $\vtr{x}, \vtr{y} \in \mathbb{Z}_q^n$, then their dot product is defined as $\langle \vtr{x}, \vtr{y} \rangle = \sum_{i = 1}^n x_i y_i$. 
The Cauchy-Schwarz inequality states that 
\[
\left| \langle \vtr{x}, \vtr{y} \rangle \right| \leq \norm{\vtr{x}}_2 \norm{\vtr{y}}_2. 
\]
See for example~\cite[\S 2.2]{shurman2016calculus}.

\descr{Relation Between Norms.} The following result relates the $\ell_1$-norm to the $\ell_2$-norm.
\begin{proposition}
\label{prop:l1-l2-relation}
Let $\vtr{x}$ and $\vtr{y}$ be images. Then,
\begin{equation}
\label{eq:l1-l2-relation}
 \norm{\vtr{x} - \vtr{y}}_2 \leq \norm{\vtr{x} - \vtr{y}}_1 \leq \sqrt{n}\norm{\vtr{x} - \vtr{y}}_2.   
\end{equation}
Furthermore, these bounds are tight. 
\end{proposition}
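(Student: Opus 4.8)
The plan is to set $\vtr{z} = \vtr{x} - \vtr{y}$ with components $z_1, \dots, z_n$, which reduces the claim to the two-sided norm inequality $\norm{\vtr{z}}_2 \le \norm{\vtr{z}}_1 \le \sqrt{n}\,\norm{\vtr{z}}_2$ for a single integer vector $\vtr{z}$. Both quantities are nonnegative, so I can work with squares throughout and recover the stated inequalities by taking square roots.

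For the left inequality I would square both sides and observe that
\[
\norm{\vtr{z}}_1^2 = \left( \sum_{i=1}^n |z_i| \right)^2 = \sum_{i=1}^n z_i^2 + \sum_{i \neq j} |z_i|\,|z_j| \ge \sum_{i=1}^n z_i^2 = \norm{\vtr{z}}_2^2,
\]
since every cross term $|z_i|\,|z_j|$ is nonnegative. For the right inequality I would invoke the Cauchy--Schwartz inequality already stated in this section, applied to the all-ones vector $\vtr{1} = (1, \dots, 1)$ and the vector $|\vtr{z}| = (|z_1|, \dots, |z_n|)$, which gives
\[
\norm{\vtr{z}}_1 = \sum_{i=1}^n |z_i| = \langle \vtr{1}, |\vtr{z}| \rangle \le \norm{\vtr{1}}_2 \, \norm{|\vtr{z}|}_2 = \sqrt{n}\,\norm{\vtr{z}}_2,
\]
using $\norm{\vtr{1}}_2 = \sqrt{n}$ and the fact that taking absolute values leaves the $\ell_2$-norm unchanged.

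Neither inequality is difficult; the only point that deserves care, and the part I would treat as the main obstacle, is the tightness claim, which must be witnessed by vectors $\vtr{z}$ that genuinely arise as a difference of two images in $\mathbb{Z}_q^n$ rather than arbitrary reals. For the left bound, equality forces all cross terms to vanish, i.e. at most one coordinate is nonzero, so I would exhibit $\vtr{x} = (1, 0, \dots, 0)$ and $\vtr{y} = \vtr{0}$ (valid since $q \ge 2$), giving $\norm{\vtr{z}}_2 = \norm{\vtr{z}}_1 = 1$. For the right bound, equality in Cauchy--Schwartz requires $|\vtr{z}|$ proportional to $\vtr{1}$, i.e. all coordinates of equal magnitude, so I would take $\vtr{x} = (1, 1, \dots, 1)$ and $\vtr{y} = \vtr{0}$, giving $\norm{\vtr{z}}_1 = n$ and $\sqrt{n}\,\norm{\vtr{z}}_2 = \sqrt{n}\cdot\sqrt{n} = n$. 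Both witnesses lie in the image space, so the bounds cannot be improved, completing the proof.
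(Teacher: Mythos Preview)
Your argument is correct and essentially identical to the paper's: the same substitution $\vtr{z}=\vtr{x}-\vtr{y}$, the same square-expansion for the left inequality, the same Cauchy--Schwartz with the all-ones vector for the right, and the same tightness witnesses (the paper uses $(q-1,\dots,q-1)$ rather than $(1,\dots,1)$ on the right, but either works).
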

\begin{proof}
Let $\vtr{z} = \vtr{x} - \vtr{y}$. Then, equivalently, we need to show that
\[
\norm{\vtr{z}}_2 \leq \norm{\vtr{z}}_1 \leq \sqrt{n}\norm{\vtr{z}}_2
\]
Note that $\vtr{z}$ may not be an image, i.e., $\vtr{z}$ may not be a member of $\mathbb{Z}_q^n$. Now we see that
\begin{align*}
  \norm{\vtr{z}}_1^2 &= \left( \sum_{i = 1}^n |z_i| \right)^2 \\
  &= \left( \sum_{i = 1}^n |z_i| \right) \left( \sum_{i = 1}^n |z_i| \right)\\
  &=  \sum_{i = 1}^n |z_i|^2  +  \sum_{i, j : i \neq j} |z_i||z_j| \\
  &\geq \sum_{i = 1}^n |z_i|^2 =  \norm{\vtr{z}}_2^2,
\end{align*}
from which it follows that $\norm{\vtr{z}}_2 \leq \norm{\vtr{z}}_1$. For the second inequality, let $\vtr{1}$ be the vector of all 1's, and let $\vtr{b}$ be such that $b_i = |z_i|$ for all $i$. Then,
\begin{equation*}
  \norm{\vtr{z}}_1 = \sum_{i=1}^n |z_i| = \langle \vtr{1}, \vtr{b} \rangle  \leq \norm{\vtr{1}}_2 \norm{\vtr{b}}_2 = \sqrt{n} \norm{\vtr{z}}_2,
\end{equation*}
where we have used the Cauchy-Schwarz inequality. To show that the bounds are tight, let us first consider the inequality on the left. Assume that for some positive constant $c$ we have $c \norm{\vtr{x} - \vtr{y}}_2 \leq \norm{\vtr{x} - \vtr{y}}_1$. Consider the two images $\vtr{x} = (1, 0, \ldots, 0)$ and $\vtr{y} = \vtr{0}$, where $\vtr{0}$ is the zero vector. Then,
\[
c \norm{\vtr{x} - \vtr{y}}_2 \leq \norm{\vtr{x} - \vtr{y}}_1 \Rightarrow c \sqrt{1} \leq 1 \Rightarrow c \leq 1.
\]
Thus, $c = 1$ is tight. For the RHS, assume that for some positive constant $c$, we have $ \norm{\vtr{x} - \vtr{y}}_1 \leq c \norm{\vtr{x} - \vtr{y}}_2$.  Consider the two images $\vtr{x} = (q-1, q-1, \ldots, q-1)$ and $\vtr{y} = \vtr{0}$. Then,
\[
\norm{\vtr{x} - \vtr{y}}_1 \leq c \norm{\vtr{x} - \vtr{y}}_2 \Rightarrow n(q-1) \leq c \sqrt{n}(q-1) \Rightarrow c \geq \sqrt{n}. 
\]
\end{proof}

Thus, it suffices to focus on the $\ell_1$-norm. For instance, if we want to preserve $\norm{\vtr{x} - \vtr{y}}_2 \leq t'$ for some threshold $t'$ , we can set $t = \sqrt{n}t'$ as the threshold for the $\ell_1$-norm. 



\descr{Min-Entropy.} To prove lower bounds on the amount of compression achievable under the Hamming distance, Holmgren et al~\cite{holmgren2022nearly} use the notion of average min-entropy as defined in~\cite{dodis-min-entropy}. The min-entropy $H_\infty(X)$ of the random variable $X$ is defined as $-\log_2 (\max_x \Pr(X = x))$. For a pair of random variables $X$ and $Y$, the average min-entropy of $X$ given $Y$ is defined as $H_\infty(X \mid Y) = -\log_2 (\sum_{y \in Y} \Pr(Y = y) \cdot (\max_x \Pr(X= x \mid Y= y))$. The following result is from~\cite{holmgren2022nearly, dodis-min-entropy}.

\begin{proposition}[Dodis et al~\cite{dodis-min-entropy}]
\label{prop:dodis}
Let $X, Y, Z$ be random variables with $Z$ having support over a binary string of length $m$. Then
\[
H_\infty(X \mid Y, Z) \geq H_\infty(X \mid Y) - m
\]
\end{proposition}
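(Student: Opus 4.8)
The plan is to reduce the statement about average min-entropies to an equivalent statement about \emph{predictabilities}, i.e., the quantities inside the logarithms, and then exploit the fact that $Z$ ranges over at most $2^m$ distinct values. Exponentiating the definition, I would write $\tilde{p}(X \mid Y) = 2^{-H_\infty(X \mid Y)} = \sum_{y} \Pr(Y=y)\max_x \Pr(X=x \mid Y=y)$, and likewise $\tilde{p}(X \mid Y,Z) = 2^{-H_\infty(X \mid Y, Z)} = \sum_{y,z}\Pr(Y=y, Z=z)\max_x \Pr(X=x \mid Y=y, Z=z)$. Since $t \mapsto -\log_2 t$ is decreasing, the claimed inequality $H_\infty(X \mid Y,Z) \geq H_\infty(X \mid Y) - m$ is equivalent to the reverse-direction bound $\tilde{p}(X \mid Y, Z) \leq 2^m \cdot \tilde{p}(X \mid Y)$, and it is this multiplicative bound on predictabilities that I would actually establish.

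First I would absorb the weight $\Pr(Z=z \mid Y=y)$ into the maximum. Using $\Pr(Y=y,Z=z) = \Pr(Y=y)\Pr(Z=z\mid Y=y)$ together with the fact that $\Pr(Z=z\mid Y=y)$ does not depend on the maximizing $x$, each summand rewrites as $\Pr(Y=y)\max_x \Pr(X=x, Z=z\mid Y=y)$. This recasts the double sum as $\tilde{p}(X \mid Y,Z) = \sum_{y} \Pr(Y=y)\sum_{z} \max_x \Pr(X=x, Z=z\mid Y=y)$, isolating an inner sum over $z$ for each fixed $y$.

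The key step is then to bound that inner sum. Because the joint event $\{X=x, Z=z\}$ is contained in $\{X=x\}$, I have the pointwise bound $\Pr(X=x, Z=z\mid Y=y)\leq \Pr(X=x\mid Y=y)$, hence $\max_x \Pr(X=x, Z=z\mid Y=y)\leq \max_x \Pr(X=x\mid Y=y)$. Since $Z$ is supported on binary strings of length $m$ it takes at most $2^m$ values, so summing the above over $z$ contributes at most a factor of $2^m$, giving $\sum_{z} \max_x \Pr(X=x, Z=z\mid Y=y)\leq 2^m\max_x \Pr(X=x\mid Y=y)$. Substituting this back and recognising the resulting right-hand side as $2^m\,\tilde{p}(X\mid Y)$ yields the desired bound, and taking $-\log_2$ recovers the statement.

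The only genuinely delicate point is the interchange in the second step: the maximizing $x$ may differ for each pair $(y,z)$, so the rewriting must be carried out before, not after, taking the maximum, and the monotonicity bound must be applied per value of $z$ rather than after summing over $z$. The counting of at most $2^m$ values of $Z$ — which is precisely where the additive loss of $m$ bits originates — is then immediate from the hypothesis on the support of $Z$.
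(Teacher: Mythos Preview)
Your argument is correct and is essentially the standard proof of this lemma from Dodis et al.: rewrite the average min-entropy as an expected maximum probability, absorb the $\Pr(Z=z\mid Y=y)$ factor into the maximand to obtain $\max_x \Pr(X=x,Z=z\mid Y=y)$, bound this by $\max_x \Pr(X=x\mid Y=y)$, and use $|\mathrm{supp}(Z)|\leq 2^m$. Note, however, that the paper does not actually supply its own proof of this proposition---it is stated as a cited result from~\cite{dodis-min-entropy} and~\cite{holmgren2022nearly}---so there is no in-paper argument to compare against; your write-up would serve as a self-contained replacement for that citation.
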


\subsection{Scenario and Threat Model}
\label{subsec:require}
We assume a database $\database$ of $N$ images $\vtr{x}_1, \ldots, \vtr{x}_{N}$ hosted by a server. 
A predicate $P : \mathbb{Z}_q^n \times \mathbb{Z}_q^n \rightarrow \{0, 1\}$ is a function applied to a pair of images. For instance, one such predicate for two images $\vtr{x}, \vtr{y}$ is:
\begin{equation}
\label{eq:l1-distance-pred}
P(\vtr{x}, \vtr{y}) = \begin{cases}
    1, &\text{ if } \norm{\vtr{x} - \vtr{y}}_1 \leq t, \\
    0, & \text{ otherwise}
\end{cases}
\end{equation}
The predicate used in our scheme is different from the one above, but still based on the $\ell_1$-distance, as we shall see later.
Given any input image $\vtr{y}$ from a client, the server checks if it satisfies the predicate $P$ against any image $\vtr{x}_i \in \database$. 

We consider a hash function family $\mathcal{H} = \{ h : \mathbb{Z}_q^n \rightarrow \{0, 1\}^m \}$. Given $\vtr{x} \in \mathbb{Z}_q^n$, we call $h(\vtr{x})$ for some $h \in \mathcal{H}$, the \emph{hash digest} of the image $\vtr{x}$. Associated with the family, there is a deterministic polynomial time algorithm which evalutes the function $\mathsf{eval}_h : \{0, 1\}^m \times \{0, 1\}^m \rightarrow \{0, 1\}$. 
Informally, the hash function $h$ should satisfy the following properties:

\begin{enumerate}
    \item \textit{Compression:} The output $h(\vtr{x})$ should be compressed, i.e., the size of $h(\vtr{x})$ should be less than the size of the image $\vtr{x}$.
    \item \textit{Hiding:} It should be hard to find $\vtr{x}$ from $h(\vtr{x})$.
    \item \textit{Property-preservation:} The function $\mathsf{eval}_{h}$ should be such that $\mathsf{eval}_h(h(\vtr{x}), h(\vtr{y})) = P(\vtr{x}, \vtr{y})$ with high probability.
\end{enumerate}
The first requirement is for utility, as one would want the hash function to at least compress the input space. The other two requirements are for privacy. We want the input image $\vtr{x}$ to be hidden from the server, with only the output of the predicate $P$ revealed. The third requirement is related to an adversarial user who wishes to submit an image $\vtr{y}$ such that there is a mismatch between $P(\vtr{x}, \vtr{y})$ for some $\vtr{x} \in \database$, and the $\mathsf{eval}_h$ function.  

\descr{Adversarial Goals.} We consider the following attacks. 
\begin{enumerate}
    \item \emph{Inversion attack:} Given the hash digest $h(\vtr{x})$ of an image $\vtr{x}$ chosen uniformly at random from $\mathbb{Z}_q^n$, find $\vtr{x}$. 
    \item \emph{Evasion attack:} Given $h \in \mathcal{H}$, find a pair of images $\vtr{x}$ and $\vtr{y}$, such that $P(\vtr{x}, \vtr{y}) = 1$ yet $\mathsf{eval}_h(h(\vtr{x}), h(\vtr{y})) = 0$.
    \item \emph{Collision attack:} Given $h \in \mathcal{H}$, find a pair of images $\vtr{x}$ and $\vtr{y}$ such that $P(\vtr{x}, \vtr{y}) = 0$ yet $\mathsf{eval}_h(h(\vtr{x}), h(\vtr{y})) = 1$.
\end{enumerate}
The first goal is in a similar flavour to inverting passwords given their hash digests. 
The defence against the second attack follows directly from the definition of PPH. This is the main attack that our scheme seeks to prevent. Prevention of the third attack is also desirable. We show that while this may still be possible, the probability of collisions can be made extremely low in the non-robust setting, i.e., two arbitrary images deemed similar under $h$. 




\subsection{Property-Preserving Hashing and Compression Bounds}
\label{sub:pph-ham-compress}
We recall the definitions of PPH and robust PPH (RPPH) from~\cite{holmgren2022nearly, boyle2018adversarially}. We assume $n$ and $m$ to be polynomials in a security parameter $\lambda$. 
\begin{definition}[Property Preserving Hash (PPH)]
\label{def:pph}
A $(b, n, m)$-property preserving hash family $\mathcal{H} = \{ h : \{0, 1\}^n \rightarrow \{0, 1\}^m$ for a predicate $P : \{0, 1\}^n \times \{0, 1\}^n \rightarrow \{0, 1\}$ is a family of efficiently computable functions with the following algorithms
\begin{itemize}
    \item $\mathsf{samp}(1^\lambda)$ is a probabilistic polynomial time algorithm that outputs a random $h \in \mathcal{H}$.
    \item $\mathsf{eval}_h(y_1, y_2)$ is a deterministic polynomial time algorithm that for an $h \in \mathcal{H}$ and $y_1, y_2 \in \{0, 1\}^m$ outputs a single bit.
    \item \emph{$b$-Correctness:} For a bit $b \in \{0, 1\}$, for any $h \in \mathcal{H}$ and for all $x_1, x_2 \in \{0, 1\}^n$ we have
\[
\Pr_{h \leftarrow \mathsf{samp}(1^\lambda)} \left[ P(x_1, x_2) \neq \mathsf{eval}_h(h(x_1), h(x_2)) \mid P(x_1, x_2) = b \right] = \mathsf{negl}(\lambda)
\]
\end{itemize}  \qed  
\end{definition}
Note that we have slightly modified the definition of PPH from~\cite{holmgren2022nearly, boyle2018adversarially} to include two-sided correctness. This is because, even though our scheme is $1$-correct, $0$-correctness is only guaranteed with a small but non-negligible probability.  
A PPH is considered to be an RPPH if it further satisfies the following definition.
\begin{definition}[Robust Property Preserving Hash (RPPH)]
\label{def:rpph}
A $(b, n, m)$-PPH family is a robust $(b, n, m)$-property preserving hash family if for all probabilistic polynomial time algorithms $\mathcal{A}$
\[
\Pr_{\substack{h \leftarrow \mathsf{samp}(1^\lambda)\\x_1, x_2 \leftarrow \mathcal{A}(h)}} \left[ P(x_1, x_2) \neq \mathsf{eval}_h(h(x_1), h(x_2)) \mid P(x_1, x_2) = b \right] = \mathsf{negl}(\lambda)
\] \qed
\end{definition}
The main difference between an RPPH and a PPH is that the inputs $x_1, x_2$ that bring about a mismatch between the predicate and the evaluation function are adversarially chosen in the former who is also given the description of the sampled hash function $h$. 

\descr{Differences from Perceptual Hashing.} At this point, it is best to re-emphasize the distinction between perceptual hashing and a PPH family. Perceptual hashing seeks to determine whether two images are perceptually similar given their hash digests. On the other hand, property-preserving hashing (PPH)  preserves some property of its input in the hash domain. This can be any property of the input as long as it can be specified by a predicate. Thus PPH is a more general notion. As an example, one can use a PPH family to determine if two biometric templates are similar in terms of Hamming distance~\cite{boyle2018adversarially}. For the specific example of checking whether two images are within a distance threshold, the two hashing techniques are directly comparable. As mentioned in the introduction, a perceptual hashing scheme is defined as a hash function family which ensures that the hashes of two perceptually similar images are the same with high probability~\cite{zauner2010ph-thesis, hao2021pph-attack, holmgren2022nearly}. However, this probability is not required to be negligible. In contrast, a PPH scheme by definition requires that the hashes of two images within a certain distance from each other should \emph{evaluate} to the same predicate except with negligible probability. A key differential in PPH is the fact that instead of directly comparing the digests for similarity, we use a separate algorithm, i.e., $\mathsf{eval}_h$ to evaluate the embedded predicate. The reason why this is relevant is because, as mentioned in the introduction, due to the requirement of the perceptual hash being the same for perceptually similar images, there are strong lower bounds suggesting that this is not possible with non-negligible probability~\cite{o2014lowerbound-ph, holmgren2022nearly}. A PPH family circumvents this issue partly due to the use of a separate evaluation algorithm, but with the drawback that the hash digest size should be large enough to embed information about the (distance) predicate as we show next. 

\descr{Lower Bound on Compression.} 
Given a PPH for the Hamming distance predicate, Holmgren et al~\cite{holmgren2022nearly} derive a lower bound on $m$, i.e., the achievable compression or digest length. In our case the inputs are from $\mathbb{Z}_q^n$ instead of the generic set $\{0, 1\}^n$. We therefore, review their lower bound for inputs in $\mathbb{Z}_q^n$. Accordingly, we assume the PPH family is $\mathcal{H} = \{h: \mathbb{Z}_q^n \rightarrow \{0, 1\}^m\}$, and the Hamming distance predicate evaluates to $1$ if $\norm{\vtr{x} - \vtr{y}}_0 \leq t$ and $0$ otherwise, for $\vtr{x}, \vtr{y} \in \mathbb{Z}_q^n$.
The strategy used in~\cite{holmgren2022nearly} to get a bound on $m$ is as follows. Given a random $h \in \mathcal{H}$, and a random variable $X$ uniformly distributed over $\mathbb{Z}_q^n$, we first get the lower bound from Proposition~\ref{prop:dodis}:
\begin{equation}
\label{eq:min-ent-lower}
H_\infty(X \mid h, h(X)) \geq H_\infty(X \mid h) - m \geq H_\infty(X) - m = n \log_2 q - m.    
\end{equation}
Here, the third inequality follows since $X$ and $h$ are independently distributed, and the last equality is true because min-entropy is maximum when $X$ is uniformly distributed. Next, the task is to obtain an upper bound on $H_\infty(X \mid h, h(X))$, which would then give an upper bound on $m$ after rearranging the inequalities. The strategy used in~\cite{holmgren2022nearly} to obtain the upper bound is to find a vector $\vtr{y}$ which is at a Hamming distance exactly $t$ from $\vtr{x}$, where $\vtr{x}$ is the vector hashed under the PPH, i.e., $h(\vtr{x})$. They then exactly reconstruct $\vtr{x}$ by using the $\mathsf{eval}_h$ function of the PPH as an ``oracle''. More specifically, they first guess a vector $\vtr{y}$ which is exactly at Hamming distance $t$ from $\vtr{x}$. The number of such vectors is $\binom{n}{t}$. They then flip the bits of $\vtr{y}$ one at a time, and check whether the $\mathsf{eval}_h$ function outputs 0 or 1 on $h(\vtr{x})$ and the hash of the version of $\vtr{y}$ with one bit flipped. This uses at most $n$ applications of the $\mathsf{eval}_h$ function of the PPH. As long as the $\mathsf{eval}_h$ function has error less than $1/2n$, their algorithm can reconstruct $\vtr{x}$ with probability at least $\frac{\binom{n}{t}}{2^n} \cdot \frac{1}{2}$. Note that in their case $q = 2$. It is easy to change this for a general $q$ to $\frac{\binom{n}{t}}{q^n} \cdot \frac{1}{2}$ by assuming that the $\mathsf{eval}_h$ function has error at most $1/2qn$. Now, let $\mathcal{R}$ be their algorithm to reconstruct $\vtr{x}$, then following~\cite{holmgren2022nearly}:
\begin{align*}
    \Pr_{h, \mathbf{x}} ( \mathcal{R}(h, h(\vtr{x})) = \vtr{x}) ) &\geq \Pr_{\vtr{y}} (\norm{\vtr{x} - \vtr{y}}_0 = t)  \Pr_{\vtr{x}, \vtr{y}, h} (\mathcal{R}(h, h(\vtr{x})) = \vtr{x}) \mid \norm{\vtr{x} - \vtr{y}}_0 = t) \\
    &> \frac{\binom{n}{t}}{q^n} \cdot \frac{1}{2}
\end{align*}
Now, abusing notation by letting $\mathcal{H}$ also denote the random variable that takes on a random $h \in \mathcal{H}$, we get
\begin{align}
H_\infty(X \mid h, h(X)) 
&= -\log_2 \left(\sum_h \Pr(\mathcal{H} = h) \cdot (\max_{\vtr{x}} \Pr(X = \vtr{x} \mid h, h(\vtr{x}))\right) \nonumber\\
& \leq -\log_2 \left( \sum_h \Pr(\mathcal{H} = h) \cdot \Pr_\vtr{x} ( \mathcal{R}(h, h(\vtr{x})) = \vtr{x} \mid h, h(\vtr{x})) \right) \nonumber\\
&= -\log_2 (\Pr_{h, \vtr{x}} ( \mathcal{R}(h, h(\vtr{x})) = \vtr{x}) \label{eq:ent-upper-bound}\\
&\leq 1 + n \log_2 q - \log_2 \binom{n}{t} \nonumber
\end{align}
Combining the above with the inequality in Eq.~\eqref{eq:min-ent-lower}, and noting that $m$ is an integer, we obtain $m \geq \log_2 \binom{n}{t}$~\cite{holmgren2022nearly}. This bound is for non-robust PPH families, and hence also applies to RPPH families.

\section{\texorpdfstring{Compression Bounds for $\ell_1$-Distance PPH Family}{Compression Bounds for l1-Distance PPH Family}
}
\label{sub:l1-bound}
Assume we are given a PPH family $\mathcal{H} = \{h: \mathbb{Z}_q^n \rightarrow \{0, 1\}^m\}$ for the $\ell_1$-distance predicate of Eq.~\eqref{eq:l1-distance-pred}.
We are interested in finding a lower bound on $m$ similar to the one for the Hamming distance predicate (Section~\ref{sub:pph-ham-compress}). Unfortunately, the strategy used in~\cite{holmgren2022nearly} does not work for the $\ell_1$ distance. The main reason being that the predicate $\norm{\vtr{x} - \vtr{y}}_1 \leq t$ does not reveal enough information about $\vtr{x}$ given some vector $\vtr{y}$ which is exactly an $\ell_1$ distance of $t$ from $\vtr{x}$ to be able to recover $\vtr{x}$, apart from the fact that it can be used to sample a vector within distance $t$ from $\vtr{x}$. In the following we prove two bounds: one uses a large $t > 0.25qn$, and the other a much smaller $t \approx 0.1n$. We first prove a few results related to the $\ell_1$-norm of vectors in $\mathbb{Z}_q^n$. 

\subsection{\texorpdfstring{The $\ell_1$-Ball of Radius $t$}{The l1-Ball of Radius t}}
For any $\vtr{x} \in \mathbb{Z}_q^n$, let $B_1(\vtr{x}, q, t)$ denote the set of vectors $\vtr{y} \in \mathbb{Z}_q^n$, such that $\norm{\vtr{x} - \vtr{y}}_1 \leq t$. Since $\vtr{x}, \vtr{y} \in \mathbb{Z}_q^n$, we can assume that $t$ is an integer. Let $\vtr{0}$ denote the all 0 vector. 

\begin{proposition}
\label{prop:ball-is-contained}
For all $\vtr{x} \in \mathbb{Z}_q^n$, we have $|B_1(\vtr{x}, q, t)| \geq  |B_1(\vtr{0}, q, t)|$.
\end{proposition}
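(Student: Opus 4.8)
The plan is to prove the stronger statement that the ball size is non-increasing as we slide the center toward the corner $\vtr{0}$ one coordinate at a time; iterating this over all $n$ coordinates then yields the claim. The starting point is a convenient reparametrization. Writing $d_i = y_i - x_i$, the set $B_1(\vtr{x}, q, t)$ is in bijection with the set of integer vectors $\vtr{d}$ satisfying $-x_i \le d_i \le q - 1 - x_i$ for every $i$ together with $\norm{\vtr{d}}_1 = \sum_i |d_i| \le t$. The key structural observation is that each admissible interval $[-x_i, q-1-x_i]$ always contains $0$ and consists of exactly $q$ consecutive integers; the center $\vtr{0}$ is precisely the case where every interval sits at the extreme position $[0, q-1]$, i.e.\ entirely on the nonnegative side.

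The heart of the argument is a single-coordinate comparison. Fix the centers $x_2, \dots, x_n$ (and hence the admissible values of $d_2, \dots, d_n$) and vary only $x_1 \in \{0, \dots, q-1\}$. Grouping the count according to $s := \sum_{i \ge 2} |d_i|$, for each admissible tuple $(d_2, \dots, d_n)$ the number of valid $d_1$ equals the number of integers in $[-x_1, q-1-x_1] \cap [-u, u]$, where $u := t - s$. When $u \ge 0$ this count is $f(x_1) = \min(x_1, u) + \min(q-1-x_1, u) + 1$, and it is $0$ when $u < 0$ regardless of $x_1$. Since the admissible set of tuples $(d_2, \dots, d_n)$ does not depend on $x_1$, it suffices to show that $f$ is minimized at $x_1 = 0$.

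This reduces to the elementary inequality $\min(x_1, u) + \min(q-1-x_1, u) \ge \min(q-1, u)$, which holds for all $x_1 \in \{0, \dots, q-1\}$ and $u \ge 0$, with equality at $x_1 = 0$; I would prove it by a short case analysis on whether $x_1$ and $q-1-x_1$ each exceed $u$. Because $f(0) = \min(q-1, u) + 1$, this gives $f(x_1) \ge f(0)$ termwise, and summing over all admissible $(d_2, \dots, d_n)$ shows that replacing $x_1$ by $0$ never increases $|B_1(\vtr{x}, q, t)|$.

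Finally I would iterate the comparison over coordinates $1, 2, \dots, n$ in turn, at each step resetting the current center coordinate to $0$ without increasing the ball size, until the center has been transformed into $\vtr{0}$; this yields $|B_1(\vtr{x}, q, t)| \ge |B_1(\vtr{0}, q, t)|$. I expect the main obstacle to be isolating the correct one-dimensional quantity $f$ and decoupling the count coordinate-by-coordinate despite the global $\ell_1$ constraint; once this is done, the minimality inequality is a routine case check and the induction over coordinates is straightforward.
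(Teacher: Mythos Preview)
Your argument is correct. The reparametrization in terms of $d_i = y_i - x_i$ is sound, the one-dimensional count $f(x_1) = \min(x_1, u) + \min(q-1-x_1, u) + 1$ is correct, and the inequality $\min(x_1, u) + \min(q-1-x_1, u) \ge \min(q-1, u)$ with equality at $x_1 = 0$ holds by the case split you outline. Iterating coordinate-by-coordinate is valid because the argument for each coordinate depends only on the \emph{other} center coordinates being fixed, not on their values.

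Your route is genuinely different from the paper's. The paper exhibits a single explicit injection $B_1(\vtr{0}, q, t) \hookrightarrow B_1(\vtr{x}, q, t)$: given $\vtr{y}' \in B_1(\vtr{0}, q, t)$, it sends $y'_i$ to $x_i - y'_i$ when $x_i \ge y'_i$ and to $y'_i$ otherwise, then checks directly that the image lies in the target ball and that the map is injective. This is shorter and gives the result in one shot. Your approach instead proves a stronger monotonicity statement---the ball size can only decrease as any single center coordinate is slid toward $0$---and is arguably more illuminating structurally, since it isolates exactly why the corner $\vtr{0}$ is extremal (each admissible interval for $d_i$ becomes maximally lopsided, minimizing its overlap with symmetric intervals $[-u,u]$). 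The paper's injection is slicker; your decomposition buys a cleaner picture of the underlying geometry and would adapt more readily to related extremal questions.
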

\begin{proof}
Let $\vtr{y}' \in B_1(\vtr{0}, q, t)$. Construct the vector $\vtr{y}$, whose $i$th coordinate is:
\[
y_i = \begin{cases}
    x_i - y'_i, & \text{if }x_i \geq y'_i,\\
    y'_i, & \text{otherwise}
\end{cases}
\]
Then clearly $\vtr{y} \in \mathbb{Z}_q^n$. Now $\norm{\vtr{x} - \vtr{y}}_1 = \sum_{i = 1}^n |x_i - y_i|$. Consider the $i$th term in the sum. If $x_i \geq y'_i$, then
\[
|x_i - y_i| = |x_i - (x_i - y'_i)| = |y'_i|,
\]
otherwise if $x_i < y'_i$, then
\[
|x_i -y_i| = |x_i - y'_i| = y'_i - x_i \leq y'_i = |y'_i|,
\]
where the inequality follows from the fact that $x_i \geq 0$. Thus, in both cases $|x_i - y_i| \leq |y'_i|$. Therefore, $\norm{\vtr{x} - \vtr{y}}_1 \leq \norm{\vtr{y}'}_1 \leq t$. Thus, $\vtr{y} \in B_1(\vtr{x}, q, t)$. 

Next, we show that this mapping is injective. Consider two different vectors $\vtr{y}', \vtr{y}'' \in B_1(\vtr{0}, q, t)$. Assume one of the coordinates they differ in is the $i$th coordinate. Assume to the contrary that the map defined above yields a vector with the same $i$th coordinate $y_i$ for both. 
If $x_i \geq y'_i$ and $x_i \geq y''_i$, or if $x_i < y'_i$ and $x_i < y''_i$, then we get $y'_i = y''_i$ through the map above, which is a contradiction. Thus, either $x_i \geq y'_i$ and $x_i < y''_i$, or $x_i < y'_i$ and $x_i \geq y''_i$. Assume $x_i \geq y'_i$ and $x_i < y''_i$. Then, we get $y_i = x_i - y'_i = y''_i \Rightarrow x_i = y'_i + y''_i$. Since $y'_i \geq 0$, this means that $x_i \geq y''_i$. But this contradicts the fact that $x_i < y''_i$. The case when $x_i < y'_i$ and $x_i \geq y''_i$ is analogous. 
\end{proof}

\begin{proposition}
\label{prop:ball-cardinality}
Let $t \geq 0$ be an integer. Then
 \[
\binom{n + t}{t} - \binom{n -1 + t - q}{t-q} \leq |B_1(\vtr{0}, q, t)| \leq \binom{n + t}{t}.
 \]
\end{proposition}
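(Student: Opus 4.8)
The plan is to reduce the proposition to a lattice-point count and treat the two inequalities separately. Since $\vtr{0}$ is the all-zero vector and every $\vtr{y} \in \mathbb{Z}_q^n$ has nonnegative entries, we have $\norm{\vtr{0} - \vtr{y}}_1 = \sum_{i=1}^n y_i$. Hence $B_1(\vtr{0}, q, t)$ is exactly the set of integer vectors $\vtr{y}$ with $0 \le y_i \le q - 1$ for every $i$ and $\sum_{i=1}^n y_i \le t$; that is, $|B_1(\vtr{0}, q, t)|$ counts the bounded compositions of the integers $0, 1, \ldots, t$ into $n$ parts, each part at most $q - 1$. Both bounds then follow by comparing this constrained count to the unconstrained one.

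\textbf{Upper bound.} First I would drop the per-coordinate cap $y_i \le q - 1$, which can only enlarge the counted set, so $|B_1(\vtr{0}, q, t)| \le |U|$, where $U = \{\vtr{y} \in \mathbb{Z}_{\ge 0}^n : \sum_i y_i \le t\}$. To count $U$ I would introduce a slack coordinate $y_{n+1} \ge 0$ and rewrite $\sum_{i=1}^n y_i \le t$ as $\sum_{i=1}^{n+1} y_i = t$; a standard stars-and-bars argument then gives $|U| = \binom{(n+1) + t - 1}{t} = \binom{n + t}{t}$, which is the claimed upper bound.

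\textbf{Lower bound.} Here I would write $|B_1(\vtr{0}, q, t)| = |U| - |U \setminus B_1(\vtr{0}, q, t)|$ and bound the overflow set $U \setminus B_1(\vtr{0}, q, t)$ — those $\vtr{y} \in U$ with $y_i \ge q$ for at least one coordinate $i$ — from above by $\binom{n - 1 + t - q}{t - q}$. The natural attempt is an injection: given an overflow vector, select its first overflowing coordinate $i^\ast$ (so $y_{i^\ast} \ge q$), subtract $q$ from it, and record the remaining $n - 1$ coordinates, whose sum is now at most $t - q$; the number of length-$(n-1)$ nonnegative vectors with sum at most $t - q$ is exactly $\binom{(n-1) + (t-q)}{t-q} = \binom{n - 1 + t - q}{t - q}$ by the same stars-and-bars count (and this term vanishes when $t < q$, matching the fact that the cap is then vacuous). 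Combining with the upper-bound computation yields the stated inequality.

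The hard part will be establishing that this correction step is a genuine injection — or, failing that, that $\binom{n-1+t-q}{t-q}$ really dominates the overflow count — since naively forgetting the value at $i^\ast$ and the identity of $i^\ast$ loses information and could collapse distinct overflow vectors onto the same image. I would therefore try to make the encoding lossless, folding the excess $y_{i^\ast} - q$ and the index $i^\ast$ back into the recorded coordinates, and if a clean bijective argument resists I would fall back on inclusion--exclusion over the events $\{y_i \ge q\}$ (substituting $y_i = y_i' + q$ within each), where a first-order Bonferroni truncation yields a lower bound of the same total-minus-correction shape. Controlling this single correction term is the crux of the argument; the remaining manipulations are routine stars-and-bars computations.
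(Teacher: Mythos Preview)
Your upper-bound argument is correct and cleaner than the paper's, which reaches the same $\binom{n+t}{t}$ via a recurrence. For the lower bound, however, the injection you are hoping to build cannot exist, because the overflow set $U \setminus B_1(\vtr{0}, q, t)$ can genuinely exceed $\binom{n-1+t-q}{t-q}$ in size. Take $n=2$, $q=256$, $t=256$: the unconstrained count is $\binom{258}{2}=33153$, the only overflow vectors are $(256,0)$ and $(0,256)$, so $|B_1(\vtr{0},256,256)|=33151$; but the claimed lower bound is $33153-\binom{1}{0}=33152>33151$. Your suspicion that the naive encoding ``loses information'' is exactly right, and the obstruction is not repairable --- the Bonferroni route you mention as a fallback would produce the correction term $n\binom{n+t-q}{t-q}$ (one summand per event $\{y_i\ge q\}$), not the single $\binom{n-1+t-q}{t-q}$.

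The paper proceeds differently: it derives the recurrence $C(t,n)=C(t-1,n)+C(t,n-1)-C(t-q,n-1)$ and then asserts that, because the homogeneous recurrence $f(t,n)=f(t-1,n)+f(t,n-1)$ has solution $\binom{n+t}{t}$, one obtains the closed form $C(t,n)=\binom{n+t}{t}-C(t-q,n-1)$. That step is not justified --- the subtracted term is itself subject to the recursion and its effect accumulates over the unrolling rather than appearing once --- and the same numerical example refutes it: $C(0,1)=1$, so the paper's formula gives $C(256,2)=33152\ne 33151$. So your approach and the paper's diverge in method but share the same underlying gap; the stated lower bound is simply too strong.
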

\begin{proof}
To simplify notation let $C(t, n)$ denote $|B_1(\vtr{0}, q, t)|$. Then,
\[
C(t, n) = \sum_{i = 0}^{q-1} C(t - i, n - 1).
\]
That is, we fix one element of the vector to $i$, and count all vectors of $(n-1)$ elements whose sum is $t-i$. The count is then complete by summing over all possible values of the fixed element in the original vector. Nore further that 
\begin{align*}
    C(t-1, n) &= \sum_{i = 0}^{q-1} C(t -1 - i, n - 1)\\
    &= C(t-1, n - 1) + C(t-2, n - 1) + \cdots  + C(t-q+1, n - 1) + C(t-q, n - 1) \\
    &= C(t, n - 1) + C(t-1, n - 1) + C(t-2, n - 1) + \cdots \\
    &+ C(t-q+1, n - 1)  + C(t-q, n - 1) - C(t, n - 1)\\
    &= C(t, n) + C(t-q, n - 1) - C(t, n - 1) \\
\Rightarrow C(t, n) &= C(t-1, n) + C(t, n-1) - C(t-q, n-1).
\end{align*}
The recurrence relation $C(t-1, n) + C(t, n-1)$ has the solution $\binom{n + t}{t}$ (see for example~\cite[\S C]{asghar2014algebraic}). Therefore, 
\[
C(t, n) = \binom{n + t}{t} - C(t-q, n-1) \leq \binom{n + t}{t},
\]
since $C(t-q, n-1) \geq 0$.
From the above, we see that
\begin{align*}
       C(t-q, n-1) &=  \binom{n -1 + t - q}{t-q} - C(t-2q, n - 2) \\
       &\leq \binom{n -1 + t - q}{t-q}.
\end{align*}
Therefore
\[
C(t, n) \geq \binom{n + t}{t} - \binom{n -1 + t - q}{t-q}
\]
\end{proof}

\begin{proposition}
\label{prop:ball-contains}
Let $t \geq 0$ be an integer. Let $q \geq 2$ be even. Let $\vtr{y}$ be the $n$-element vector $(q/2, \ldots, q/2)$. Then 
\[
|B_1(\vtr{y}, q, t)| \leq \binom{n + t + 1}{t + 1}
\]
\end{proposition}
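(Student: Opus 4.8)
The plan is to reduce the count to a one-sided (origin-centred) ball of slightly larger radius and then invoke the upper bound already established in Proposition~\ref{prop:ball-cardinality}. First I would pass to deviation coordinates: writing $\vtr{z} \in B_1(\vtr{y}, q, t)$ and setting $d_i = z_i - q/2$, the condition $\norm{\vtr{z} - \vtr{y}}_1 \leq t$ becomes $\sum_{i=1}^n |d_i| \leq t$ with each $d_i$ ranging over the integer box $\{-q/2, \ldots, q/2 - 1\}$. Thus $|B_1(\vtr{y}, q, t)|$ equals the number of integer vectors $\vtr{d}$ in this box of $\ell_1$-weight at most $t$. The structural observation I would use is a per-coordinate multiplicity count: the magnitude $|d_i| = 0$ is attained by a single value, each magnitude $1 \leq a \leq q/2 - 1$ is attained by exactly the two values $\pm a$, and the extreme magnitude $q/2$ is attained only by $-q/2$.

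Encoding this in a formal variable $u$ tracking $\sum_i |d_i|$, each coordinate contributes the factor $f(u) = 1 + 2(u + \cdots + u^{q/2 - 1}) + u^{q/2}$, so $|B_1(\vtr{y}, q, t)|$ is the sum of the coefficients of $u^0, \ldots, u^t$ in $f(u)^n$. The next, and central, step is the factorisation $f(u) = (1+u)\,\frac{1 - u^{q/2}}{1-u}$, which cleanly isolates the sign-doubling as the single factor $(1+u)$ and the alphabet cap as the factor $(1 - u^{q/2})$. The aim is to absorb $(1+u)$ into one extra unit of $\ell_1$-budget together with one extra coordinate, so that the centred ball of radius $t$ is dominated by the origin ball $B_1(\vtr{0}, q, t+1)$. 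Applying the upper bound of Proposition~\ref{prop:ball-cardinality} with radius $t+1$ then gives exactly $\binom{n + (t+1)}{t+1} = \binom{n+t+1}{t+1}$. Concretely this would be realised by a weight-respecting injection sending $\vtr{d}$ to a single nonnegative vector, with the sign pattern of the nonzero coordinates recorded by the additional budget and coordinate.

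The step I expect to be the main obstacle is precisely this sign-encoding. Each nonzero coordinate independently carries a sign, so a priori there are up to $2^{\#\{i : d_i \neq 0\}}$ sign patterns to account for, whereas only a single extra unit of radius (the ``$t+1$'') and a single extra coordinate (the ``$n+1$'') are available to absorb them. Making the domination go through therefore hinges on showing that the alphabet cap $|d_i| \leq q/2$ and the precise shape of $(1+u)\frac{1-u^{q/2}}{1-u}$ interact so that these sign choices do not inflate the coefficientwise count beyond that of the radius-$(t+1)$ origin ball. The delicate point is that one cannot simply bound each sign factor by $2$, since that overshoots badly; the comparison must be made coefficient-by-coefficient, exploiting the cancellation supplied by the $(1 - u^{q/2})^n$ factor. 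Establishing this coefficientwise inequality is the crux, and I would expect the bulk of the technical work — and the place where the evenness of $q$ and the choice of centre $(q/2,\ldots,q/2)$ genuinely matter — to lie there.
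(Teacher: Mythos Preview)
Your generating-function setup is clean and the factorisation $f(u)=(1+u)\,(1-u^{q/2})/(1-u)$ is correct, but the step you flag as the crux does not merely require more work: it fails outright, because the stated inequality is false for small parameters. Take $q=4$, $n=2$, $t=2$, so $\vtr{y}=(2,2)\in\mathbb{Z}_4^2$. A direct count gives $|B_1(\vtr{y},4,2)|=11$ (the only points of $\{0,1,2,3\}^2$ at $\ell_1$-distance more than $2$ from $(2,2)$ are $(0,0),(0,1),(0,3),(1,0),(3,0)$), whereas $\binom{n+t+1}{t+1}=\binom{5}{3}=10$. Hence no injection or coefficientwise domination into the radius-$(t{+}1)$ origin ball can exist in general, and the sign-encoding obstacle you anticipated is a genuine obstruction rather than a technicality to be cleaned up.

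For comparison, the paper's argument proceeds differently: it enlarges the alphabet to $\mathbb{Z}_{q+1}$ so that the deviation range becomes symmetric, derives a recurrence in $n$ by peeling off one coordinate, and then aims for $C'(t,n)\le\binom{n+t}{t}+C'(t,n-1)$ before summing via the hockey-stick identity. The passage where $C'(t-1,n)+C'(t,n-1)$ is replaced by $\binom{n+t}{t}$ is the weak point---that identity describes the sequence \emph{defined} by the Pascal recurrence, not $C'$ itself---and indeed the intermediate inequality already fails for the same parameters (in $\mathbb{Z}_5^2$ one has $C'(2,2)=13$, while $\binom{4}{2}+C'(2,1)=6+5=11$). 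So the difficulty you isolated is real on both routes; a valid version would need either an additional hypothesis on $t$ (the paper only invokes the bound for $t\approx qn/4$) or a weaker right-hand side.
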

\begin{proof}
Again, to simplify notation let $C(t, n)$ denote $|B_1(\vtr{y}, q, t)|$. Let $C'(t, n)$ denote $|B_1(\vtr{y}, q + 1, t)|$. Then it is easy to see that $C(t, n) \leq C'(t, n)$. We will work with $C'(t, n)$. First note that
\[
C'(t, n) = 1 + 2 \sum_{i = 0}^{q/2 - 1} C'(t - i, n-1).
\]
This is because fixing one element of a vector $\vtr{x} \in \mathbb{Z}_{q+1}^n$ within the $\ell_1$-ball of $\vtr{y}$ to $i$ reduces the problem to counting vectors of $n-1$ elements whose sum is $t - i$.
Without loss of generality, let us assume that this is the first element of $\vtr{x}$. Since the corresponding element in $\vtr{y}$ is fixed at $q/2$, $|i - q/2|$ ranges from $1$ to $q/2 - 1$. This counts all vectors whose first element is fixed at $i \in \{0, 1, \ldots, q/2-1\}$. The first element of $\vtr{x}$ can also take on the values $q/2 + 1$ to $q$. For each such value, there is a corresponding value between $0$ and $q/2 - 1$. Thus, we need to count such vectors twice, as the sum $C'(t - i, n-1)$ as $i$ ranges from $0$ to $q/2 - 1$. The remaining case is when $\vtr{x}$ is identical to $\vtr{y}$. which is a single vector. Now,
\begin{align*}
     C'(t-1, n)  &= 1 + 2\sum_{i = 0}^{q/2-1} C'(t -1 - i, n - 1)\\
    &= 1 + 2 \left( C'(t-1, n - 1) + C'(t-2, n - 1) + \cdots\right. \\
    &+ \left. C'(t-q/2+1, n - 1) + C'(t-q/2, n - 1) \right) \\
    &= 1 + 2 \left( C'(t, n - 1) + C'(t-1, n - 1) + C'(t-2, n - 1) + \cdots \right. \\
    &+ \left. C'(t-q/2+1, n - 1) 
    + C'(t-q/2, n - 1) - C'(t, n - 1) \right)\\
      &= \left( 1 + 2 \sum_{i = 0}^{q/2 - 1} C'(t - i, n-1) \right)  + 2C'(t-q/2, n - 1) - 2C'(t, n - 1)\\
    &= C'(t, n) + 2C'(t-q/2, n - 1) - 2C'(t, n - 1) \\
\Rightarrow C'(t, n) 
&= C'(t-1, n) + 2C'(t, n-1) - 2C'(t-q/2, n-1)\\
    &= \binom{n + t}{t} + C'(t, n-1) - 2C'(t-q/2, n-1)\\
    &\leq \binom{n + t}{t} + C'(t, n-1),
\end{align*}
where we have used the same identity for the recurrence relation as in Proposition~\ref{prop:ball-cardinality}, and the inequality follows since $C'(t-q/2, n-1) \geq 0$. The recurrence relation 
\[
C'(t, n) \leq \binom{n + t}{t} + C'(t, n-1),
\]
implies that
\[
C(t, n) \leq C'(t, n) \leq \binom{n + t}{t} + \binom{n - 1 + t}{t} + \cdots + \binom{t}{t} = \binom{n + t + 1}{t + 1},
\]
where the last equality is the so-called \emph{hockey-stick identity}. See for example~\cite[\S 5]{brualdi2012combinatorics}. 
\end{proof}

\subsection{\texorpdfstring{Bound on $m$ for Large $t$}{Bound on m for Large t}}
Our goal is to find a $\vtr{y} \in \mathbb{Z}_q^n$ such that $\norm{\vtr{x} - \vtr{y}} \leq t$, without knowing $\vtr{x}$. We assume $t = \gamma n$ for some $\gamma \ge 1$. Note that the maximum possible distance can be up to $(q-1)n \approx qn$. Thus, $\gamma$ gives the relative distance to the maximum possible distance. The strategy for finding such a $\vtr{y}$ is depicted in Figure~\ref{fig:midpoint}. Assume $n = 1$. Further assume that $q$ is even. In our experiments $q = 256$, and so this is not a limitation. Suppose $t = \gamma n = \gamma = q/2$. Then if we choose  $y = q/2$, it includes all points in $\mathbb{Z}_q$, and hence $x$ as well. Therefore, with probability 1, we find a $y$, namely $y = q/2$, which is a distance $\leq t = q/2$ from $x$. Of course, if $t$ is smaller than $q/2$, the probability decreases. We therefore find the probability over uniform random choices of $\vtr{x}$ that the vector $\vtr{y} = (q/2, \ldots, q/2)$ is within distance $t = \gamma n$ of $\vtr{x}$. 

\begin{figure}
\centering
\begin{tikzpicture}
\draw[thick,->] (0,0) -- (5.5,0);
\tkzDefPoint(0,0){origin};
\tkzLabelPoint[below=0.1cm](origin){$0$};
\node at (origin)[circle,fill,inner sep=1.5pt]{};

\tkzDefPoint(5,0){q};
\tkzLabelPoint[below=0.1cm](q){$q$};
\node at (q)[circle,draw, fill=white, inner sep=1.5pt]{};

\tkzDefPoint(2.5,0){y};
\tkzLabelPoint[below=0.1cm](y){$y=q/2$};
\node at (y)[circle, fill, inner sep=1.5pt]{};

\tkzDefPoint(1,0){t1};
\node at (t1)[]{$($};

\tkzDefPoint(4,0){t2};
\node at (t2)[]{$)$};

\tkzDefPoint(2.5,0.5){t};
\node at (t)[]{$2t$};

\draw[->] (2.8,0.5) -- (4,0.5);
\draw[->] (2.2, 0.5) -- (1,0.5);

\end{tikzpicture}
\caption{Choosing $y$ as the mid-point.}
\label{fig:midpoint}
\end{figure}

\begin{proposition}
\label{prop:exp-distance}
Let $\vtr{x}$ be a vector sampled uniformly at random from $\mathbb{Z}_q^n$. Let $\vtr{y} \in \mathbb{Z}_q^n$ be the vector each coordinate of which is $q/2$. Let $D$ denote the random variable denoting the $\ell_1$ distance between $\vtr{x}$ and $\vtr{y}$. Then $\mathbb{E}(D) = qn/4$.
\end{proposition}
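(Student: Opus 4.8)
The plan is to exploit the separability of the $\ell_1$ distance across coordinates together with linearity of expectation, reducing the problem to a single one-dimensional computation. By definition, $D = \norm{\vtr{x} - \vtr{y}}_1 = \sum_{i=1}^n |x_i - q/2|$, where each $x_i$ is drawn independently and uniformly from $\mathbb{Z}_q = \{0, 1, \ldots, q-1\}$ (and where I use, as the surrounding text assumes, that $q$ is even so that $q/2 \in \mathbb{Z}_q$). First I would apply linearity of expectation to write $\mathbb{E}(D) = \sum_{i=1}^n \mathbb{E}(|x_i - q/2|)$, and then observe that because the coordinates are identically distributed this collapses to $\mathbb{E}(D) = n \cdot \mathbb{E}(|x_1 - q/2|)$.

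The remaining task is to evaluate the single-coordinate expectation $\mathbb{E}(|x_1 - q/2|) = \frac{1}{q}\sum_{j=0}^{q-1} |j - q/2|$. I would compute this sum directly by splitting the index range at the midpoint $k := q/2$. For $j$ ranging from $0$ to $k$ the summand is $k - j$, contributing $k + (k-1) + \cdots + 1 + 0 = \frac{k(k+1)}{2}$; for $j$ ranging from $k+1$ to $2k-1$ the summand is $j - k$, contributing $1 + 2 + \cdots + (k-1) = \frac{(k-1)k}{2}$. Adding these two arithmetic series gives $\frac{k(k+1)}{2} + \frac{(k-1)k}{2} = \frac{k \cdot 2k}{2} = k^2 = (q/2)^2$.

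Substituting back yields $\mathbb{E}(|x_1 - q/2|) = \frac{1}{q}\cdot\frac{q^2}{4} = \frac{q}{4}$, and therefore $\mathbb{E}(D) = n\cdot\frac{q}{4} = \frac{qn}{4}$, as claimed. There is no genuine obstacle here: the only point requiring minor care is the arithmetic bookkeeping in the midpoint split of the sum (making sure the term $j = k$ contributing $0$ is counted exactly once and that the two half-ranges tile $\{0, \ldots, q-1\}$ without overlap). The result is clean precisely because the symmetry of the uniform distribution about the midpoint $q/2$ makes the two halves of the absolute-value sum nearly mirror each other.
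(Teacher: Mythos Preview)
Your proposal is correct and follows essentially the same approach as the paper: both use linearity of expectation to reduce to a single coordinate and then directly evaluate $\mathbb{E}(|x_1 - q/2|) = q/4$. The only cosmetic difference is that the paper groups terms by the value of $|x_i - q/2|$ (noting each nonzero, non-maximal value arises from exactly two choices of $x_i$), whereas you split the sum over $x_i$ at the midpoint; the arithmetic is the same.
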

\begin{proof}
For $i \in [n]$, let $D_i$ be the random variable denoting the distance $|x_i - y_i|$. Then through linearity of expectation, $\mathbb{E}(D) = \sum_{i=1}^n \mathbb{E}(D_i)$. Since $0 \leq x_i < q$, we have $0 \leq D_i \leq q/2$. $D_i$ is 0 when $x_i = q/2$, and $D_i = q/2$, when $x_i = 0$. Any other value of $D_i$ has two possible choices of $x_i$. For instance $D_i = 1$, if $x_i = q/2 + 1$ or $q/2 - 1$. Therefore, since $x_i$ is uniformly distributed over $\mathbb{Z}_q$:
\begin{align*}
    \mathbb{E}(D_i) &= 0\cdot \frac{1}{q} + \frac{q}{2}\cdot \frac{1}{q} + 1 \cdot \frac{2}{q} + 2  \cdot \frac{2}{q} + \cdots +  \left( \frac{q}{2} - 1 \right)\cdot \frac{2}{q}\\
    &= \frac{1}{2} + \frac{2}{q} \left(1 + 2 + \cdots + \frac{q}{2} - 1\right)\\
    &= \frac{1}{2} + \frac{2}{q}  \frac{q}{2} \left(\frac{q}{2} - 1\right) \frac{1}{2} = \frac{q}{4}
\end{align*}
Therefore, $\mathbb{E}(D) = qn/4$. 
\end{proof}

\begin{proposition}
\label{prop:within-t-prob}
Let $D$ be the random variable as in Proposition~\ref{prop:exp-distance}. Let $t = \gamma n$ for some real $\gamma \ge 1$. Then if
\[
\gamma \geq \left(\frac{1}{4} + \frac{1}{2\sqrt{2n}}\right) q
\]
then $\Pr(D \leq t) \geq 1 - 1/e$.
\end{proposition}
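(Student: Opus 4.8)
The plan is to recognize $D$ as a sum of independent, bounded random variables and apply Hoeffding's inequality to its upper tail. Writing $D = \sum_{i=1}^{n} D_i$ with $D_i = |x_i - q/2|$, the fact that $\vtr{x}$ is uniform on $\mathbb{Z}_q^n$ makes its coordinates i.i.d.\ uniform on $\mathbb{Z}_q$, so the $D_i$ are themselves i.i.d. Since $q$ is even, $q/2$ is an integer, and as $x_i$ ranges over $\{0, 1, \ldots, q-1\}$ the quantity $D_i = |x_i - q/2|$ takes values in the interval $[0, q/2]$, with the maximum $q/2$ attained at $x_i = 0$. Hence each $D_i$ is confined to an interval of width $q/2$, which is exactly the input that a Hoeffding-type tail bound requires.

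With $\mathbb{E}(D) = qn/4$ supplied by Proposition~\ref{prop:exp-distance}, I would then invoke Hoeffding's inequality: for independent variables $D_i \in [a_i, b_i]$ one has $\Pr\!\left(D - \mathbb{E}(D) \geq s\right) \leq \exp\!\left(-2s^2 / \sum_{i=1}^{n}(b_i - a_i)^2\right)$. Here $\sum_{i=1}^{n}(b_i - a_i)^2 = n(q/2)^2 = nq^2/4$. Taking $s = t - \mathbb{E}(D) = \gamma n - qn/4$, the hypothesis $\gamma \geq \left(\tfrac14 + \tfrac{1}{2\sqrt{2n}}\right)q$ yields $s \geq \tfrac{qn}{2\sqrt{2n}} = \tfrac{q\sqrt{n}}{2\sqrt{2}}$, and therefore $s^2 \geq q^2 n / 8$.

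The concluding step is a short evaluation of the exponent. We have $\tfrac{2 s^2}{n q^2 / 4} = \tfrac{8 s^2}{n q^2} \geq \tfrac{8 \cdot q^2 n / 8}{n q^2} = 1$, so that $\Pr(D > t) \leq \Pr\!\left(D - \mathbb{E}(D) \geq s\right) \leq e^{-1}$, and hence $\Pr(D \leq t) \geq 1 - 1/e$, which is the claim.

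I do not anticipate a deep obstacle; the crux is selecting the concentration inequality whose tail matches the stated parameters. A one-sided Chebyshev (Cantelli) bound, combined with the Popoviciu variance estimate $\mathrm{Var}(D_i) \leq q^2/16$, would only deliver a constant such as $2/3$ for $\Pr(D \leq t)$; while this already exceeds $1 - 1/e$, it does not reproduce the clean factor $1/e$ of the statement. It is precisely Hoeffding's sub-Gaussian tail that ties the $\tfrac{1}{2\sqrt{2n}}$ slack appearing in $\gamma$ to the target $1 - 1/e$. The only points demanding care are the verification that each $D_i$ lies in an interval of width exactly $q/2$ (which relies on $q$ being even so that $q/2 \in \mathbb{Z}_q$) and the elementary algebra confirming that the exponent is at least $1$.
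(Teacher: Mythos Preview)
Your proposal is correct and follows essentially the same route as the paper: write $D=\sum_i D_i$ with independent $D_i\in[0,q/2]$, use $\mathbb{E}(D)=qn/4$, and apply Hoeffding's inequality so that the hypothesis on $\gamma$ forces the exponent to be at least $1$, yielding $\Pr(D\le t)\ge 1-1/e$. The paper organizes the algebra in the reverse direction (setting the exponent to $1$ and then solving for $\gamma$), but the argument is the same.
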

\begin{proof}
\begin{align*}
    \Pr(D \leq t) &= \Pr(D - qn/4 \leq t - qn/4)\\
                &= 1 - \Pr(D- qn/4 > t - qn/4)\\
                &\geq 1 - \Pr(D -qn/4 \geq t - qn/4)\\
                &\geq 1 - \exp\left( - \frac{2(t-qn/4)^2}{\sum_{i=1}^n (q/2 - 0)^2}\right),
\end{align*}
where the last inequality follows from Hoeffding's inequality for $t > qn/4$, and the expected value and range of $D$ from Proposition~\ref{prop:exp-distance}. By plugging in $t = \gamma n > qn /4$, we see that $\gamma > q/4$. Assuming $\gamma$ to be such, we have
\begin{align*}
    \Pr(D \leq t) &\geq 1 - \exp\left( - \frac{2(\gamma n -qn/4)^2}{\sum_{i=1}^n (q/2)^2}\right)\\
    &= 1 - \exp\left( - \frac{2(4\gamma -q)^2n^2}{q^2 n}\frac{4}{16}\right)\\
    &= 1 - \exp\left( - \left( \frac{4\gamma}{q} - 1\right)^2 \frac{n}{2}\right).
\end{align*}
Now if 
\begin{equation}
\label{eq:int-alpha}
     \left( \frac{4\gamma}{q} - 1\right)^2 \geq \frac{2}{n},
\end{equation}
Then 
\begin{align*}
    \Pr(D \leq t) \geq 1 - \exp(-1) = 1 - 1/e,
\end{align*}
and we are done. The solution to Eq.~\eqref{eq:int-alpha} gives a bound on $\gamma$ as stated in the statement of the proposition.
\end{proof}

Thus, for large enough $t$, i.e., $t \approx qn/4$, the vector $\vtr{y} = (q/2, \ldots, q/2)$ is within $\ell_1$-distance $t$ from a uniformly random vector in $\mathbb{Z}_q^n$ with high probability. Assuming this to be the case, we then need to guess the vector $\vtr{x}$ whose hash digest has been provided to us. Our strategy is to simply sample a vector uniformly at random from $B_1(\vtr{y}, q, t)$. By assumption, $\vtr{x} \in B_1(\vtr{y}, q, t)$, and therefore the probability of obtaining $\vtr{x}$ will be $1/|B_1(\vtr{y}, q, t)|$. Sampling a vector uniformly at random from $B_1(\vtr{y}, q, t)$ is not straightforward. However, there are techniques to sample a vector from this set with an approximate uniform distribution. For instance, we can use the discrete hit-and-run sampler~\cite{baumert2009dhr}. This produces a distribution arbitrarily close to uniform~\cite[\S 11.2]{mitzenmacher2017probability}. This follows from the fact that $B_1(\vtr{y}, q, t)$ is a diamond centered around $\vtr{y}$. The resulting diamond can be enclosed within a cube which itself is within a cube and therefore the analysis in \cite[\S 4.2]{baumert2009dhr} means that the algorithm will produce a distribution arbitrarily close to the uniform distribution in polynomial time.  

Now consider an algorithm $\mathcal{R}$ which when given a random $h \in \mathcal{H}$ and an $h(\vtr{x})$ where $\vtr{x}$ is sampled uniformly at random from $\mathbb{Z}_q^n$, does as follows. It computes $\mathsf{eval}_h(h(\vtr{x}), h(\vtr{y}))$ with $\vtr{y} = (q/2, \ldots, q/2)$. If it outputs 1, it samples a vector uniformly at random from $B_1(\vtr{y}, q, t)$. It outputs this vector as its guess for $\vtr{x}$ and stops. Assume that $\mathsf{eval}_h$ has correctness error $\delta < \frac{1}{2}\cdot \frac{e-2}{e - 1}$. We get
\begin{align*}
    \Pr_{h, \vtr{x}} ( \mathcal{R}(h, h(\vtr{x}) = \vtr{x})) &\geq \Pr_{h, \vtr{x}} ( \mathcal{R}(h, h(\vtr{x}) = \vtr{x}) \mid \norm{\vtr{x} - \vtr{y}}_1 \leq t )  \Pr_{\vtr{x}}(\norm{\vtr{x} - \vtr{y}}_1 \leq t)\\
    &> \left(1-  \frac{1}{2}\cdot \frac{e-2}{e - 1}\right) \left(1 - \frac{1}{e}\right)  \frac{1}{|B_1(\vtr{y}, q, t)|}\\
    &\geq \frac{2(e-1) - (e - 2)}{2(e-1)} \frac{(e-1)}{e} \frac{1}{\binom{n+t + 1}{t + 1}}\\
    &= \frac{1}{2}\frac{1}{\binom{n+t+1}{t+1}},
\end{align*}
where we have used Proposition~\ref{prop:ball-contains}. Now using Eq~\eqref{eq:ent-upper-bound}, we get
\begin{align*}
 H_\infty(X \mid h, h(X)) &\leq -\log_2 \left(\Pr_{h, \vtr{x} } ( \mathcal{R}(h, h(\vtr{x}) = \vtr{x}) \right) \\
 &< 1 + \log_2 \binom{n + t + 1}{t + 1}. 
\end{align*}
Combining the above with Eq.~\eqref{eq:min-ent-lower}, we get
\begin{align}
    n\log_2 q - \log_2 \binom{n + t + 1}{t + 1} - 1 < m \nonumber\\
\Rightarrow m \geq n\log_2 q - \log_2 \binom{n + t + 1}{t + 1} \label{eq:m-l1},
\end{align}
where the last inequality follows from the fact that $m$ is an integer. This implies that substantial compression is possible if $t$ is large, i.e., $t > 0.25qn$. The compression rates for various image sizes $n$ and $t$ as computed through Proposition~\ref{prop:within-t-prob} are shown in Table~\ref{tab:compression-rates}. Even though high compression is possible, this value of $t$ is too large for our application where it may produce a high false positive rate.

\begin{table}[]
    \centering
\ifisfullpaper
    \resizebox{\columnwidth}{!}{
\fi
    \begin{tabular}{c|c|c|c|c|c|c|c}
         Nature & Regime & Color & $t$ & $n$ & Baseline & $m$ & Compression\\
         \hline\hline
         Bound & Large $t$ & RGB & 154918 & $28 \times 28 \times 3$ & 18816 & 1194 & 6.346 \\
         &  & & 796466 & $64 \times 64 \times 3$ & 98304 & 6495 & 6.607 \\ 
         &  & & 3165795 & $128 \times 128 \times 3$ & 393216 & 26403 & 6.715\\
         &  & & 9668908 & $224 \times 224 \times 3$ & 1204224 & 81428 & 6.762\\
         \hline
         Bound & Large $t$ & Gray- & 52711 & $28 \times 28$ & 6272 & 378 & 6.027 \\
         &  & scale & 267937 & $64 \times 64$ & 32768 & 2115 & 6.454 \\
         &  & & 1060162 & $128 \times 128$ & 131072 & 8697 & 6.635 \\
         &  & & 3231539 & $224 \times 224$ & 401408 & 26957 & 6.716\\
         \hline\hline
        Bound & Small $t$ & RGB & 235 & $28 \times 28 \times 3$ & 18816 & 882 & 4.688 \\
        &  & & 1228 & $64 \times 64 \times 3$ & 98304 & 5890 & 5.992 \\
        &  & & 4915 & $128 \times 128 \times 3$ & 393216 & 23741 & 6.038 \\
        &  & & 15052 & $224 \times 224 \times 3$ & 1204224 & 72754 & 6.042\\
        \hline
         Bound & Small $t$ & Gray- & 78 & $28 \times 28$ & 6272 & 883 & 14.078 \\
         &  & scale & 409 & $64 \times 64$ & 32768 & 1828 & 5.579 \\
         &  & & 1638  & $128 \times 128$ & 131072 & 7881 & 6.013 \\
         &  & & 5017 & $224 \times 224$ & 401408 & 24235 & 6.037 \\
         \hline\hline
         Our & Small $t$ & RGB & 235 & $28 \times 28 \times 3$ & 18816 & 2631 & 13.983 \\
         Scheme &  & & 1228 & $64 \times 64 \times 3$  & 98304 & 16682 & 16.970\\
         &  & & 4915 & $128 \times 128 \times 3$ & 393216 & 76600 & 19.480\\
         &  & & 15052 & $224 \times 224 \times 3$ & 1204224 & 258889 & 21.498\\
         \hline 
         Our & Small $t$ & Gray- & 78 & $28 \times 28$ & 6272 & 749 & 11.942 \\
         Scheme & & scale & 409 & $64 \times 64$ & 32768 & 4908 & 14.978 \\
         & & & 1638 & $128 \times 128$ & 131072 & 22932 & 17.496 \\
         & & & 5017 & $224 \times 224$ & 401408 & 78338 & 19.516\\
         \hline
    \end{tabular}
\ifisfullpaper
    }
\fi
        \caption{Lower bounds on compression achievable through a PPH with small $t \approx 0.1 n$ and large $t \approx 0.25qn$, and the actual compression through our scheme. The column labeled ``Compression'' is the percentage compression with respect to the baseline $n \log_2 q$. Here $q = 256$.}
    \label{tab:compression-rates}
\end{table}


\subsection{\texorpdfstring{Bound on $m$ for Small $t$}{Bound on m for Small t}}
When $t$ is much smaller, say $t \approx 0.1 n$, there does not appear to be a better algorithm than random guess to find $\vtr{x}$. Namely let $\mathcal{R}$ be an algorithm which when given a random $h \in \mathcal{H}$ and $h(\vtr{x})$ for some unknown $\vtr{x}$, samples a $\vtr{y}$ uniformly at random from $\mathbb{Z}_q^n$ as its guess for $\vtr{x}$. Assume that $\mathsf{eval}_h$ has correctness error $\delta < \frac{1}{2}$. Then,
\[
\Pr_{h, \vtr{x}} ( \mathcal{R}(h, h(\vtr{x})) > \frac{1}{2}\frac{|B_1(\vtr{x}, q, t)|}{q^n} 
\]
Now again using Eq~\eqref{eq:ent-upper-bound}, we get
\begin{align}
 H_\infty(X \mid h, h(X)) &\leq -\log_2 \left(\Pr_{h, \vtr{x} } ( \mathcal{R}(h, h(\vtr{x}) = \vtr{x}) \right) \nonumber \\
 &< 1 - \log_2 |B_1(\vtr{x}, q, t)| + n \log_2 q. \nonumber 
\end{align}
Combining the above with Eq.~\eqref{eq:min-ent-lower}, we get
\begin{align}
    m & >n\log_2 q - n \log_2 q + \log_2 |B_1(\vtr{x}, q, t)| - 1  \nonumber\\
\Rightarrow m &\geq \log_2 |B_1(\vtr{x}, q, t)| \nonumber. 
\end{align}
Now from Propositions~\ref{prop:ball-cardinality} and ~\ref{prop:ball-is-contained}, we have
\[
    |B_1(\vtr{x}, q, t)| \geq \binom{n + t}{t} - \binom{n -1 + t - q}{t-q} 
\]
We next use the following result. 
\begin{proposition}
\label{prop:binom-subtract-bound}
If $q \geq 4$ and $q-1< t < 2.5n - 2.5$, then  
\[
\binom{n + t}{t} - \binom{n -1 + t - q}{t-q} \geq \left( \frac{n-1 + t}{t} \right)^{q-1}  \binom{n -1 + t - q}{t-q}
\]
\end{proposition}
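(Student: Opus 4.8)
The plan is to divide through by $B \coloneqq \binom{n-1+t-q}{t-q}$, which is a positive integer under the hypotheses (since $t > q-1$ forces $t \ge q$ and $n \ge 1$ forces $n-1+t-q \ge t-q \ge 0$). Writing $r = \frac{n-1+t}{t}$, the claimed inequality is then equivalent to $A/B \ge 1 + r^{q-1}$, where $A = \binom{n+t}{t}$. The first concrete step is to express this ratio as an explicit product. Using the elementary identity $\binom{m}{k} = \frac{m}{k}\binom{m-1}{k-1}$ applied $q$ times (each step legal because $t-i \ge 1$ for $0 \le i \le q-1$, i.e.\ $t \ge q$), one peels $\binom{n+t}{t}$ down to $\binom{n+t-q}{t-q}$, and then the identity $\binom{m}{k} = \frac{m}{m-k}\binom{m-1}{k}$ gives $\binom{n+t-q}{t-q} = \frac{n+t-q}{n}\binom{n-1+t-q}{t-q}$. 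Altogether this yields the clean form
\[
\frac{A}{B} = \frac{n+t-q}{n}\prod_{i=0}^{q-1}\frac{n+t-i}{t-i}.
\]

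Next I would bound the $q$ product factors. Each satisfies $\frac{n+t-i}{t-i} = 1 + \frac{n}{t-i}$, which is increasing in $i$, so every factor is at least $\frac{n+t}{t} > \frac{n-1+t}{t} = r$. The key bookkeeping idea is to \emph{reserve} the $i=0$ factor ($=\frac{n+t}{t}$) together with the prefactor $\frac{n+t-q}{n}$ to account for the ``$+1$'', and to bound the remaining $q-1$ factors (those with $1 \le i \le q-1$) below by $r^{q-1}$. This gives $\frac{A}{B} \ge \frac{(n+t-q)(n+t)}{nt}\, r^{q-1}$, so it suffices to prove $\left(\frac{(n+t-q)(n+t)}{nt} - 1\right) r^{q-1} \ge 1$. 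A short algebraic simplification shows the numerator $(n+t-q)(n+t) - nt = n^2 + (t-q)(n+t)$ is at least $n^2$ (using $t \ge q$), so the bracketed quantity is at least $n/t$, reducing the goal to $\frac{n}{t}\, r^{q-1} \ge 1$.

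The final step is a numeric estimate that is exactly where the two quantitative hypotheses enter. From $t < 2.5n - 2.5 = 2.5(n-1)$ I get $r = 1 + \frac{n-1}{t} > 1 + \frac{1}{2.5} = 1.4$ and $\frac{n}{t} > \frac{n}{2.5(n-1)} > \frac{1}{2.5} = 0.4$, while $q \ge 4$ gives $r^{q-1} \ge r^{3} > 1.4^{3} = 2.744$; hence $\frac{n}{t}\,r^{q-1} > 0.4 \cdot 2.744 > 1$, as required. I expect this last estimate to be the main obstacle: the specific constants $q \ge 4$ and the slope $2.5$ in $t < 2.5n - 2.5$ are calibrated precisely so that $0.4 \cdot 1.4^{3} > 1$, so the argument is tight and one must be careful both to squeeze the bracket down to $n/t$ cleanly and to keep all the intermediate bounds on $r$ and $n/t$ in the correct (strict) direction. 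The earlier steps — the product identity for $A/B$ and the factorwise comparison against $r$ — are routine by comparison.
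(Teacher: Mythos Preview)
Your proof is correct and follows essentially the same route as the paper: express the ratio of binomials as a telescoping product of $q$ factors each at least $r=\frac{n-1+t}{t}$, then finish with the numerical check $0.4\cdot 1.4^{3}>1$ using $q\ge 4$ and $t<2.5(n-1)$. The only cosmetic difference is that the paper first replaces $\binom{n+t}{t}$ by $\binom{n-1+t}{t}$ (so the product is exactly $\prod_{i=0}^{q-1}\frac{n-1+t-i}{t-i}\ge r^{q}$) and then invokes a separate lemma $a^{q}-1\ge a^{q-1}$ for $a>1.4$, $q\ge 4$, whereas you keep the extra prefactor $\frac{n+t-q}{n}$ and do the same numerical estimate inline.
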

\begin{proof}
First note that 
\[
\binom{n + t}{t} = \frac{(n+t)!}{t!n!} = \frac{n+t}{n}\frac{(n-1+t)!}{t!(n-1)!} \geq \frac{(n-1+t)!}{t!(n-1)!} = \binom{n-1+t}{t}
\]
Thus,
\begin{align}
    |B_1(\vtr{x}, q, t)| &\geq \binom{n-1+t}{t} - \binom{n -1 + t - q}{t-q}\nonumber\\
    &= \left( \frac{n-1+t}{t}\right) \left( \frac{n-1+t - 1}{t-1}\right) \cdots \nonumber \\
    &\left( \frac{n-1+t - q+1}{t-q + 1}\right) \binom{n -1 + t - q}{t-q} - \binom{n -1 + t - q}{t-q}\nonumber\\
    &\geq \left( \frac{n-1+t}{t}\right)^q \binom{n -1 + t - q}{t-q} - \binom{n -1 + t - q}{t-q}\label{eq:binom-inequality}\\
    &= \left(\left( \frac{n-1+t}{t}\right)^q - 1 \right) \binom{n -1 + t - q}{t-q} \nonumber
\end{align}
where we have used the fact that for $t > q-1$, and all $0 \leq i \leq q-1$ we have
\[
\left( \frac{n-1+t - i}{t-i}\right) \geq \left( \frac{n-1+t}{t}\right). 
\]
Note that $t > q-1$ is true in our case. 
Now, assume $q \geq 4$ and that $(n-1 + t)/t > 1.4$ which implies that $t < 2.5(n-1)$. Then we can apply  Proposition~\ref{prop:aq-result} from Appendix~\ref{app:useful} in the previous result, and get
\begin{align*}
    |B_1(\vtr{x}, q, t)| \geq \left( \frac{n-1 + t}{t} \right)^{q-1}  \binom{n -1 + t - q}{t-q}
\end{align*}
\end{proof}

Thus
\begin{align}
    m &\geq \log_2 |B_1(\vtr{x}, q, t)| \nonumber \\
    & \geq \log_2 \left( \left( \frac{n-1 + t}{t} \right)^{q-1}  \binom{n -1 + t - q}{t-q} \right) \nonumber \\
    &= (q - 1) \log_2 \left(1 + \frac{n-1}{t} \right) + \log_2 \binom{n -1 + t - q}{t-q}  \label{eq:m-l1-constant-t}
\end{align}
Table~\ref{tab:compression-rates} shows the lower bound of $m$ through Eq.~\ref{eq:m-l1-constant-t}. Even with a smaller $t$ considerable compression is possible in theory. 

\subsection{Feasibility of List Decoding}
\label{subsec:feasibility}
Some RPPH schemes for Hamming distance are based on error-correcting codes; in particular, syndrome decoding~\cite{holmgren2022nearly}. We are interested in knowing whether syndrome decoding is feasible for $\ell_1$-distance predicates. Consider the original image $\vtr{x}$, and a candidate image $\vtr{y}$ (assume binary images for now). The construction from~\cite{holmgren2022nearly} takes the syndromes of $\vtr{x}$ and $\vtr{y}$, and finds a list of errors of Hamming weight at most $t$. Due to the linearity of syndromes, if there is an error vector in the list which matches the difference of the syndromes, then $\vtr{y}$ is within Hamming distance $t$ of $\vtr{x}$. For syndrome list decoding to be efficient, the size of the list of errors should be polynomial in $n$, which itself is a polynomial in the security parameter $\lambda$. From Fact 2.9 in~\cite{holmgren2022nearly} syndrome list decoding is efficient if and only if list decoding is efficient. 

Moving to our case, treating the target image $\vtr{x}$ as a codeword, in light of the above, for a similar procedure to be efficient we need to determine the number of vectors within the $\ell_1$-ball of $\vtr{x}$. If this size is big, then list decoding is not an efficient solution. We now estimate this size.
\begin{proposition}
\label{prop:l1-ball-too-big}
Let $B_1(\vtr{x}, q, t)$ be the $\ell_1$-ball around $\vtr{x} \in \mathbb{Z}_q^n$, with $t > q - 1$. Then 
\begin{equation}
\label{eq:l1-ball-size}
    |B_1(\vtr{x}, q, t)| > \left(1 + \frac{n-1}{t-q}\right)^{t-q}
\end{equation}
\end{proposition}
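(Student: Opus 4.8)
The plan is to assemble the three preceding ball-cardinality estimates into a single chain of inequalities and close it with the elementary bound $\binom{m}{k} \ge (m/k)^k$. First I would invoke Proposition~\ref{prop:ball-is-contained} to replace the arbitrary center $\vtr{x}$ by the origin, so that it suffices to lower bound $|B_1(\vtr{0}, q, t)|$. Then Proposition~\ref{prop:ball-cardinality} supplies the lower bound $|B_1(\vtr{0}, q, t)| \ge \binom{n+t}{t} - \binom{n-1+t-q}{t-q}$, reducing the claim to a purely binomial inequality involving the right-hand side of the statement.

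Next I would apply Proposition~\ref{prop:binom-subtract-bound} to this difference, which gives
\[
\binom{n+t}{t} - \binom{n-1+t-q}{t-q} \ge \left( \frac{n-1+t}{t} \right)^{q-1} \binom{n-1+t-q}{t-q}.
\]
Since $n \ge 2$ and $q \ge 2$, the factor $\left( \frac{n-1+t}{t} \right)^{q-1}$ is strictly greater than $1$, so the right-hand side strictly exceeds $\binom{n-1+t-q}{t-q}$; this is precisely where the strict inequality in the statement is produced. Finally I would bound the remaining binomial coefficient from below using $\binom{m}{k} \ge (m/k)^k$ with $m = n-1+t-q$ and $k = t-q$, giving
\[
\binom{n-1+t-q}{t-q} \ge \left( \frac{n-1+t-q}{t-q} \right)^{t-q} = \left( 1 + \frac{n-1}{t-q} \right)^{t-q}.
\]
Chaining the three displays yields the claimed bound on $|B_1(\vtr{x}, q, t)|$.

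The main obstacle is not the algebra but making the hypotheses line up. Proposition~\ref{prop:binom-subtract-bound} requires $q \ge 4$ and $q-1 < t < 2.5n - 2.5$, which is stronger than the single hypothesis $t > q-1$ stated here; I would therefore either restrict attention to this regime (which is exactly the small-$t$, large-$q$ setting of interest to the application, with $q = 256$ and $t \approx 0.1n$) or, if full generality is wanted, replace the middle step by a direct combinatorial lower bound on $|B_1(\vtr{0}, q, t)|$. I would also note that the right-hand side only makes sense when $t - q \ge 1$, so the operative assumption is effectively $t > q$; the borderline case $t = q$ is trivial because the ball plainly contains more than a single point. The step $\binom{m}{k} \ge (m/k)^k$ requires $m \ge k$, i.e. $n \ge 1$, which holds and keeps the inequality valid factor by factor.
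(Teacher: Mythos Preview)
Your proof is correct and follows the same overall arc as the paper: reduce to $|B_1(\vtr{x},q,t)| > \binom{n-1+t-q}{t-q}$ via the earlier ball estimates, then finish with $\binom{m}{k}\ge (m/k)^k$. The one substantive difference is the middle step. You invoke Proposition~\ref{prop:binom-subtract-bound} as a black box to obtain the factor $\bigl(\tfrac{n-1+t}{t}\bigr)^{q-1}>1$, which indeed forces the restrictions $q\ge 4$ and $t<2.5(n-1)$ that you flag. The paper instead reaches back to the intermediate inequality \emph{inside} that proof (Eq.~\eqref{eq:binom-inequality}), which only needs $t>q-1$ and yields the factor $\bigl(\tfrac{n-1+t}{t}\bigr)^{q}$; it then shows directly that this factor exceeds $2$ by bounding $\bigl(1+\tfrac{n-1}{t}\bigr)^q > \bigl(1+\tfrac{0.9}{q}\bigr)^q \ge 2.1025$ using $t<qn$, $n\ge 10$, and the monotonicity of $(1+0.9/q)^q$ in $q\ge 2$. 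So the paper trades your upper bound $t<2.5(n-1)$ and $q\ge 4$ for the mild assumption $n\ge 10$, which is why its stated hypothesis is just $t>q-1$. Your diagnosis of the hypothesis mismatch is exactly right; the paper's workaround is simply to bypass the $a^q-1\ge a^{q-1}$ step (which is what forces $a>1.4$, i.e.\ $t<2.5(n-1)$) in favor of the cruder ``$>2$'' bound.
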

\begin{proof}
From Eq.~\ref{eq:binom-inequality} in the proof of Proposition~\ref{prop:binom-subtract-bound} we have
\[
    |B_1(\vtr{x}, q, t)| \geq \left( \frac{n-1+t}{t}\right)^q \binom{n -1 + t - q}{t-q} - \binom{n -1 + t - q}{t-q}
\]
which holds for $t > q-1$, which is true by assumption. 
Now,
\begin{align*}
    \left( \frac{n-1+t}{t}\right)^q &= \left( 1 + \frac{n-1}{t}\right)^q \\
    &> \left( 1 + \frac{n-1}{qn}\right)^q,\\
    &\geq \left( 1 + \frac{0.9}{q}\right)^q,
\end{align*}
where we have used the fact that $t \leq (q-1)n < qn$ and $n \geq 10$, which is most likely to be the case. From Proposition~\ref{prop:increasing-function-q} in Appendix~\ref{app:useful}, $(1 + 0.9/q)^q$ is an increasing function of $q$, with $q \geq 2$. Therefore, since $q \geq 2$, we have for all $q \geq 2$:
\[
\left( 1 + \frac{0.9}{q}\right)^q \geq \left( 1 + \frac{0.9}{2}\right)^2  = 2.1025 > 2.
\]
Thus,
\[
    |B_1(\vtr{x}, q, t)| > 2 \binom{n -1 + t - q}{t-q} - \binom{n -1 + t - q}{t-q} = \binom{n -1 + t - q}{t-q}
\]
Now using the fact that $\binom{a}{b} \geq \left(\frac{a}{b}\right)^b$, we get:
\begin{align*}
        |B_1(\vtr{x}, q, t)| &> \binom{n -1 + t - q}{t-q} \geq \left( \frac{n -1 + t - q}{t-q} \right)^{t-q} \\
        &= \left(1 + \frac{n-1}{t-q}\right)^{t-q}
\end{align*}
\end{proof}

The lower bound from Eq.~\eqref{eq:l1-ball-size} is plotted in Figure~\ref{fig:ball-size-t} for grayscale images of various sizes with $t$ ranging from $257$ (i.e., from $q+1$) to 266. Even for these extremely small values of $t$, we see that $|B_1(\vtr{x}, q, t)|$ contains a large number of vectors. E.g., a grayscale image of size $28 \times 28$ contains at least $2^{60}$ possible vectors within its $\ell_1$-ball of radius $t = 266$. Thus, syndrome list decoding is not feasible in our case. We instead need an approach which  directly checks if a given codeword is within distance $t$ of the target codeword. 
\begin{figure}[!ht]
    \centering
    \includegraphics[scale=0.4]{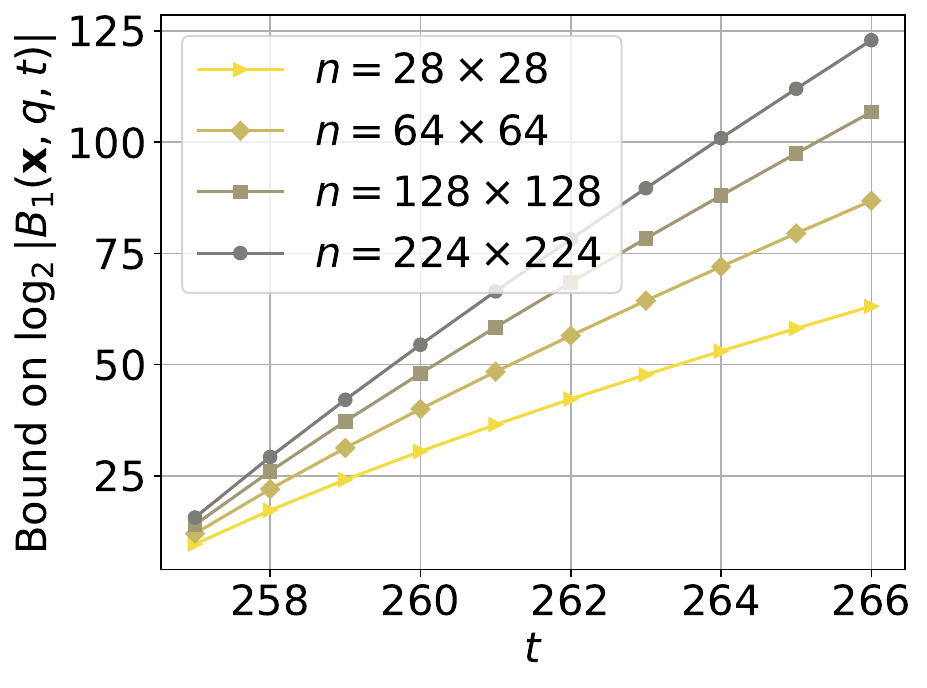}
    \caption{The lower bound in logarithmic scale of the number of images that lie within $\ell_1$-distance $t$ of a given image. The list quickly becomes huge even for such small values of $t$.}
    \label{fig:ball-size-t}
\end{figure}




\section{\texorpdfstring{Connection to $\ell_1$-Distance Error Correcting Codes}{
Connection to l1-Distance Error Correcting Codes}}
Tallini and Rose~\cite{tallinil1codes} show a generic error correcting code for the $\ell_1$ distance, which we modify to use as a property-preserving hashing (PPH) family. To be precise, their scheme is based on the asymmetric $\ell_1$-distance. To understand the scheme, we introduce some notations. Let $\vtr{x}, \vtr{y} \in \mathbb{Z}_q^n$. For $x, y \in \mathbb{Z}_q$, define:
\[
x \dotdiv y = \max\{0, x - y\}
\]
This definition is extended element-wise to $\vtr{x} \dotdiv \vtr{y}$. Now note that 
\begin{proposition}[\cite{tallinil1codes}] For any $\vtr{x}, \vtr{y} \in \mathbb{Z}_q^n$
\label{prop:dotdiv-identity}
\[
\vtr{x} + (\vtr{y} \dotdiv \vtr{x}) = \vtr{y} + (\vtr{x} \dotdiv \vtr{y}).
\]
Furthermore,
\[
\norm{\vtr{x} - \vtr{y}}_1 = \norm{\vtr{y} \dotdiv \vtr{x}}_1 + \norm{\vtr{x} \dotdiv \vtr{y}}_1
\]
\end{proposition}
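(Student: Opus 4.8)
The plan is to reduce both identities to coordinate-wise scalar statements, since the operator $\dotdiv$ and the $\ell_1$-norm are both defined entrywise. Thus it suffices to establish, for arbitrary $x, y \in \mathbb{Z}_q$, the two scalar facts $x + (y \dotdiv x) = y + (x \dotdiv y)$ and $|x - y| = (y \dotdiv x) + (x \dotdiv y)$, and then lift them back to vectors by applying the scalar statement in each of the $n$ coordinates.

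For the first scalar identity I would rely on the elementary observation that $\max\{0, a\} - \max\{0, -a\} = a$ for every real $a$, which one checks in the two cases $a \geq 0$ and $a < 0$. Setting $a = y - x$ gives $(y \dotdiv x) - (x \dotdiv y) = y - x$, and rearranging yields exactly $x + (y \dotdiv x) = y + (x \dotdiv y)$. Equivalently, a direct split on whether $x \geq y$ shows that both sides equal $\max\{x, y\}$, which is the cleaner way to see it. For the second scalar identity I would use the companion fact $\max\{0, a\} + \max\{0, -a\} = |a|$, again with $a = y - x$, to obtain $(y \dotdiv x) + (x \dotdiv y) = |x - y|$. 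Summing this over the $n$ coordinates and splitting the sum gives
\[
\norm{\vtr{x} - \vtr{y}}_1 = \sum_{i=1}^n |x_i - y_i| = \sum_{i=1}^n (y_i \dotdiv x_i) + \sum_{i=1}^n (x_i \dotdiv y_i) = \norm{\vtr{y} \dotdiv \vtr{x}}_1 + \norm{\vtr{x} \dotdiv \vtr{y}}_1,
\]
which is the claimed vector identity.

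The only point that needs care is that the ambient arithmetic in $\mathbb{Z}_q$ is modular, so I would confirm that no wraparound occurs and that the identities therefore hold as written rather than merely modulo $q$. For the first identity this is immediate because both sides evaluate to $\max\{x, y\}$, which already lies in $\{0, \ldots, q-1\}$; for the second, each summand $y_i \dotdiv x_i$ and $x_i \dotdiv y_i$ is a nonnegative integer bounded by $q - 1$, so the $\ell_1$-norms are ordinary integer sums and no reduction intervenes. Beyond this bookkeeping the argument is a routine case analysis, so I do not anticipate a substantive obstacle.
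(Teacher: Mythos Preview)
Your proof is correct and follows essentially the same coordinate-wise case analysis as the paper: the paper also reduces to the $i$th coordinate and splits on whether $x_i > y_i$ or $x_i \leq y_i$, observing that both sides of the first identity equal $\max\{x_i, y_i\}$ and that exactly one of $y_i \dotdiv x_i$, $x_i \dotdiv y_i$ equals $|x_i - y_i|$ while the other is zero. Your additional remark about modular wraparound is a harmless extra check --- in the paper $\mathbb{Z}_q$ is treated as the integer set $\{0,\ldots,q-1\}$ with ordinary arithmetic, so no reduction modulo $q$ is ever intended.
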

\begin{proof}
Take the $i$th element. If $x_i > y_i$, then the left hand side is $x_i + 0 = x_i$. And the right hand side is $y_i + x_i - y_i = x_i$. On the other hand if $x_i \leq y_i$, then the left hand side is $x_i + y_i - x_i = y_i$. And the right hand side is $y_i + 0 = y_i$.

For the second part, consider the $i$th summand in computing the $\ell_1$ distance. If $x_i > y_i$ then the $i$th summand of $\norm{\vtr{x} - \vtr{y}}_1$ is $|x_i - y_i| = x_i - y_i$. The $i$th summands in $\norm{\vtr{y} \dotdiv \vtr{x}}_1$ and $\norm{\vtr{x} \dotdiv \vtr{y}}_1$ are $0$ and $x_i - y_i$, respectively. Next assume $x_i \leq y_i$. Then the $i$th summand of $\norm{\vtr{x} - \vtr{y}}_1$ is $|x_i - y_i| = y_i - x_i$. The $i$th summands in $\norm{\vtr{y} \dotdiv \vtr{x}}_1$ and $\norm{\vtr{x} \dotdiv \vtr{y}}_1$ in this case are $y_i - x_i$ and $0$, respectively.
\end{proof}

\begin{example}
\label{ex:dotdiv}
Let $\vtr{x} = (2, 1, 0, 4)$ and $\vtr{y} = (3, 0, 1, 4)$. Then $\vtr{y} \dotdiv \vtr{x} = (1, 0, 1, 0)$ and $\vtr{x} \dotdiv \vtr{y} = (0, 1, 0, 0)$. Thus,
\begin{align*}
    \vtr{x} + (\vtr{y} \dotdiv \vtr{x}) &= (2, 1, 0, 4) + (1, 0, 1, 0) \\
    &= (3, 1, 1, 4)\\
    &= (3, 0, 1, 4) + (0, 1, 0, 0) = \vtr{y} + (\vtr{x} \dotdiv \vtr{y})
\end{align*}
Moreover, $\norm{\vtr{x} - \vtr{y}}_1 = 3$ and 
$\norm{\vtr{y} \dotdiv \vtr{x}}_1 + \norm{\vtr{x} \dotdiv \vtr{y}}_1 = 2 + 1 = 3$.
\qed
\end{example}
This operation is nothing but set difference if we consider $\vtr{x}$ and $\vtr{y}$ as multisets. Next we define polynomials associated with a vector in $\mathbb{Z}_q^n$ in a manner slightly different from~\cite{tallinil1codes}. Note that the goal in~\cite{tallinil1codes} is to construct error-correcting codes through which we could recover the original codeword $\vtr{x}$ given a received codeword $\vtr{y}$. In our case, we only need to find whether the received codeword $\vtr{y}$, i.e., the image, is within a certain $\ell_1$-distance away from $\vtr{x}$, a database image. Thus, we are not interested in recovering the original image $\vtr{x}$, from which $\vtr{y}$ may have been adversarially created. Let $\mathbb{F}$ be a finite field with $|\mathbb{F}| > n$. We shall assume $\mathbb{F} = \mathbb{Z}_p$, where $p > n$ is a prime. Thus $|\mathbb{Z}_p| \geq n + 1$. Let $\vtr{a} = (a_1, \ldots, a_n)$ be a vector composed of $n$ distinct elements from $\mathbb{Z}_p - \{0\}$. The polynomial associated with $\vtr{x} \in \mathbb{Z}_q^n$ is defined as
\begin{equation}
\label{eq:sigma-x}
\sigma_\vtr{x}(z) = \prod_{i=1}^n (1 - a_iz)^{x_i}
\end{equation}
Note that this is a polynomial in $\mathbb{F}[Z]$, and the coefficient operations are in the field $\mathbb{F} = \mathbb{Z}_p$. We shall informally refer to this polynomial as the \emph{$\sigma$-polynomial}. The following is stated without proof in~\cite{tallinil1codes}.

\begin{proposition}
\label{prop:sigma-x-facts}
Let $\sigma_\vtr{x} \in \mathbb{F}[Z]$ be as defined in Eq.~\eqref{eq:sigma-x}. Then $\text{deg}(\sigma_\vtr{x}) = \norm{\vtr{x}}_1$. 
\end{proposition}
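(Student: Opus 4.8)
The plan is to compute the degree factor-by-factor and then track the leading coefficient of the product, the one place where working over a finite field rather than over $\mathbb{Z}$ could conceivably cause trouble. First I would observe that since each $a_i$ is a nonzero element of $\mathbb{Z}_p$, the linear factor $1 - a_i z$ has degree exactly $1$, with leading coefficient $-a_i \neq 0$. Consequently $(1 - a_i z)^{x_i}$ is a polynomial of degree $x_i$ whose leading coefficient is $(-a_i)^{x_i}$, with the convention that the $x_i = 0$ case yields the constant polynomial $1$ of degree $0$.

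Next I would invoke the fact that $\mathbb{F} = \mathbb{Z}_p$ is a field, hence an integral domain, so that the degree of a product of polynomials equals the sum of their degrees. Concretely, the leading coefficient of $\sigma_\vtr{x}(z) = \prod_{i=1}^n (1 - a_i z)^{x_i}$ is the product of the individual leading coefficients, namely $\prod_{i=1}^n (-a_i)^{x_i}$. Because each $a_i$ lies in $\mathbb{Z}_p - \{0\}$, each is a unit, and a product of units in a field is again nonzero; therefore this leading coefficient does not vanish in $\mathbb{Z}_p$, and no cancellation collapses the top-degree term.

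Finally I would conclude that $\text{deg}(\sigma_\vtr{x}) = \sum_{i=1}^n x_i$. Since every coordinate $x_i \in \mathbb{Z}_q$ is nonnegative, $\sum_{i=1}^n x_i = \sum_{i=1}^n |x_i| = \norm{\vtr{x}}_1$, which establishes the claim.

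The only genuine subtlety, and the step I would flag as the potential obstacle, is precisely the non-vanishing of the leading coefficient modulo $p$: over $\mathbb{Z}$ the degree count would be immediate, but over a finite field one must verify that the top-degree coefficient survives reduction. The requirement $a_i \in \mathbb{Z}_p - \{0\}$ built into the definition of $\vtr{a}$ is exactly what guarantees this, so the argument reduces to the elementary observation above.
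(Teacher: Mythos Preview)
Your proof is correct and matches the paper's own argument essentially line for line: both bound the degree by $\sum_i x_i$, identify the top coefficient as $\prod_i(-a_i)^{x_i}=(-1)^{\norm{\vtr{x}}_1}\prod_i a_i^{x_i}$, and conclude it is nonzero because each $a_i\in\mathbb{Z}_p\setminus\{0\}$. The only cosmetic difference is that you phrase the degree additivity via the integral-domain property, whereas the paper simply cites the leading-coefficient formula directly.
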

\begin{proof}
First note that $\text{deg}(\sigma_\vtr{x}) \leq \sum_{i=1}^n x_i = \norm{\vtr{x}}_1$. Furthermore, the coefficient of $z^{\norm{\vtr{x}}_1}$ is given by~\cite[\S 8.6, p. 244]{toecc-book}:
\[
(-1)^{\norm{\vtr{x}}_1} \prod_{i=1}^n a_i^{x_i}.
\]
Since $a_i$ are non-zero elements of $\mathbb{F}$, it follows that the coefficient of $z^{\norm{\vtr{x}}_1}$ is non-zero. 
\end{proof}
The roots of $\sigma_\vtr{x}$ are $a_i^{-1} \in \mathbb{F}$ with multiplicity $x_i$ as can easily be seen. For any two polynomials in $\mathbb{F}[Z]$ define their greatest common divisor (gcd) as the zero or monic polynomial $d \in \mathbb{F}[Z]$ which divides both, and every other common divisor of the two polynomials divides $d$~\cite[\S 16.3]{shoup-nt-book}. The following \emph{key equation} is proved in~\cite{tallinil1codes}, with the proof reproduced here for completeness.
\begin{theorem}
\label{theo:key-equation}
Let $\vtr{x}, \vtr{y} \in \mathbb{Z}_q^n$. Then,
\[
\sigma_\vtr{x} \sigma_{\vtr{y} \dotdiv \vtr{x}}= \sigma_\vtr{y} \sigma_{\vtr{x} \dotdiv \vtr{y}}
\]
Furthermore,
\[
\gcd(\sigma_{\vtr{y} \dotdiv \vtr{x}}, \sigma_{\vtr{x} \dotdiv \vtr{y}}) = 1.
\]
\end{theorem}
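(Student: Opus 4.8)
The plan is to prove the two assertions in turn, both exploiting the fact that the map $\vtr{x} \mapsto \sigma_\vtr{x}$ turns coordinate-wise vector addition into polynomial multiplication. The first step I would isolate is this homomorphism property: for any two vectors $\vtr{u}, \vtr{v}$ with non-negative integer coordinates,
\[
\sigma_{\vtr{u} + \vtr{v}}(z) = \prod_{i=1}^n (1 - a_i z)^{u_i + v_i} = \prod_{i=1}^n (1 - a_i z)^{u_i} \prod_{i=1}^n (1 - a_i z)^{v_i} = \sigma_\vtr{u}(z)\,\sigma_\vtr{v}(z),
\]
which is just the law of exponents applied factor by factor. The one point to state explicitly is that the sums $u_i + v_i$ may exceed $q - 1$; this is harmless, since the $x_i$ appear as genuine non-negative integer exponents and the polynomial lives in $\mathbb{F}[Z]$, so no reduction of the exponents modulo $q$ (or modulo anything) is ever performed.

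With this in hand, the \emph{key equation} is immediate. Applying the homomorphism property to each side gives $\sigma_\vtr{x}\,\sigma_{\vtr{y} \dotdiv \vtr{x}} = \sigma_{\vtr{x} + (\vtr{y} \dotdiv \vtr{x})}$ and $\sigma_\vtr{y}\,\sigma_{\vtr{x} \dotdiv \vtr{y}} = \sigma_{\vtr{y} + (\vtr{x} \dotdiv \vtr{y})}$. By Proposition~\ref{prop:dotdiv-identity} the two vector arguments are equal, namely $\vtr{x} + (\vtr{y} \dotdiv \vtr{x}) = \vtr{y} + (\vtr{x} \dotdiv \vtr{y})$, so the two products coincide. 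This settles the first claim.

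For the coprimality statement I would turn to the coordinate structure of $\dotdiv$. Writing $\vtr{u} = \vtr{y} \dotdiv \vtr{x}$ and $\vtr{v} = \vtr{x} \dotdiv \vtr{y}$, we have $u_i = \max\{0, y_i - x_i\}$ and $v_i = \max\{0, x_i - y_i\}$, so for every index $i$ at most one of $u_i, v_i$ is positive (both vanish exactly when $x_i = y_i$). Hence the supports of $\vtr{u}$ and $\vtr{v}$ are disjoint. Recalling from the discussion following Proposition~\ref{prop:sigma-x-facts} that the roots of $\sigma_\vtr{w}$ are precisely the $a_i^{-1}$ with $w_i > 0$, and that the $a_i$ are distinct and nonzero (so the $a_i^{-1}$ are distinct), the root sets of $\sigma_\vtr{u}$ and $\sigma_\vtr{v}$ are disjoint. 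Since each of these polynomials factors completely into the pairwise non-associate linear terms $(1 - a_i z)$, they share no common irreducible factor, and so $\gcd(\sigma_\vtr{u}, \sigma_\vtr{v})$ is a nonzero constant, i.e. the monic polynomial $1$.

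There is no deep obstacle here; the argument is essentially bookkeeping. The two places that genuinely require care are (a) the ``no modular reduction of exponents'' remark in the homomorphism step, without which multiplicativity could be misread as failing when $u_i + v_i \geq q$, and (b) the passage from ``disjoint root multisets'' to ``$\gcd = 1$'', which relies on the observation that both polynomials split entirely into linear factors over $\mathbb{F}$, so that their common divisors are controlled by shared roots rather than by any subtler algebraic overlap in the algebraic closure.
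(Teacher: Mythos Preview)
Your proof is correct and follows essentially the same route as the paper: both parts exploit the multiplicativity $\sigma_{\vtr{u}+\vtr{v}} = \sigma_\vtr{u}\sigma_\vtr{v}$ together with Proposition~\ref{prop:dotdiv-identity} for the key equation, and the disjoint-support observation for the coprimality. The only cosmetic difference is that the paper verifies non-divisibility of $1-a_iz$ by $1-a_jz$ via an explicit remainder computation, whereas you phrase it in terms of disjoint root sets; these are equivalent.
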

\begin{proof}
From the definition of the $\sigma$-polynomials in Eq.~\eqref{eq:sigma-x}, together with Proposition~\ref{prop:dotdiv-identity}, we have:
\begin{align*}
     \sigma_\vtr{x}(z) \sigma_{\vtr{y} \dotdiv \vtr{x}} (z) &= \left( \prod_{i=1}^n (1 - a_iz)^{x_i} \right) \left( \prod_{i=1}^n (1 - a_iz)^{y_i \dotdiv x_i} \right) \\
     &= \prod_{i=1}^n (1 - a_iz)^{x_i + (y_i \dotdiv x_i)} \\
     &= \prod_{i=1}^n (1 - a_iz)^{y_i + (x_i \dotdiv y_i)} \\
     &= \left( \prod_{i=1}^n (1 - a_iz)^{y_i} \right) \left( \prod_{i=1}^n (1 - a_iz)^{x_i \dotdiv y_i} \right)\\
     &= \sigma_\vtr{y}(z) \sigma_{\vtr{x} \dotdiv \vtr{y}}(z)
\end{align*}
For the second part of the theorem, first note that $1 - a_i z $ does not divide $1 - a_j z$ for $i \neq j$. To see this, divide  $1 - a_j z$ by $1 - a_i z$. We get the remainder $1 - a_i^{-1} a_j$. For this to be 0, we should have $a_i = a_j$, which is not possible as all the $a_i$'s are distinct. Let $\gamma$ be any common factor of $\sigma_{\vtr{y} \dotdiv \vtr{x}} $. Since Eq.~\eqref{eq:sigma-x} is the factorization of $\sigma_{\vtr{y} \dotdiv \vtr{x}}$ into irreducible monic polynomials, $\gamma$ must have the factor $1 - a_iz$ for some $i \in [n]$. But for this to be in $\sigma_{\vtr{y} \dotdiv \vtr{x}}$ we must have that $y_i \dotdiv x > 0$, which implies that $y_i > x_i$. Therefore, $x_i \dotdiv y_i = 0$, and hence the term $1 - a_i z$ is absent in the product form of $\sigma_{\vtr{x} \dotdiv \vtr{y}}$. Since $1 - a_i z$ does not divide any other term in 
$\sigma_{\vtr{x} \dotdiv \vtr{y}}$ as established above, we have that $\gamma$ cannot be a common factor of $\sigma_{\vtr{y} \dotdiv \vtr{x}}$ and $ \sigma_{\vtr{x} \dotdiv \vtr{y}}$. 
\end{proof}

Consider the polynomial $z^{t+1} \in \mathbb{F}[Z]$. We have:
\begin{proposition}
\label{prop:gcd}
Let $\vtr{x} \in \mathbb{Z}_q^n$. Then $\gcd (\sigma_\vtr{x}, z^{t+1}) = 1$.
\end{proposition}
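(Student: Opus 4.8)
The plan is to reduce the gcd computation to a single divisibility question about the monomial factor $z$. In the polynomial ring $\mathbb{F}[Z]$ the element $z$ is irreducible, so every monic divisor of $z^{t+1}$ has the form $z^k$ for some $0 \le k \le t+1$. Consequently $\gcd(\sigma_\vtr{x}, z^{t+1})$ is either the constant polynomial $1$ or is divisible by $z$, and it equals $1$ precisely when $z \nmid \sigma_\vtr{x}$. Thus it suffices to show that $z$ does not divide $\sigma_\vtr{x}$.

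To verify this I would simply evaluate $\sigma_\vtr{x}$ at $z = 0$, using the fact that $z \mid \sigma_\vtr{x}$ if and only if $\sigma_\vtr{x}(0) = 0$ (equivalently, the constant term vanishes). From the definition in Eq.~\eqref{eq:sigma-x},
\[
\sigma_\vtr{x}(0) = \prod_{i=1}^n (1 - a_i \cdot 0)^{x_i} = \prod_{i=1}^n 1^{x_i} = 1,
\]
which is nonzero in $\mathbb{F} = \mathbb{Z}_p$. Hence $z \nmid \sigma_\vtr{x}$, and therefore $\gcd(\sigma_\vtr{x}, z^{t+1}) = 1$.

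An equivalent framing uses the root description noted just after Eq.~\eqref{eq:sigma-x}: the roots of $\sigma_\vtr{x}$ are the elements $a_i^{-1}$, each with multiplicity $x_i$, and all of these are nonzero because the $a_i$ were chosen from $\mathbb{Z}_p - \{0\}$, so $a_i^{-1} \neq 0$. The only root of $z^{t+1}$ is $0$. Since the two polynomials share no common root, they have no common nonconstant factor, so their gcd is $1$.

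There is essentially no hard step here: the entire content is the observation that $\sigma_\vtr{x}$ has nonzero constant term $1$, which is immediate from the product form. The only point worth stating with care is why ``no common root'' is enough to conclude that the gcd is the constant polynomial $1$; this holds because the sole irreducible factor of $z^{t+1}$ is $z$, which we have just ruled out as a divisor of $\sigma_\vtr{x}$.
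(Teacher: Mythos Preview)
Your proof is correct and follows essentially the same approach as the paper: both observe that every monic divisor of $z^{t+1}$ is a power of $z$ and that $\sigma_\vtr{x}(0)=1$, so no nontrivial power of $z$ can divide $\sigma_\vtr{x}$. Your version is slightly tidier in that you reduce to the single check $z\nmid\sigma_\vtr{x}$ rather than considering each $z^i$ separately, but the underlying idea is identical.
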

\begin{proof}
First note that $\sigma_\vtr{x}(0) = 1$, i.e., the constant term of $\sigma_\vtr{x}$ is 1. Let $m = \text{deg} (\sigma_\vtr{x})$. This polynomial can be written as:
\[
 \sigma_\vtr{x}(z) = A_{m}z^m + \cdots + A_{t+1}z^{t+1} + \cdots + A_1z + 1,
\]
where $A_i \in \mathbb{F}$. Now, all the divisors of $z^{t+1}$ are $z^i$ for $0 \leq i \leq t+1$. Pick any $z^i$ with $i > 0$. Then dividing $\sigma_\vtr{x}$ by $z^i$ leaves the remainder $A_{i-1}z^{i-1} + \cdots + A_1 z +  1$, where $A_0 = 1$. This is non-zero regardless of the $A_i$'s. Therefore, the gcd is 1. 
\end{proof}

We use the following results related to the extended Euclidean algorithm (EEA) whose proofs can be found in~\cite[\S 12.8]{toecc-book}.

\begin{theorem}[\cite{toecc-book}]
\label{the:eea}
Let $r_{0}(z)$ and $r_{-1}(z)$ be polynomials with $\text{deg}(r_{0}) \leq \text{deg}(r_{-1})$ and gcd $g(z)$. Then there exist polynomials $u$ and $v$ such that
\begin{equation}
\label{eq:gcd}
u(z) r_{-1}(z) + v(z) r_{0}(z) = g(z),    
\end{equation}
with $\text{deg}(u)$ and $\text{deg}(v)$ less than $\text{deg}(r_{-1})$. Furthermore, in the $i$th round, if $r_i$ and $r_{i-1}$ are the polynomials used in the division in EEA with $i \geq 0$, then
\begin{equation}
\label{eq:eea}
    \begin{bmatrix}
    r_{i-1}(z) \\ r_{i}(z)
    \end{bmatrix}
    = (-1)^i \begin{bmatrix}
    v_{i-1}(z) & -u_{i-1}(z) \\
    - v_i (z) & u_i (z) 
     \end{bmatrix}
     \begin{bmatrix}
         r_{-1}(z) \\ r_{0}(z)
     \end{bmatrix},
\end{equation}
where
\[
\begin{matrix}
    u_{-1}(z) = 0, & u_{0}(z) = 1, \\
    v_{-1}(z) = 1,  & v_{0}(z) = 0,
\end{matrix}
\]
Moreover, we have 
\begin{itemize}
    \item $u_i$ and $v_i$ are relatively prime for all $i$,
    \item $\text{deg}(u_i) = \text{deg}(r_{-1}) - \text{deg}(r_{i-1})$,
    \item $\text{deg}(u_i) = \sum_{j = 1}^{i} \text{deg}(q_{j}) $,
    \item $\text{deg}(r_{i-1}) = \text{deg}(r_{-1}) - \sum_{j = 1}^{i} \text{deg}(q_{j})$.
\end{itemize}
\qed
\end{theorem}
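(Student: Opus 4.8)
The plan is to treat this as the standard theory of the extended Euclidean algorithm (EEA) over the polynomial ring $\mathbb{F}[Z]$, which is a Euclidean domain, so that everything reduces to a single induction on the round index $i$. First I would fix the division sequence the EEA produces: starting from $r_{-1}$ and $r_0$ with $\deg(r_0) \le \deg(r_{-1})$, round $i$ performs polynomial division $r_{i-2} = q_i r_{i-1} + r_i$ with $\deg(r_i) < \deg(r_{i-1})$, terminating at the first $r_k = 0$, at which point $r_{k-1}$ is, up to normalization, the gcd $g$. Alongside I would define the cofactor sequences by $u_i = u_{i-2} - q_i u_{i-1}$ and $v_i = v_{i-2} - q_i v_{i-1}$ with the stated initial conditions $u_{-1}=0,\,u_0=1,\,v_{-1}=1,\,v_0=0$, so that the two identities in Eq.~\eqref{eq:eea} become claims about these recurrences.

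The core step is proving the matrix identity of Eq.~\eqref{eq:eea}. The key observation is that a single division round is exactly left-multiplication by the elementary matrix $\left[\begin{smallmatrix} 0 & 1 \\ 1 & -q_i \end{smallmatrix}\right]$, since $r_i = r_{i-2} - q_i r_{i-1}$. I would check the base case $i=0$ directly against the initial conditions (the $2\times 2$ matrix collapses to the identity), then push the induction through by multiplying the accumulated matrix from round $i-1$ by this elementary matrix and matching entries against the $u_i,v_i$ recurrences; the accumulated sign becomes $(-1)^i$ because each elementary matrix contributes a factor $-1$ to the determinant. Reading off the terminal relation, where $r_i = g$, then yields the Bézout identity $u\,r_{-1} + v\,r_0 = g$ of Eq.~\eqref{eq:gcd} with $u = \pm u_i$ and $v = \pm v_i$.

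For the degree statements I would telescope. Since $\deg(r_i) < \deg(r_{i-1})$ forces $\deg(q_i) = \deg(r_{i-2}) - \deg(r_{i-1})$, summing over rounds gives $\deg(r_{i-1}) = \deg(r_{-1}) - \sum_{j=1}^{i}\deg(q_j)$, which is the last bullet. A short induction on $u_i = u_{i-2} - q_i u_{i-1}$, checking that $q_i u_{i-1}$ strictly dominates $u_{i-2}$ in degree, gives $\deg(u_i) = \sum_{j=1}^{i}\deg(q_j)$, and combining the two yields $\deg(u_i) = \deg(r_{-1}) - \deg(r_{i-1})$; the analogous telescoping from $r_0$ gives $\deg(v_i) = \deg(r_0) - \deg(r_{i-1})$. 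The strict bounds $\deg(u),\deg(v) < \deg(r_{-1})$ then follow because at the terminal round $\deg(r_{i-1}) > \deg(g) \ge 0$. Finally, coprimality of $u_i$ and $v_i$ comes for free from the determinant: the accumulated matrix has determinant $\pm 1$, which unwinds to the cross relation $u_i v_{i-1} - u_{i-1} v_i = \pm 1$, so any common divisor of $u_i$ and $v_i$ divides a nonzero constant and is a unit.

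The main obstacle I anticipate is purely bookkeeping rather than conceptual: getting the alternating $(-1)^i$ factor and the exact placement of the minus signs in the $2\times 2$ matrix of Eq.~\eqref{eq:eea} to agree with the sign convention baked into the $u_i,v_i$ recurrences, since an off-by-one in sign propagates silently through the induction. I would therefore pin this down by verifying $i=0$ and $i=1$ explicitly by hand before trusting the inductive step. The other delicate point is the strict degree inequalities at the boundary rounds (first and last), which I would check separately, as the clean telescoping identity is only valid while $r_{i-1} \ne 0$.
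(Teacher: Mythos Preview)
Your proposal is correct and is the standard textbook argument for the extended Euclidean algorithm over a polynomial ring. Note, however, that the paper does \emph{not} supply its own proof of this theorem: it is stated with a \qed{} and prefaced by ``whose proofs can be found in~[toecc-book, \S 12.8]'', so there is no in-paper proof to compare against. Your matrix-recurrence plus degree-telescoping argument is precisely the approach one finds in that reference, so in effect you are reconstructing the cited proof rather than offering an alternative.
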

We have not specified the polynomials $u_i$ and $v_i$ apart from the initial values of $i$, as their expressions are not necessary for our results. Now take the key equation in Theorem~\ref{theo:key-equation} modulo $z^{t+1}$:
\begin{align}
    \sigma_\vtr{x}(z) \sigma_{\vtr{y} \dotdiv \vtr{x}}(z) &\equiv \sigma_\vtr{y}(z) \sigma_{\vtr{x} \dotdiv \vtr{y}} (z)\pmod{z^{t+1}} \nonumber \\
    \sigma_{\vtr{y} \dotdiv \vtr{x}}(z) &\equiv 
    \sigma^{-1}_\vtr{x}(z) \sigma_\vtr{y}(z) \sigma_{\vtr{x} \dotdiv \vtr{y}} \pmod{z^{t+1}}\nonumber\\
    \sigma_{\vtr{y} \dotdiv \vtr{x}}(z) 
    &\equiv \tilde{\sigma}_{\vtr{x}, \vtr{y}}(z) \sigma_{\vtr{x} \dotdiv \vtr{y}}(z) \pmod{z^{t+1}}, \label{eq:key-equation-modulo}
\end{align}
where 
\[
\tilde{\sigma}_{\vtr{x}, \vtr{y}}(z) =  \sigma^{-1}_\vtr{x}(z) \sigma_\vtr{y}(z) \pmod{z^{t+1}}.
\]
The inverse $\sigma^{-1}_\vtr{x}(z)$ exists since $\gcd (\sigma_\vtr{x}, z^{t+1}) = 1$ from Proposition~\ref{prop:gcd}, and can be obtained via the EEA (Eq.~\eqref{eq:gcd}).

\subsection{\texorpdfstring{When the Asymmetric $\ell_1$-Distances are Less Than the Thresholds}{When the Asymmetric l1-Distance is Less Than the Thresholds}}
\label{subsec:l1-less-than-t}
We first consider the case when $\norm{\vtr{y} \dotdiv \vtr{x}}_1 \leq t_+ $ and $\norm{\vtr{x} \dotdiv \vtr{y}}_1 \leq t_-$, where $t_+$ and $t_-$ are non-negative integers satisfying $t_+ + t_- = t$. From Proposition~\ref{prop:dotdiv-identity} this means that $\norm{\vtr{y} - \vtr{x}}_1 \leq t$. Moreover, from Proposition~\ref{prop:sigma-x-facts} this implies that $\text{deg}(\sigma_{\vtr{y} \dotdiv \vtr{x}}) \leq t_+$
 and $\text{deg}(\sigma_{\vtr{x} \dotdiv \vtr{y}}) \leq t_-$. The following theorem shows that if these conditions are met then we can find the solution to Eq.~\eqref{eq:key-equation-modulo} given $\tilde{\sigma}_{\vtr{x}, \vtr{y}}$ and $z^{t+1}$, and from the solution we can exactly recover $\norm{\vtr{y} \dotdiv \vtr{x}}_1$ and $\norm{\vtr{x} \dotdiv \vtr{y}}_1$. From this, we can establish that $\norm{\vtr{y} - \vtr{x}}_1 \leq t$. This result is stated in~\cite{tallinil1codes} with the proof deferred to the full version of that paper. However, we could not find the full version of the paper. The authors did in fact state that this is based on the proof of Theorem 16 in~\cite[\S 12.8]{toecc-book}. We therefore, provide a full proof here based on this theorem.

\begin{theorem}
\label{theo:unique-sol}
Let $t_+$ and $t_-$ be nonnegative integers satisfying $t_+ + t_- = t$, for a nonnegative integer $t$. Assume $\text{deg}(\sigma_{\vtr{y} \dotdiv \vtr{x}}) \leq t_+$
 and $\text{deg}(\sigma_{\vtr{x} \dotdiv \vtr{y}}) \leq t_-$. Set $r_{-1}(z) = z^{t+1}$ and $r_0(z) = \tilde{\sigma}_{\vtr{x}, \vtr{y}}(z)$ in the EEA, and run the algorithm until reaching an $r_k(z)$ such that 
 \[
 \begin{matrix}
     \text{deg}(r_{k}) \leq t_+ & \text{and} & \text{deg}(r_{k-1}) > t_+. 
 \end{matrix}
 \]
 Set 
 \[
 \begin{matrix}
 \alpha(z) = (-1)^k r_k(z), & \qquad      \beta(z) = u_k(z) 
 \end{matrix}
 \]
 Then $\text{deg}(\alpha) = \text{deg}(\sigma_{\vtr{y} \dotdiv \vtr{x}}) $ and $\text{deg}(\beta) = \text{deg}(\sigma_{\vtr{x} \dotdiv \vtr{y}})$. 
\end{theorem}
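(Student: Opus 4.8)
The plan is to run the standard key-equation uniqueness argument familiar from Reed--Solomon decoding, adapted to the congruence in Eq.~\eqref{eq:key-equation-modulo}. For brevity write $\Omega = \sigma_{\vtr{y} \dotdiv \vtr{x}}$, $\Lambda = \sigma_{\vtr{x} \dotdiv \vtr{y}}$, and $S = \tilde{\sigma}_{\vtr{x}, \vtr{y}}$, so that Eq.~\eqref{eq:key-equation-modulo} reads $\Omega \equiv S\Lambda \pmod{z^{t+1}}$, under the degree hypotheses $\deg(\Omega) \le t_+$ and $\deg(\Lambda) \le t_-$. The goal is to show that the pair $(\alpha, \beta)$ produced by the EEA is a nonzero constant multiple of $(\Omega, \Lambda)$, which immediately yields the two claimed degree equalities.

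First I would record what the EEA output satisfies. From the second row of the matrix identity Eq.~\eqref{eq:eea}, one gets $(-1)^k r_k = u_k r_0 - v_k r_{-1}$, i.e. $\alpha = \beta S - v_k z^{t+1}$, so $\alpha \equiv \beta S \pmod{z^{t+1}}$; thus $(\alpha,\beta)$ solves the same congruence as $(\Omega,\Lambda)$. Next I would pin down the degrees using the facts in Theorem~\ref{the:eea}: the stopping rule gives $\deg(\alpha) = \deg(r_k) \le t_+$, while $\deg(\beta) = \deg(u_k) = \deg(r_{-1}) - \deg(r_{k-1}) = (t+1) - \deg(r_{k-1}) \le t_-$, since $\deg(r_{k-1}) > t_+$ forces $\deg(r_{k-1}) \ge t_+ + 1$. (The index $k$ is well defined because the remainder degrees strictly decrease from $\deg(r_{-1}) = t+1$, with $\deg(r_0) = \deg(S) \le t$, so the sequence crosses the threshold $t_+$.)

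The heart of the argument is to cross-multiply the two congruences to eliminate $S$. Multiplying $\Omega \equiv S\Lambda$ by $\beta$ and $\alpha \equiv S\beta$ by $\Lambda$ and subtracting gives $\beta\Omega - \Lambda\alpha \equiv 0 \pmod{z^{t+1}}$. But $\deg(\beta\Omega) \le t_+ + t_- = t < t+1$ and likewise $\deg(\Lambda\alpha) < t+1$, so $\beta\Omega - \Lambda\alpha$ is divisible by $z^{t+1}$ yet has degree below $t+1$; hence it vanishes identically and $\beta\Omega = \Lambda\alpha$. Since $\gcd(\Omega, \Lambda) = 1$ by Theorem~\ref{theo:key-equation} and $\Lambda$ (with constant term $1$) is nonzero, it follows that $\Lambda \mid \beta$, say $\beta = c\Lambda$, and then $\alpha = c\Omega$.

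Finally I would show $c$ is a nonzero constant, which is the only genuinely delicate point and the main obstacle. Substituting the integer lift $\Omega = S\Lambda - \mu z^{t+1}$ into $\alpha = c\Omega$ and comparing with $\alpha = \beta S - v_k z^{t+1} = c\Lambda S - v_k z^{t+1}$ yields $v_k = c\mu$, so $c \mid v_k$; and $c \mid u_k$ because $u_k = \beta = c\Lambda$. Since $u_k$ and $v_k$ are relatively prime by Theorem~\ref{the:eea}, $c$ divides a unit and is therefore a nonzero constant. Consequently $\deg(\alpha) = \deg(\Omega) = \deg(\sigma_{\vtr{y}\dotdiv\vtr{x}})$ and $\deg(\beta) = \deg(\Lambda) = \deg(\sigma_{\vtr{x}\dotdiv\vtr{y}})$, as required. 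The crux is precisely this coprimeness step: ruling out a nontrivial spurious common factor $c$, for which the relative primeness of $u_k$ and $v_k$ from the EEA is essential.
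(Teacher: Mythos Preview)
Your proof is correct and follows essentially the same route as the paper: establish that the EEA output $(\alpha,\beta)$ satisfies the key congruence with the right degree bounds, cross-multiply with the known solution $(\sigma_{\vtr{y}\dotdiv\vtr{x}},\sigma_{\vtr{x}\dotdiv\vtr{y}})$ to get an exact polynomial identity, use $\gcd(\sigma_{\vtr{y}\dotdiv\vtr{x}},\sigma_{\vtr{x}\dotdiv\vtr{y}})=1$ to obtain $(\alpha,\beta)=c\cdot(\sigma_{\vtr{y}\dotdiv\vtr{x}},\sigma_{\vtr{x}\dotdiv\vtr{y}})$, and then force $c$ to be a constant via $\gcd(u_k,v_k)=1$. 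The only cosmetic difference is that the paper phrases the last step as a general uniqueness statement for any coprime solution $(\alpha',\beta')$ and then specializes, whereas you work directly with $(\Omega,\Lambda)$ throughout.
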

 \begin{proof}
 Set $r_{-1}(z) = z^{t+1}$ and $r_0(z) = \tilde{\sigma}_{\vtr{x}, \vtr{y}}(z)$. Run the EEA until reaching an $r_k(z)$ such that
 \[
 \begin{matrix}
     \text{deg}(r_{k}) \leq t_+ & \text{and} & \text{deg}(r_{k-1}) > t_+. 
 \end{matrix}
 \]
 Note that this is guaranteed since we start with $\text{deg}(r_{-1}) = t+1$, and  $\text{deg}(r_{i})  < \text{deg}(r_{i-1})$  at the $i$th iteration. Furthermore, $\gcd(\tilde{\sigma}_{\vtr{x}, \vtr{y}}, z^{t+1}) = 1$. So, the degrees of $r_i$'s are decreasing and go down to 0. Similarly we start with $\text{deg}(u_{0}) = 1$, and at the $i$th iteration, we have $\text{deg}(r_{i-1}) > \text{deg}(r_{i})$, which means (from Theorem~\ref{the:eea})
 \[ 
    \text{deg}(u_{i}) = \text{deg}(r_{-1}) - \text{deg}(r_{i-1}) > \text{deg}(r_{-1}) - \text{deg}(r_{i-2}) =  \text{deg}(u_{i-1}),
\]
and hence the degrees of the $u_i$'s are increasing. From the same theorem: 
\begin{equation}
\label{eq:uk-less-than-tminus}
 \text{deg}(u_k) = \text{deg}(r_{-1}) - \text{deg}(r_{k-1}) < t + 1 - t_{+} = t_{-} + 1 \leq t_{-}   
\end{equation}
Now set 
 \begin{align}
     \alpha(z) &= (-1)^k r_k(z), \nonumber\\
     \beta(z) &= u_k(z) \label{eq:ab}.
 \end{align}
Thus $\text{deg}(\alpha) = \text{deg}(r_{k}) \leq t_+$, and $\text{deg}(\beta) = \text{deg}(u_{k}) \leq t_-$. From Eq.~\eqref{eq:eea}, we have
\begin{align}
    r_k(z) &= (-1)^k (-v_k(z)r_{-1}(z) + u_k(z) r_0(z)) \nonumber\\
\Rightarrow    (-1)^k r_k(z) &= -v_k(z)r_{-1}(z) + u_k(z) r_0(z) \nonumber\\
\Rightarrow   \alpha(z) &= \beta(z) \tilde{\sigma}_{\vtr{x}, \vtr{y}}(z) -v_k(z)z^{t+1}  \nonumber\\
\Rightarrow \alpha(z) &\equiv \beta(z) \tilde{\sigma}_{\vtr{x}, \vtr{y}}(z) \pmod{z^{t+1}}. \nonumber
\end{align}
Thus $(\alpha, \beta)$ is a solution of Eq.~\eqref{eq:key-equation-modulo}. We next show that if $(\alpha', \beta')$ is any other solution of Eq.~\eqref{eq:key-equation-modulo} satisfying $\text{deg}(\alpha') \leq t_+$, and $\text{deg}(\beta') \leq t_-$, and $\gcd(\alpha', \beta') = 1$ then necessarily $\text{deg}(\alpha') = \text{deg}(\alpha)$ and 
$\text{deg}(\beta') = \text{deg}(\beta)$. Since we know that $\text{deg}(\sigma_{\vtr{y} \dotdiv \vtr{x}}) \leq t_+$
 and $\text{deg}(\sigma_{\vtr{x} \dotdiv \vtr{y}}) \leq t_-$, that they satisfy Eq.~\eqref{eq:key-equation-modulo} and $\gcd(\sigma_{\vtr{y} \dotdiv \vtr{x}}, \sigma_{\vtr{x} \dotdiv \vtr{y}}) = 1$ from Theorem~\ref{theo:key-equation}, this shows that the solution through EEA, i.e., Eq.~\eqref{eq:ab}, will satisfy $\text{deg}(\alpha) = \text{deg}(\sigma_{\vtr{y} \dotdiv \vtr{x}})$ and $\text{deg}(\beta) = \text{deg}(\sigma_{\vtr{x} \dotdiv \vtr{y}})$, and we are done. So assume $(\alpha', \beta')$ is another solution. Then
\begin{align}
 \alpha'(z) &\equiv \beta'(z) \tilde{\sigma}_{\vtr{x}, \vtr{y}}(z) \pmod{z^{t+1}} \label{eq:ab-congruence} \\
    \alpha'(z) \beta(z) &\equiv \beta'(z)  \tilde{\sigma}_{\vtr{x}, \vtr{y}}(z) \beta(z) \pmod{z^{t+1}} \nonumber\\
    \alpha'(z) \beta(z) &\equiv \alpha(z) \beta'(z) \pmod{z^{t+1}} \nonumber
\end{align}
The degree of each side of this congruence is $\leq t_+ + t_- = t$, and hence we have 
\[
\alpha'(z) \beta(z) = \alpha(z) \beta'(z),
\]
i.e., without the modulus. Since $\alpha'$ divides both sides, we have
\[
\beta(z) = \frac{\alpha(z) \beta'(z)}{\alpha'(z)}
\]
Since $\alpha'$ and $\beta'$ are relatively prime, $\alpha'$ must divide $\alpha$.
Define:
\[
\mu(z) = \frac{\alpha(z)}{\alpha'(z)},
\]
which implies ${\beta(z)} = \mu(z) {\beta'(z)}$. From Eq.~\eqref{eq:ab} and \eqref{eq:eea} we have
\begin{align*}
    \alpha(z) &= (-1)^k r_k(z) \\
            &= -v_k(z)r_{-1}(z) + u_k(z) r_0(z) \\
            &= -v_k(z) z^{t+1} + \beta(z) \tilde{\sigma}_{\vtr{x}, \vtr{y}}(z)\\
\Rightarrow \mu(z) \alpha'(z) &= -v_k(z) z^{t+1} + \mu(z) \beta'(z) \tilde{\sigma}_{\vtr{x}, \vtr{y}}(z).
\end{align*} 
Since $(\alpha', \beta')$ is a solution we have for some polynomial $\psi$, 
\[
\alpha'(z) = \beta'(z) \tilde{\sigma}_{\vtr{x}, \vtr{y}}(z) + \psi(z) z^{t+1}
\]
Putting this in the previous equation, we get
\begin{align*}
\mu(z) \beta'(z) \tilde{\sigma}_{\vtr{x}, \vtr{y}}(z) + \mu(z) \psi(z) z^{t+1} &= -v_k(z) z^{t+1} + \mu(z) \beta'(z) \tilde{\sigma}_{\vtr{x}, \vtr{y}}(z) \\
-\mu(z) \psi(z) &= v_k(z).
\end{align*}
Thus, $\mu$ divides $v_k$. On the other hand $\mu(z) \beta'(z) = \beta(z) = u_k(z)$. Thus, $\mu$ also divides $u_k$. But $u_k$ and $v_k$ are relatively prime from Theorem~\ref{the:eea}. Thus, $\mu$ is a constant, and hence $\text{deg}(\alpha') = \text{deg}(\alpha)$ and $\text{deg}(\beta') = \text{deg}(\beta)$. 


\end{proof}

\subsection{\texorpdfstring{When the Asymmetric $\ell_1$-Distances are More Than the Thresholds}{When the Asymmetric l1-Distance is More Than the Threshold}}
\label{subsec:l1-more-than-t}
We now consider the other case, i.e., when $\norm{\vtr{y} \dotdiv \vtr{x}}_1 > t_+ $ or $\norm{\vtr{x} \dotdiv \vtr{y}}_1 > t_-$, where $t_+$ and $t_-$ are non-negative integers satisfying $t_+ + t_- = t$. From Proposition~\ref{prop:sigma-x-facts}, equivalently, this means that 
$\text{deg}(\sigma_{\vtr{y} \dotdiv \vtr{x}}) > t_+$
 or $\text{deg}(\sigma_{\vtr{x} \dotdiv \vtr{y}}) > t_-$. The proof of Theorem~\ref{theo:unique-sol} tells us that even if the key equation (Eq.~\eqref{eq:key-equation-modulo}) is not satisfied, there will still be a solution if we run the EEA till $\text{deg}(r_k) \leq t_{+}$ and $\text{deg}(r_{k-1}) > t_{+}$, since we have $\text{deg}(u_k) \leq t_{-}$ (from Eq.~\eqref{eq:uk-less-than-tminus}) which has the same degree as $\sigma_{\vtr{x} \dotdiv \vtr{y}}$. However, if we change the condition to running the algorithm until $\text{deg}(r_k) < t_+$ and $\text{deg}(r_{k-1}) \geq t_{+}$, then we see from Theorem~\ref{the:eea} that
\[
\text{deg}(u_k) = \text{deg}(r_{-1}) - \text{deg}(r_{k-1}) \leq t + 1 - t_{+} = t_{-} + 1.
\]
Furthermore from Theorem~\ref{the:eea}, 
\begin{align*}
   \text{deg}(r_k) &= \text{deg}(r_{-1}) - \sum_{i = 1}^{k+1} \text{deg}(q_{i}) \\
    \Rightarrow \text{deg}(u_k) + \text{deg}(q_{k+1}) &= \text{deg}(r_{-1}) - \text{deg} (r_k) \\
    &> t + 1 - t_+ = t_- + 1,
\end{align*}
where we have used the fact that $\text{deg}(u_k) = \sum_{i = 1}^{k} \text{deg}(q_{i})$ from Theorem~\ref{the:eea}. Together, we get the condition:
\[
t_- + 1 \geq \text{deg}(u_k) > t_- + 1 - \text{deg}(q_{k+1})
\]
where $q_{k+1}$ is defined as
\[
r_{k-1} = q_{k+1} r_{k} + r_{k+1}.
\]
Since $\text{deg}(r_{k+1}) < \text{deg}(r_{k})$, we have
\begin{align*}
    \text{deg}(r_{k-1}) &= \text{deg}(q_{k+1}) + \text{deg}(r_{k}) \\
    \Rightarrow \text{deg}(q_{k+1}) &= \text{deg}(r_{k-1}) - \text{deg}(r_{k}) \geq 1.
\end{align*}
Thus if $ \text{deg}(q_{k+1}) = 1$ then 
\[
t_- + 1 \geq \text{deg}(u_k) > t_- \Rightarrow \text{deg}(u_k) = t_- + 1.
\]
However, $ \text{deg}(q_{k+1})$ could be greater than $1$. In general if $ \text{deg}(q_{k+1}) = \delta + 1$, where $\delta \geq 0$ is an integer, we get 
\[
t_- + 1 \geq \text{deg}(u_k) > t_- - \delta.
\]
Thus, once the EEA stops at the condition $\text{deg}(r_k) < t_+$ and $\text{deg}(r_{k-1}) \geq t_{+}$, we could flag it as a non-solution if $\text{deg}(u_k) > t_- -\delta$ for some fixed integer $\delta \geq 0$. $\delta = 0$ is guaranteed to happen. But $\delta = 1$ is less probable, $\delta = 2$ even less so and so on. But how probable is this? Let $\tau$ denote the degree of $r_{k-1}$. Then from the equation:
\[
r_{k-2} = q_{k} r_{k-1} + r_{k},
\]
we see that the polynomial on the left hand side and the one on the right hand side are of the form:
\[
A_\tau z^\tau + A_{\tau - 1} z^{\tau - 1} + \cdots + A_1 z + A_0 = B_\tau z^\tau + B_{\tau - 1} z^{\tau - 1} + \cdots + B_1 z + B_0 + r_{k}(z).
\]
Since $\text{deg}(r_{k-2}) > \text{deg}(r_{k-1})$, we have $A_\tau = B_\tau$. Now, $A_{\tau - 1} = B_{\tau - 1}$ implies that 
\[
\text{deg}(r_{k-1}) - \text{deg}(r_{k}) = \text{deg}(q_{k+1}) \geq 2,
\]
and in general $A_{\tau - 1} = B_{\tau - 1}, \ldots, A_{\tau - \delta} = B_{\tau - \delta}$ implies that 
\[
\text{deg}(r_{k-1}) - \text{deg}(r_{k}) = \text{deg}(q_{k+1}) \geq \delta + 1.
\]
The EEA starts by dividing $z^{t+1}$ by $\tilde{\sigma}_{\vtr{x}, \vtr{y}}(z)$. The coefficients of $\tilde{\sigma}_{\vtr{x}, \vtr{y}}$ are sums of products of random elements (without replacement) of $\mathbb{F}$ (due to the vector $\vtr{a}$). Thus, we can model them as random coefficients in $\mathbb{F}$. It follows that the probability can be approximated as
\begin{equation}
\label{eq:deg-q-one-over-F}
 \Pr[\text{deg}(q_{k+1}) \geq \delta + 1 ] \approx \frac{1}{|\mathbb{F}|^\delta} = \frac{1}{|\mathbb{Z}_p|^\delta} = \frac{1}{p^\delta}.   
\end{equation}
If $\mathbb{F}$ is large enough, say around 1000, then setting $\delta = 3$ suffices, as the chance of $\delta \geq 3$ is approximately one in a billion. Unfortunately, we do not have an analytical proof of this, which we leave as an open problem. However, we can show through simulations that this is a very good estimate of the probability, as shown in Figure~\ref{fig:non-sol-prob}. We have chosen very small values of $n$ and $t$, since the probabilities are already very small. The probabilities are obtained by implementing the scheme in the Python library \verb+galois+~\cite{Hostetter_Galois_2020}. For each value of $n$ we choose $p$ as a prime larger than $n$. We then sample uniformly random images $\vtr{x} \in \mathbb{Z}_q^n$, and make changes to it resulting in the vector $\vtr{y}$ such that $\norm{\vtr{y} \dotdiv \vtr{x}}_1 \geq t_+$ and $\norm{\vtr{x} \dotdiv \vtr{y}}_1 > t_-$. From the figure, we can see that across all cases the empirical probability of the left hand side of Eq.~\ref{eq:deg-q-one-over-F} is less than $p^{-\delta}$. Furthermore, the probability decreases significantly as we increase $p$, as is likely to be the case for practical values of $n$ for actual images.
 
\begin{figure*}[!ht]
    \centering
    \begin{subfigure}[t]{0.4\textwidth}
        \includegraphics[width=\linewidth]{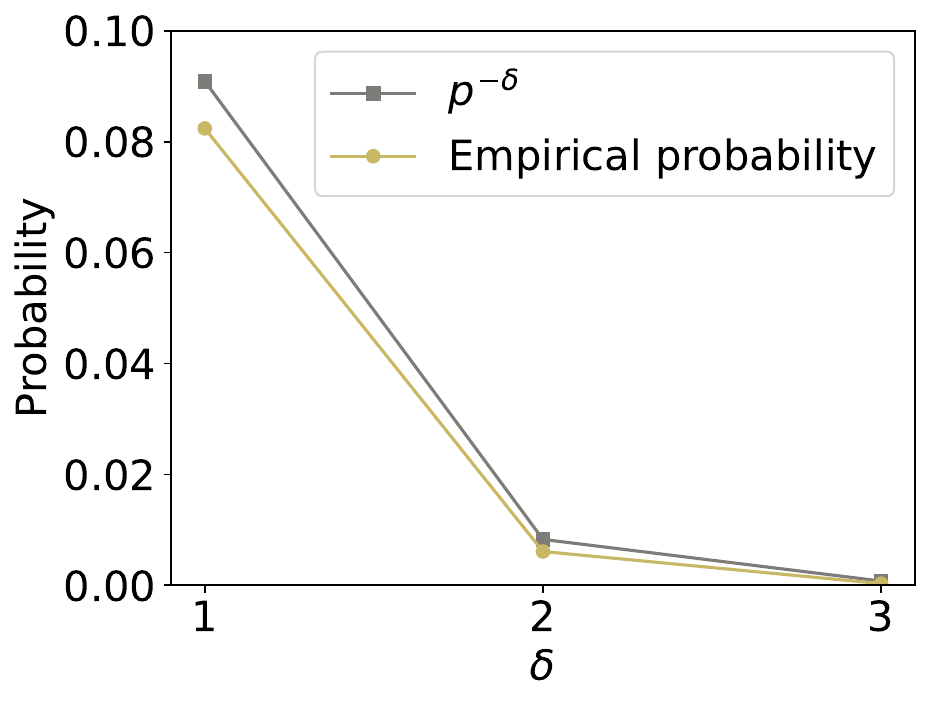}
        \caption{$ (10, 11, 10, 5, 5 - \delta$)}
        \label{subfig:non-sol-prob-p-11-n-10}
    \end{subfigure}
    \qquad
    \begin{subfigure}[t]{0.4\textwidth}
        \includegraphics[width=\linewidth]{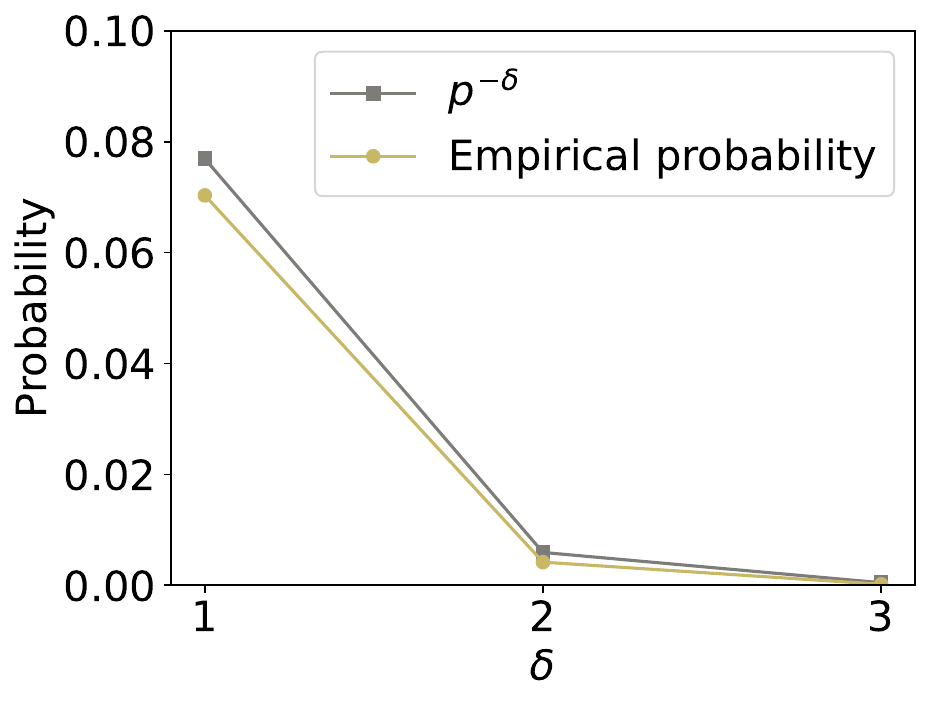}
        \caption{$(10, 13, 10, 5, 5 - \delta$)}
        \label{subfig:non-sol-prob-p-13-n-10}
    \end{subfigure}
    \qquad
    \begin{subfigure}[t]{0.4\textwidth}
        \includegraphics[width=\linewidth]{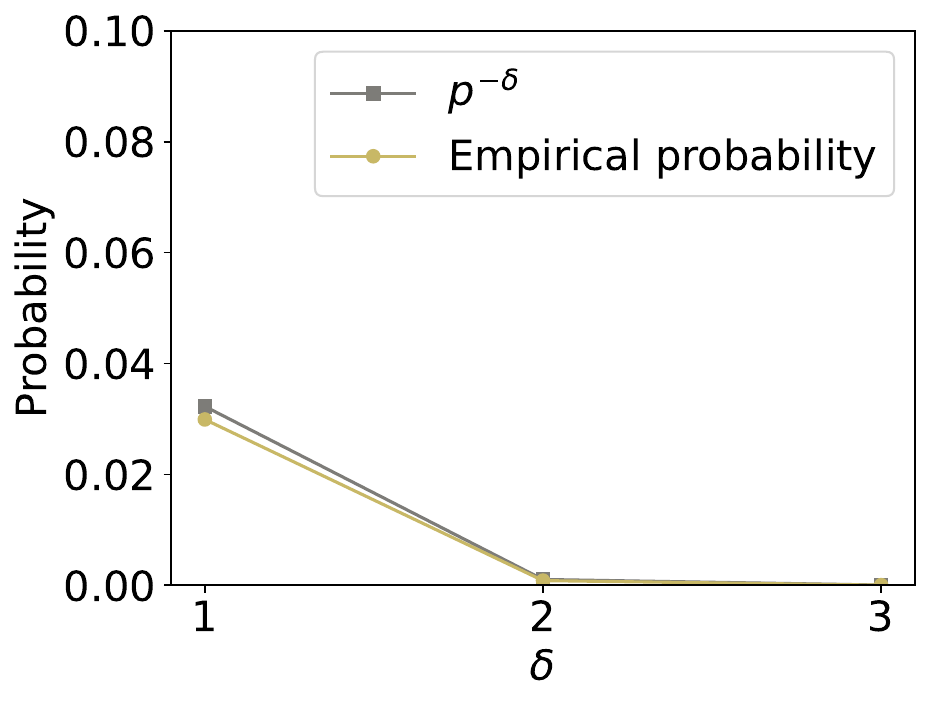}
       \caption{$(10, 31, 10, 5, 5 - \delta$)}
        \label{subfig:non-sol-prob-p-31-n-10}
    \end{subfigure}
    \qquad
    \begin{subfigure}[t]{0.4\textwidth}
        \includegraphics[width=\linewidth]{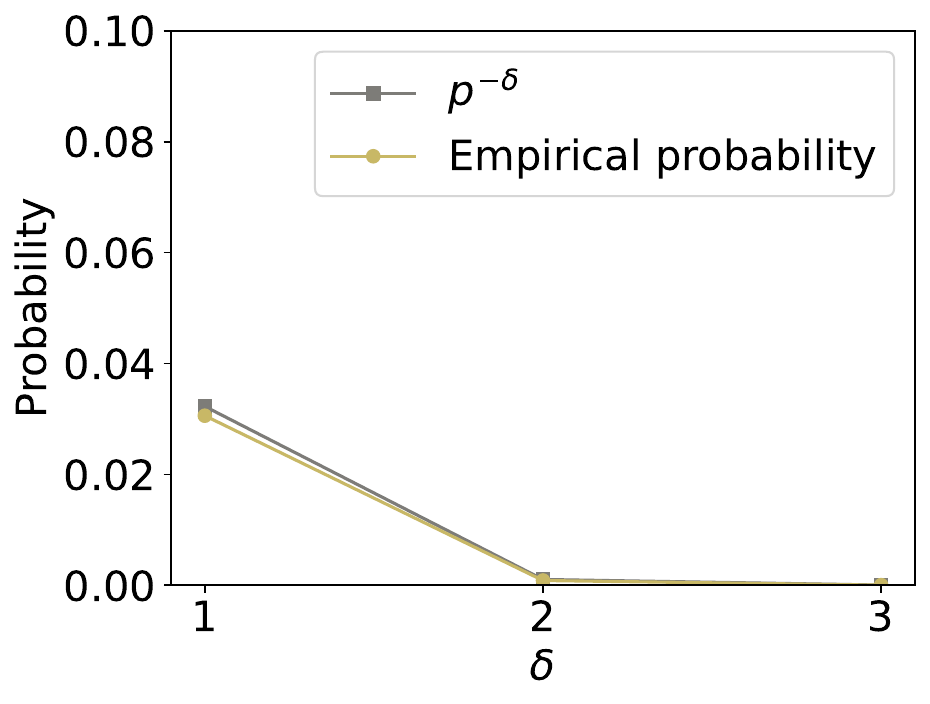}
        \caption{$(10, 31, 30, 15, 15 - \delta$)}
        \label{subfig:non-sol-prob-p-31-n-30}
    \end{subfigure}
    \caption{The probability $p^{-\delta}$ from Eq.~\eqref{eq:deg-q-one-over-F} versus the empirical probability obtained after $10^6$ runs with varying $n$ and $p > n$. We use $q = 5$ in all plots. In all cases, the empirical probability is lower than $p^{-\delta}$. Tuples are the values $(n,p,t,t_+,t_-)$.} 
    \label{fig:non-sol-prob}
\end{figure*}

\section{PPH Construction and Security Analysis}
We first precisely define the asymmetric $\ell_1$-distance predicate on images from $\mathbb{Z}_q^n$.
\begin{definition}
Let $n$ be a positive integer. Let $q \geq 2$ and $\delta \geq 0$ be integers. Let $t$ be a positive integer and let $t_+$ and $t_-$ be non-negative integers with $t = t_+ + t_-$. The two-input asymmetric $\ell_1$-distance predicate $P_{\mathsf{as}}$ is defined as
\label{def:l1-asym-pred}
\[
P_{\mathsf{as}}(\vtr{x}, \vtr{y}) =
\begin{cases}
     1, &\text{ if } \norm{\vtr{y} \dotdiv \vtr{x}}_1 < t_+ \text{ and } \norm{\vtr{x} \dotdiv \vtr{y}}_1 \leq t_- - \delta \\
    0, & \text{ otherwise}
\end{cases}
\]\qed
\end{definition}
The $(m, n)$-PPH construction for this predicate is shown in Construction~\ref{construct:pph}. We use the fact that a degree-$t$ polynomial with coefficients in $\mathbb{Z}_p$ can be represented as a vector in $\mathbb{Z}^{t+1}_p$, where the $i$th coefficient of the polynomial is the $i$th element in the vector representation. Thus, $m = (t+1)\log_2 p \approx t \log_2 n$.

\begin{algorithm}
\SetAlgorithmName{Construction}{List of constructions}
\SetAlgoLined
\DontPrintSemicolon
\SetKwInOut{Input}{Parameters}
\Input{Security parameter $\lambda$, positive integers $n = n(\lambda)$ and $q \geq 2$ for $\mathbb{Z}_q^n$,  positive integers $t = t(\lambda)$, $t_+$ and $t_-$, with $t = t_+ + t_-$, integer $\delta \geq 0$, input image $\vtr{x} \in \mathbb{Z}_q^n$.}
$\bullet \; \mathsf{samp}(1^\lambda)$:\;
\Indp 1. Set $p$ as the first prime after $n$\;
2. Generate $\vtr{a} = (a_1, \ldots, a_n)$ as a vector with $n$ distinct elements from $\mathbb{Z}_p - \{0\}$.\;
3. Output the following hash function $h$:
\[
h(\vtr{x}) = \sigma_{\vtr{x}}(z) \pmod{z^{t+1}} \in \mathbb{Z}_p^{t+1}
\]
where\;
\[
\sigma_\vtr{x}(z) = \prod_{i=1}^n (1 - a_iz)^{x_i} \; \text{ with coefficients in }\mathbb{Z}_p
\]
\Indm $\bullet \; \mathsf{eval}_h(X, Y)$: Let $X, Y \in \mathbb{Z}_p^{t+1}$.\;
\qquad \;
\Indp 1. Compute:\;
\[
\tilde{\sigma}_{\vtr{x}, \vtr{y}}(z) =  \sigma^{-1}_\vtr{x}(z) \sigma_\vtr{y}(z) \pmod{z^{t+1}}.
\]
2. Set $r_{-1}(z) = z^{t+1}$ and $r_0(z) = \tilde{\sigma}_{\vtr{x}, \vtr{y}}(z)$ in the EEA, and run the algorithm until reaching an $r_k(z)$ such that\;
 \[
 \begin{matrix}
     \text{deg}(r_{k}) < t_+ & \text{and} & \text{deg}(r_{k-1}) \geq t_+. 
 \end{matrix}
 \]
3. Output 1 if $\text{deg}(u_k) \leq t_- -\delta$, else output 0. 
\caption{An $(m, n)$-PPH for the Asymmetric $\ell_1$-Distance Predicate}
\label{construct:pph}
\end{algorithm}

\subsection{Correctness and Efficiency}
\label{subsec:correctness}
\begin{proposition}
\label{prop:scheme-is-pph}
Construction~\ref{construct:pph} is a $(1, m, n)$-PPH for the asymmetric $\ell_1$-distance predicate $P_{\mathsf{as}}$, with $m = (t + 1) \log_2 p$ and $p > n$ being a prime. 
\end{proposition}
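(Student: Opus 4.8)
The plan is to verify the three requirements of Definition~\ref{def:pph} for the predicate $P_{\mathsf{as}}$ of Definition~\ref{def:l1-asym-pred}, with the bulk of the work being $1$-correctness. The algorithms $\mathsf{samp}$ and $\mathsf{eval}_h$ are exactly those specified in Construction~\ref{construct:pph}, so I first note that they meet the syntactic demands: $\mathsf{samp}$ runs in polynomial time (locate the first prime $p > n$, sample $n$ distinct nonzero elements of $\mathbb{Z}_p$ for $\vtr{a}$, and form $\sigma_\vtr{x}(z) \bmod z^{t+1}$ as a product of $n$ truncated factors $(1 - a_i z)^{x_i}$), while $\mathsf{eval}_h$ is deterministic and costs $\mathcal{O}(t^2)$ field operations (one inverse of $\sigma_\vtr{x}$ modulo $z^{t+1}$, which exists by Proposition~\ref{prop:gcd}, plus one run of the EEA on degree-$\le t$ polynomials). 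The digest $\sigma_\vtr{x}(z) \bmod z^{t+1}$ is a polynomial of degree at most $t$ over $\mathbb{Z}_p$, i.e.\ an element of $\mathbb{Z}_p^{t+1}$, which encodes into $m = (t+1)\log_2 p$ bits; since $p$ is the first prime exceeding $n$, Bertrand's postulate gives $m \approx t \log_2 n$.

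For $1$-correctness, fix any sampled $h$ (equivalently any admissible $\vtr{a}$) and any pair $\vtr{x}, \vtr{y}$ with $P_{\mathsf{as}}(\vtr{x}, \vtr{y}) = 1$; I will show $\mathsf{eval}_h(h(\vtr{x}), h(\vtr{y})) = 1$ \emph{deterministically}, so that the error is exactly $0$ and hence negligible. By definition, $P_{\mathsf{as}}(\vtr{x}, \vtr{y}) = 1$ means $\norm{\vtr{y} \dotdiv \vtr{x}}_1 < t_+$ and $\norm{\vtr{x} \dotdiv \vtr{y}}_1 \le t_- - \delta$. Translating through Proposition~\ref{prop:sigma-x-facts} gives $\text{deg}(\sigma_{\vtr{y} \dotdiv \vtr{x}}) < t_+$ and $\text{deg}(\sigma_{\vtr{x} \dotdiv \vtr{y}}) \le t_- - \delta \le t_-$, so the degree hypotheses of Theorem~\ref{theo:unique-sol} hold. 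Since $(\sigma_{\vtr{y} \dotdiv \vtr{x}}, \sigma_{\vtr{x} \dotdiv \vtr{y}})$ solves the key equation modulo $z^{t+1}$ (Eq.~\eqref{eq:key-equation-modulo}) and is coprime by Theorem~\ref{theo:key-equation}, Theorem~\ref{theo:unique-sol} identifies the EEA output with it: $\text{deg}(r_k) = \text{deg}(\sigma_{\vtr{y} \dotdiv \vtr{x}})$ and $\text{deg}(u_k) = \text{deg}(\sigma_{\vtr{x} \dotdiv \vtr{y}}) \le t_- - \delta$. Step~3 of $\mathsf{eval}_h$ then returns $1$, as required.

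The step needing the most care --- and the main obstacle --- is that $\mathsf{eval}_h$ in Construction~\ref{construct:pph} halts the EEA at $\text{deg}(r_k) < t_+$, $\text{deg}(r_{k-1}) \ge t_+$, whereas Theorem~\ref{theo:unique-sol} is stated for the halting rule $\text{deg}(r_k) \le t_+$, $\text{deg}(r_{k-1}) > t_+$; these two rules can select different indices in general, namely exactly when some remainder has degree equal to $t_+$ at the crossing. I would resolve this by exploiting the conclusion of Theorem~\ref{theo:unique-sol} itself: the EEA remainder degrees strictly decrease, so if $k_A$ denotes the first index with $\text{deg}(r_{k_A}) \le t_+$, then every earlier remainder has degree $> t_+$, and Theorem~\ref{theo:unique-sol} forces $\text{deg}(r_{k_A}) = \text{deg}(\sigma_{\vtr{y} \dotdiv \vtr{x}}) < t_+$. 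Thus $k_A$ is \emph{also} the first index at which the strict inequality $\text{deg}(r_k) < t_+$ holds while its predecessor still satisfies $\text{deg}(r_{k-1}) > t_+$, in particular $\ge t_+$; i.e.\ Construction~\ref{construct:pph} stops at the same $k$. Consequently the value $\text{deg}(u_k) \le t_- - \delta$ computed by $\mathsf{eval}_h$ is the one guaranteed by Theorem~\ref{theo:unique-sol}, closing the argument. Because the entire derivation holds for every admissible $\vtr{a}$ and every $\vtr{x}, \vtr{y}$ with $P_{\mathsf{as}} = 1$, the $1$-correctness error vanishes identically.
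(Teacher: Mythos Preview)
Your proposal is correct and follows essentially the same route as the paper's proof: invoke Proposition~\ref{prop:gcd} for the existence of $\sigma_\vtr{x}^{-1}$, translate $P_{\mathsf{as}}(\vtr{x},\vtr{y})=1$ into degree bounds via Proposition~\ref{prop:sigma-x-facts}, and then apply Theorem~\ref{theo:unique-sol} to conclude $\text{deg}(u_k)\le t_- - \delta$ so that $\mathsf{eval}_h$ outputs~$1$ with probability~$1$. Where the paper simply asserts that the altered stopping rule ``does not change the result of Theorem~\ref{theo:unique-sol} as can be easily verified,'' you actually carry out that verification by observing that Theorem~\ref{theo:unique-sol} forces $\text{deg}(r_{k_A}) = \text{deg}(\sigma_{\vtr{y}\dotdiv\vtr{x}}) < t_+$, so the two halting conditions select the same index; this is a welcome clarification but not a different argument.
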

\begin{proof}
The $1$-correctness of the PPH follows almost directly from Theorem~\ref{theo:unique-sol}. More elaborately, if $\vtr{x}$ and $\vtr{y}$ are such that $\norm{\vtr{y} \dotdiv \vtr{x}}_1 < t_+$ and $\norm{\vtr{x} \dotdiv \vtr{y}}_1 \leq t_- - \delta$, then $P_{\mathsf{as}}(\vtr{x}, \vtr{y}) = 1$. Let $\sigma_\vtr{x}$ and $\sigma_\vtr{y}$ be the corresponding $\sigma$-polynomials. From Proposition~\ref{prop:gcd}, we have $\gcd (\sigma_\vtr{x}, z^{t+1}) = 1$, and therefore the inverse $\sigma^{-1}_\vtr{x}(z)$ exists. Thus, we can obtain $\tilde{\sigma}_{\vtr{x}, \vtr{y}}(z) =  \sigma^{-1}_\vtr{x}(z) \sigma_\vtr{y}(z) \pmod{z^{t+1}}$. We can then run the EEA with inputs $\tilde{\sigma}_{\vtr{x}, \vtr{y}}$ and $z^{t+1}$. Changing the stopping condition to $\text{deg}(r_{k}) < t_+ $ and $\text{deg}(r_{k-1}) \geq t_+$ does not change the result of Theorem~\ref{theo:unique-sol} as can be easily verified. Therefore, since
$\text{deg}(\sigma_{\vtr{y} \dotdiv \vtr{x}}) < t_+$ and 
$\text{deg}(\sigma_{\vtr{x} \dotdiv \vtr{y}}) \leq t_- - \delta \leq t_-$, the theorem guarantees that the degrees of the polynomials $\alpha$ and $\beta$, obtained through the EEA, are equal to the degrees of $\text{deg}(\sigma_{\vtr{y} \dotdiv \vtr{x}})$ and $\text{deg}(\sigma_{\vtr{x} \dotdiv \vtr{y}})$, respectively. Thus $\mathsf{eval}_h$ will output 1 in this case. Hence,
\[
\Pr_{h \leftarrow \mathsf{samp}(1^\lambda)} \left[ P(x_1, x_2) \neq \mathsf{eval}_h(h(x_1), h(x_2)) \mid P(x_1, x_2) = 1 \right] = 0
\]
\end{proof}
For $0$-correctness we only have the following conjecture due to Eq.~\eqref{eq:deg-q-one-over-F}:
\begin{equation*}
\Pr_{h \leftarrow \mathsf{samp}(1^\lambda)} \left[ P(x_1, x_2) \neq \mathsf{eval}_h(h(x_1), h(x_2)) \mid P(x_1, x_2) = 0 \right] \approx {p^{-\delta}}
\end{equation*}
Asymptotically, $p^{-\delta}$ is not a negligible function of $\lambda$, as $p$ is the next prime to $n$, which itself is polynomial in $\lambda$. However, in practice, this can be made extremely small. For instance, for $n = 224 \times 224 \times 3$ and $\delta = 3$, we have $p^{-\delta} < 2^{-51}$. Since the above theorem holds for all vectors $\vtr{a}$ sampled through $\mathsf{samp}()$, we also have:

\begin{proposition}
\label{prop:scheme-is-rpph}
Construction~\ref{construct:pph} is a robust $(1, m, n)$-PPH for the asymmetric $\ell_1$-distance predicate $P_{\mathsf{as}}$, with $m = (t + 1) \log_2 p$ and $p > n$ being a prime. \qed
\end{proposition}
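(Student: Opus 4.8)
The goal is to verify Definition~\ref{def:rpph} for $b = 1$: for every PPT adversary $\mathcal{A}$ that receives the sampled description $h$ and outputs a pair $(\vtr{x}, \vtr{y})$, the conditional probability that $\mathsf{eval}_h(h(\vtr{x}), h(\vtr{y})) = 0$ given $P_{\mathsf{as}}(\vtr{x}, \vtr{y}) = 1$ is negligible. The plan is to show that this probability is in fact \emph{exactly} $0$, by exploiting that the $1$-correctness already proved in Proposition~\ref{prop:scheme-is-pph} is perfect and deterministic in the choice of $h$.

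First I would revisit the proof of Proposition~\ref{prop:scheme-is-pph} and isolate its key structural feature: the guarantee rests entirely on Theorem~\ref{theo:unique-sol}, which is a purely algebraic statement about the extended Euclidean algorithm and makes no use of randomness in the sampling of the vector $\vtr{a}$. Consequently, for \emph{every} $\vtr{a}$ in the support of $\mathsf{samp}(1^\lambda)$ --- equivalently, for every $h \in \mathcal{H}$ --- and for \emph{every} pair $\vtr{x}, \vtr{y} \in \mathbb{Z}_q^n$ satisfying $P_{\mathsf{as}}(\vtr{x}, \vtr{y}) = 1$ (so that, via Proposition~\ref{prop:sigma-x-facts}, $\text{deg}(\sigma_{\vtr{y} \dotdiv \vtr{x}}) < t_+$ and $\text{deg}(\sigma_{\vtr{x} \dotdiv \vtr{y}}) \le t_- - \delta$), running the EEA with the stopping rule of Construction~\ref{construct:pph} produces $\text{deg}(u_k) \le t_- - \delta$, whence $\mathsf{eval}_h(h(\vtr{x}), h(\vtr{y})) = 1$ with certainty.

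Next I would argue that this ``for all $h$, for all predicate-satisfying inputs'' guarantee renders the adaptivity of the RPPH adversary powerless. Fix any $h$ in the support of $\mathsf{samp}$, and let $(\vtr{x}, \vtr{y})$ be any pair that $\mathcal{A}(h)$ can output. If $P_{\mathsf{as}}(\vtr{x}, \vtr{y}) = 1$, the deterministic guarantee above forces $\mathsf{eval}_h(h(\vtr{x}), h(\vtr{y})) = 1$, so the mismatch event is empty --- no matter how $\mathcal{A}$ exploits the description $h$. Averaging over $h \leftarrow \mathsf{samp}(1^\lambda)$ and over the coins of $\mathcal{A}$ therefore leaves the conditional mismatch probability at exactly $0$, which is trivially negligible, establishing robustness.

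The main conceptual point --- and the only step requiring care rather than computation --- is articulating \emph{why} the reordered quantifiers of Definition~\ref{def:rpph} (the adversary sees $h$ before choosing inputs) make no difference here. For a generic PPH, robustness can fail precisely because a fixed $h$ may admit ``bad'' input pairs on which $\mathsf{eval}_h$ disagrees with the predicate, and an adversary seeing $h$ can search for them; the probabilistic $1$-correctness of Definition~\ref{def:pph} only averages such bad pairs away when the inputs are fixed before $h$. In our setting no such bad pairs exist for the $b = 1$ branch under \emph{any} $h$, so the distinction collapses. I would stress that this argument applies only to $1$-correctness and does \emph{not} upgrade the $0$-correctness side, whose error is merely the conjectured $p^{-\delta}$ rather than zero.
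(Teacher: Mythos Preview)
Your proposal is correct and takes essentially the same approach as the paper: the paper's entire argument is the single observation that Proposition~\ref{prop:scheme-is-pph} (the $1$-correctness with error probability exactly $0$) holds for \emph{every} vector $\vtr{a}$ produced by $\mathsf{samp}$, so no adversary seeing $h$ can find a bad pair because none exist. You have simply spelled out the quantifier reasoning that the paper leaves implicit.
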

We do not have an equivalent conjecture to claim that our scheme is also $(0, m, n)$-RPPH, as it may be possible to find a pair of images $\vtr{x}$ and $\vtr{y}$, given an instant of the hash function $h$ from the family, i.e., the prime $p$ and vector $\vtr{a}$. Given the $\sigma$-polynomial $\sigma_{\vtr{x}}$ of an image $\vtr{x}$, even though we can trivially find a collision in the polynomial space, e.g., $z^{t+1} + \sigma_\vtr{x}$, the resulting polynomial needs to be a valid $\sigma$-polynomial of some image $\vtr{y}$ for it to be a collision in the image space. We discuss reverse-engineering the original image in Section~\ref{subsec:sigma-polynomials-reveal}, which also discusses finding collisions in the image space.  

From the theorem, the compression achieved by our scheme is $\approx t \log_2 n$. For small $t$ the compression rate is close to the lower bound as shown in Table~\ref{tab:compression-rates}. Unfortunately, for large $t$, not much compression is possible. It is unclear whether further compression is possible for robust PPH families, since the compression bounds from Section~\ref{sub:l1-bound} are for non-robust PPH families. For efficiency of the construction, we have the following theorem.
\begin{theorem}
\label{theo:pph-is-efficient}
Let $h$ be a hash function from Construction~\ref{construct:pph}. Then $h$ and $\mathsf{eval}_h$ can be computed in $\mathcal{O}(nt^2)$ and $\mathcal{O}(t^2)$ time, respectively. 
\end{theorem}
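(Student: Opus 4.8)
The plan is to bound separately the number of arithmetic operations in $\mathbb{Z}_p$ needed to (i) evaluate $h(\vtr{x})$ and (ii) run $\mathsf{eval}_h$, counting each field operation in $\mathbb{Z}_p$ as unit cost. Throughout, every polynomial is reduced modulo $z^{t+1}$ and hence stored as its vector of $t+1$ coefficients.

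For $h$, recall $h(\vtr{x}) = \sigma_\vtr{x}(z) \bmod z^{t+1} = \prod_{i=1}^n (1-a_iz)^{x_i} \bmod z^{t+1}$. I would build this product incrementally: maintain a running truncated product $P(z)$ of degree $\le t$ and multiply in one factor $(1-a_iz)^{x_i} \bmod z^{t+1}$ at a time. The key is to form each factor cheaply. Since $(1-a_iz)^{x_i} = \sum_{j\ge 0} \binom{x_i}{j}(-a_i)^j z^j$, truncating at $z^{t+1}$ leaves at most $t+1$ coefficients, and these are generated in $\mathcal{O}(t)$ operations by updating consecutive terms through the ratio $\binom{x_i}{j}/\binom{x_i}{j-1}$ times $-a_i$. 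Multiplying this degree-$\le t$ factor into $P$ by schoolbook multiplication, truncated to degree $t$, costs $\mathcal{O}(t^2)$. Summing $\mathcal{O}(t)+\mathcal{O}(t^2)$ over the $n$ factors gives $\mathcal{O}(nt^2)$.

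For $\mathsf{eval}_h$, the inputs are the two digests $X = \sigma_\vtr{x} \bmod z^{t+1}$ and $Y = \sigma_\vtr{y} \bmod z^{t+1}$, each a vector in $\mathbb{Z}_p^{t+1}$, so $n$ enters only through the unit-cost field arithmetic. Step 1 computes $\tilde{\sigma}_{\vtr{x}, \vtr{y}} = \sigma_\vtr{x}^{-1}\sigma_\vtr{y} \bmod z^{t+1}$. The inverse exists because $\gcd(\sigma_\vtr{x}, z^{t+1}) = 1$ (Proposition~\ref{prop:gcd}), equivalently $\sigma_\vtr{x}(0)=1\neq 0$, so I compute the inverse power-series coefficients $d_0,\dots,d_t$ by the recurrence $d_0 = 1$, $d_k = -\sum_{j=1}^{k} c_j d_{k-j}$, where $c_j$ are the coefficients of $\sigma_\vtr{x}$; the $k$th step costs $\mathcal{O}(k)$, for $\mathcal{O}(t^2)$ overall, and a further truncated multiplication by $\sigma_\vtr{y}$ is again $\mathcal{O}(t^2)$. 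Step 2 runs the EEA on $r_{-1} = z^{t+1}$ and $r_0 = \tilde{\sigma}_{\vtr{x}, \vtr{y}}$, polynomials of degree $\le t+1$; by the standard analysis of the extended Euclidean algorithm for polynomials the total work is proportional to the product of the input degrees, i.e. $\mathcal{O}(t^2)$, and this run also yields the $u_k$ of Theorem~\ref{the:eea}. Step 3 is a pair of degree comparisons, $\mathcal{O}(t)$. Hence $\mathsf{eval}_h$ runs in $\mathcal{O}(t^2)$.

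The only place where care is genuinely needed is the hash computation: a direct repeated-squaring evaluation, or multiplying by $(1-a_iz)$ exactly $x_i$ times, would smuggle in an extra factor of $\log q$ or of $\norm{\vtr{x}}_1$. The truncated binomial expansion is what keeps each factor's formation at $\mathcal{O}(t)$ and pins the overall cost at $\mathcal{O}(nt^2)$. I would also stress that, because every intermediate polynomial is reduced modulo $z^{t+1}$, its coefficient vector never exceeds length $t+1$, so the schoolbook multiplications stay at $\mathcal{O}(t^2)$ rather than growing with the true degree $\norm{\vtr{x}}_1$ of $\sigma_\vtr{x}$ guaranteed by Proposition~\ref{prop:sigma-x-facts}.
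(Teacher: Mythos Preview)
Your proof is correct and follows the same high-level outline as the paper---build $\sigma_\vtr{x}\bmod z^{t+1}$ one factor at a time, then in $\mathsf{eval}_h$ invert, multiply, and run the EEA---but you implement two sub-steps differently, and in one case more cleanly.

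For the factor $(1-a_iz)^{x_i}\bmod z^{t+1}$, the paper multiplies $(1-a_iz)$ by itself up to $x_i<q$ times and then reduces, quoting a cost of $\mathcal{O}(qt)$ per factor, which it then absorbs into $\mathcal{O}(t^2)$; this silently uses $q=\mathcal{O}(t)$. Your truncated binomial expansion sidesteps $q$ entirely: since $\binom{x_i}{j}=0$ for $j>x_i$, the recurrence terminates after at most $\min(x_i,t)+1$ steps and never needs the implicit $q\le t$ assumption. (One small caveat: the ratio update divides by $j$ in $\mathbb{Z}_p$, so you are tacitly using $j<p$; this holds because the nonzero terms have $j\le x_i<q$ and in the paper's setting $p>n\ge q$.) For the inverse $\sigma_\vtr{x}^{-1}\bmod z^{t+1}$, the paper invokes the EEA while you use the power-series recurrence $d_k=-\sum_{j=1}^k c_j d_{k-j}$; both are standard and both cost $\mathcal{O}(t^2)$. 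Your closing remarks about why naive repeated multiplication or squaring would introduce an unwanted $\norm{\vtr{x}}_1$ or $\log q$ factor are apt and worth keeping.
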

\begin{proof}
Since the $\sigma$-polynomial only needs to be stored modulo $z^{t+1}$, we can use the following algorithm: 

\RestyleAlgo{plain}
\LinesNumbered
\begin{algorithm}
\DontPrintSemicolon
Set $s \leftarrow 1$\;
\For{$i=1$ \KwTo $n$}
{
    $r \leftarrow (1 - a_i z)^{x_i} \pmod{z^{t+1}}$\;
    $s \leftarrow sr \pmod{z^{t+1}}$\;
}
\Return $s$\;
\end{algorithm}

Step 3 multiplies $1 - a_iz$ with itself up to $q$ times. Thus, this can be done in up to $q$ steps. Reduction modulo $z^{t+1}$ can be done via the division algorithm. Since this involves a polynomial of degree up to $q$ and another with degree $t+1$, this can be done in time $\mathcal{O}(qt)$~\cite[\S 17.1]{shoup-nt-book}. Step 4 involves multiplying two polynomials of degrees less than or equal to $t$. This can be done in $\mathcal{O}(t^2)$ time~\cite[\S 17.1]{shoup-nt-book}. Finally, reduction modulo $z^{t+1}$, as above, can be done in $\mathcal{O}(t^2)$ time, as $sr$ is of degree at most $2t$. Thus, $h$ can be computed in time $\mathcal{O}(n(qt + t^2 + t^2)) = \mathcal{O}(n t^2)$.

For the $\mathsf{eval}_h$ function, the first step is to find the inverse of $\sigma_{\vtr{x}}$ modulo $z^{t+1}$. This can be done using the EEA. The EEA takes time $\mathcal{O}(t^2)$~\cite[\S 17.3]{shoup-nt-book} as the polynomials are of degrees $t$ and $t+1$, respectively. This is followed by multiplication of degree $t$ polynomials $\sigma^{-1}_{\vtr{x}}$ and $\sigma_{\vtr{y}}$, and then by reduction modulo $z^{t+1}$, both taking $\mathcal{O}(t^2)$ time as discussed above. Finally, running the EEA algorithm on $\tilde{\sigma}_{\vtr{x}, \vtr{y}}$ and $z^{t+1}$ for at most $t_+ < t$ steps again takes time $\mathcal{O}(t^2)$. Thus, $\mathsf{eval}_h$ can be computed in $\mathcal{O}(t^2)$ overall time. 
\end{proof}

\subsection{Application to Adversarial Image Detection}
\label{subsec:application-adv}
For the adversarial image search scenario, we first need a setup algorithm to store the $\sigma$-polynomials, or rather their inverses, of all images $\vtr{x}_i \in \database$. At the time of submitting an input image, the user needs to prepare the $\sigma$-polynomial for his/her image to send to the server. Finally, the server runs the detection algorithm which evaluates to $1$ if for any image in the dataset we have a match according to the asymmetric $\ell_1$-distance predicate. These algorithms are detailed in Algorithms~\ref{algo:setup},~\ref{algo:prepare} and ~\ref{algo:detect}, respectively.  

\begin{algorithm}
\LinesNumbered
\SetAlgoLined
\DontPrintSemicolon
\SetKwInOut{Input}{Input}
\Input{All inputs to Construction~\ref{construct:pph}, database $\database$ of $N$ images $\vtr{x}_1, \ldots, \vtr{x}_{N}$.}
Run $\mathsf{samp}(1^\lambda)$ to obtain the hash function $h$\;
\For{$i=1$ \KwTo $N$}
{
    Obtain $h(\vtr{x}_i) = \sigma_{\vtr{x}_i}(z) \pmod{z^{t+1}}$\;
    Compute $\sigma^{-1}_{\vtr{x}_i}(z)$ modulo ${z^{t+1}}$\;
    Replace $\vtr{x}_i$ with $h(\vtr{x}_i)^{-1} = \sigma^{-1}_{\vtr{x}_i}(z)$ in $\database$\; 
}
\Return $\database$\; 
\caption{Setup}
\label{algo:setup}
\end{algorithm}

\begin{algorithm}
\LinesNumbered
\SetAlgoLined
\DontPrintSemicolon
\SetKwInOut{Input}{Input}
\Input{Hash function $h$ from Construction~\ref{construct:pph}, image $\vtr{y}$.}
\Return $h(\vtr{y}) = \sigma_{\vtr{y}}(z) \pmod{z^{t+1}}$\; 
\caption{Prepare}
\label{algo:prepare}
\end{algorithm}

\begin{algorithm}
\LinesNumbered
\SetAlgoLined
\DontPrintSemicolon
\SetKwInOut{Input}{Input}
\Input{Hash function $h$ from Construction~\ref{construct:pph}, hash digest $h(\vtr{y})$, database $\database$ of inverse $\sigma$-polynomials $\sigma^{-1}_{\vtr{x}_1} = h(\vtr{x}_1)^{-1}, \ldots, \sigma^{-1}_{\vtr{x}_N} = h(\vtr{x}_N)^{-1}$.}
\For{$i=1$ \KwTo $N$}
{
    \If{$\mathsf{eval}_h(h(\vtr{x}_i)^{-1}, h(\vtr{y})) =  1$}{
        \Return $1$\;
    }
}
\Return $0$\; 
\caption{Detect}
\label{algo:detect}
\end{algorithm}

\subsection{Information Leakage and Inverting the Hash Function}
\label{subsec:sigma-polynomials-reveal}

\descr{What Does the $\mathsf{eval}_h$ Function Reveal?} From Theorem~\ref{the:eea}, the EEA returns two polynomials $\alpha(z)$ and $\beta(z)$. Although not explicitly stated, these polynomials could be exactly the polynomials $\sigma_{\vtr{y} \dotdiv \vtr{x}}$ and $\sigma_{\vtr{x} \dotdiv \vtr{y}}$. We can then factor these polynomials using a variety of efficient polynomial factorization algorithms~\cite{shoup-nt-book}. From these factorized polynomials and knowing $\vtr{x}$ or $\vtr{y}$ one can recover the other, given that the vector $\vtr{a}$ is public information. This is obviously not surprising as this method was initially proposed to correct errors~\cite{tallinil1codes}. However, if the initial images $\vtr{x}$ and $\vtr{y}$ are not known, one only learns the absolute difference in pixel values between $\vtr{x}$ and $\vtr{y}$ for all pixels. This is certainly more information than a simple 1 or 0 answer to the fact that $\vtr{x}$ and $\vtr{y}$ satisfy $P_\mathsf{as}$. Since the server computes the $\mathsf{eval}_h$ function, we can reduce this information leakage by never storing the vector $\vtr{a}$ at the server. Otherwise, our scheme assumes an honest server. Note that we do not need to store the vector $\vtr{a}$ at the server, as the server can compute the $\mathsf{eval}_h$ function without it. Alternatively, if the honest assumption is too strong, we can employ threshold cryptography such that the $\sigma$-polynomial is secret shared over multiple servers, and the $\mathsf{eval}_h$ function is computed in a distributed manner. We leave out the exact workings of such a scheme as future work.

\descr{What Do the $\sigma$-Polynomials Reveal?}
The question then arises how difficult it is for an adversary to find an original image if it gets hold of the dataset $\database$? In Section~\ref{subsec:require} we mentioned that one of the adversarial goals is to invert the hash digest $h(\vtr{x})$. To be precise, for each image $\vtr{x} \in \database$, we store $\sigma_{\vtr{x}}^{-1}(z) \pmod{z^{t+1}}$, which is a $t$-degree polynomial. We can easily compute its inverse to obtain $\sigma_{\vtr{x}}(z) \pmod{z^{t+1}}$, which is again a degree $t$ polynomial. The general form of this polynomial is: 
\[
 \sigma_\vtr{x}(z) = A_{t}z^t + \cdots + A_1z + 1
\]
So, the question reduces to what do the coefficients $A_i$'s reveal about $\vtr{x}$? 
To answer this, we have the following theorem:
\begin{theorem}
\label{theo:relabel}
Let $\vtr{x}$ be an image with the $\sigma$-polynomial:
\[
\sigma_{\vtr{x}}(z) = \prod_{i=1}^n (1 - a_iz)^{x_i}.
\]
as given by Eq.~\eqref{eq:sigma-x}. Let $m = \sum_{i=1}^n x_i$ be the degree of this polynomial as given by Proposition~\ref{prop:sigma-x-facts}. Let $A_j$ be the $j$th coefficient of this polynomial, with $0 \leq j \leq m$. Then, 
\[
A_j = (-1)^j S(j, n),
\]
where 
\[
S(j, n - k) = \sum_{i = 0}^j \binom{x_{k+1}}{i}a_{k+1}^i S(i, n-k-1),
\]
for $0 \leq k \leq n$. 
\end{theorem}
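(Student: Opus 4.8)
The plan is to prove the formula by induction on the number of factors in the product, peeling off one factor at a time and using the binomial theorem together with the fact that multiplying two polynomials convolves their coefficient sequences. Concretely, for $0 \le k \le n$ I set $\sigma^{(k)}(z) = \prod_{i=k+1}^{n}(1 - a_i z)^{x_i}$, the tail product of the last $n-k$ factors, so that $\sigma^{(0)} = \sigma_\vtr{x}$ and $\sigma^{(n)} = 1$ is the empty product. I then \emph{define} $S(j, n-k)$ to be $(-1)^j$ times the coefficient of $z^j$ in $\sigma^{(k)}(z)$; equivalently $\sum_{j \ge 0} S(j, n-k)\, z^j = \sigma^{(k)}(-z) = \prod_{i=k+1}^{n}(1 + a_i z)^{x_i}$, which identifies $S(j, n-k)$ with the $j$th elementary symmetric polynomial of the multiset in which each $a_i$ (for $k+1 \le i \le n$) occurs with multiplicity $x_i$. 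Taking $k = 0$ yields the outer claim $A_j = (-1)^j S(j, n)$ directly, so everything reduces to establishing the recurrence for $S$.

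The main step is the recurrence itself. Factoring out the first term of the tail, $\sigma^{(k)}(z) = (1 - a_{k+1} z)^{x_{k+1}}\,\sigma^{(k+1)}(z)$, and expanding by the binomial theorem, the coefficient of $z^i$ in $(1 - a_{k+1} z)^{x_{k+1}}$ is $\binom{x_{k+1}}{i}(-1)^i a_{k+1}^i$, which vanishes once $i > x_{k+1}$ so the sum is automatically finite. Multiplying by $\sigma^{(k+1)}$ convolves the coefficient sequences, so reading off the coefficient of $z^j$ and absorbing the sign $(-1)^j$ into the definition of $S$ gives $S(j, n-k) = \sum_{i=0}^{j} \binom{x_{k+1}}{i} a_{k+1}^i\, S(j-i, n-k-1)$, the claimed recurrence (the stray signs $(-1)^i$ from the binomial term and $(-1)^{j-i}$ carried inside $S(j-i, \cdot)$ combine with $(-1)^j$ to cancel). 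The base case is the empty product $\sigma^{(n)} = 1$, i.e.\ $S(0,0) = 1$ and $S(j,0) = 0$ for $j \ge 1$, which anchors the induction; unfolding the recurrence from $k = n$ up to $k = 0$ then completes the argument.

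The proof is really just careful bookkeeping, so I do not anticipate a genuine obstacle; the only delicate points are the sign accounting when pushing $(-1)^j$ through the convolution and checking that truncating the binomial sum at $\min\{j, x_{k+1}\}$ is harmless. A slicker route that avoids signs entirely is to work throughout with the generating identity $\sum_{j} S(j, n-k)\, z^j = \prod_{i=k+1}^{n}(1 + a_i z)^{x_i}$ and compare coefficients of $z^j$ on the two sides of $\prod_{i=k+1}^{n}(1 + a_i z)^{x_i} = (1 + a_{k+1} z)^{x_{k+1}} \prod_{i=k+2}^{n}(1 + a_i z)^{x_i}$; this makes the elementary-symmetric-polynomial interpretation explicit and reduces the whole theorem to a single application of the binomial theorem followed by matching coefficients.
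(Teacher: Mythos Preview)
Your argument is correct and is essentially the paper's own proof in generating-function dress: both peel off the factor $(1-a_{k+1}z)^{x_{k+1}}$ and identify the resulting convolution, the paper phrasing this combinatorially (relabel the multiset $\{a_i\}$ with multiplicities $x_i$, recognise $S(j,n)$ as the $j$th elementary symmetric polynomial, and count $j$-subsets by how many copies of $a_1$ they contain), while you read coefficients of the tail product $\sigma^{(k)}(-z)$ directly.

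One point worth flagging: the recurrence you actually derive is
\[
S(j,n-k)=\sum_{i=0}^{j}\binom{x_{k+1}}{i}a_{k+1}^{i}\,S(j-i,\,n-k-1),
\]
with $S(j-i,\cdot)$ in the last argument, not the $S(i,\cdot)$ that appears in the displayed statement. Your version is the correct one --- the paper's own expanded sum and the worked example $S(1,n)=x_1a_1\,S(0,n-1)+S(1,n-1)$ immediately after the theorem both match $S(j-i,\cdot)$ --- so the compact formula in the statement (and in the last line of the paper's proof) carries a typographical slip that you have silently, and correctly, repaired.
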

\begin{proof}
Consider the polynomial in Eq.~\eqref{eq:sigma-x} for the vector $\vtr{x} = (x_1, \ldots x_n)$:
\[
\prod_{i=1}^n (1 - a_iz)^{x_i}.
\]
Let us relabel the $a_i$'s so that multiple occurrences of $a_i$'s have different labels, as shown in Figure~\ref{fig:relabel}. 

\begin{figure*}
\centering
\begin{tikzpicture}
\tkzDefPoint(0,0){a};
\node at (a)[]{$\vtr{a}$};

\node at (a)[below=1cm]{$\vtr{a}'$};

\tkzDefPoint(1,0){eqs};
\node at (eqs)[]{$=$};

\node at (eqs)[below=1.1cm]{$=$};

\tkzDefPoint(2,0){a1};
\node at (a1)[]{$(a_1,$};
\node at (a1)[below=0.4cm]{$\updownarrow$};
\node at (a1)[below=1cm]{$(a'_1,$};

\tkzDefPoint(3,0){dots1};
\node at (dots1)[]{$\cdots$};
\node at (dots1)[below=1.1cm]{$\cdots$};

\tkzDefPoint(4,0){a12};
\node at (a12)[]{$a_1,$};
\node at (a12)[below=0.4cm]{$\updownarrow$};
\node at (a12)[below=1cm]{$a'_{x_1},$};

\draw[gray,decoration={brace, raise=10pt},decorate](a1) -- node[above=10pt] {$x_1$}(a12);

\tkzDefPoint(5,0){a2};
\node at (a2)[]{$a_2,$};
\node at (a2)[below=0.4cm]{$\updownarrow$};
\node at (a2)[below=1cm]{$a'_{x_1 + 1},$};

\tkzDefPoint(6,0){dots2};
\node at (dots2)[]{$\cdots$};
\node at (dots2)[below=1.1cm]{$\cdots$};

\tkzDefPoint(7,0){a22};
\node at (a22)[]{$a_2,$};
\node at (a22)[below=0.4cm]{$\updownarrow$};
\node at (a22)[below=1cm]{$a'_{x_1 + x_2},$};

\draw[gray,decoration={brace, raise=10pt},decorate](a2) -- node[above=10pt] {$x_2$}(a22);

\tkzDefPoint(8,0){dots3};
\node at (dots3)[]{$\cdots$};
\node at (dots3)[below=1.1cm]{$\cdots$};

\tkzDefPoint(9,0){an};
\node at (an)[]{$a_n,$};
\node at (an)[below=0.4cm]{$\updownarrow$};
\node at (an)[below=1cm]{$a'_{m - x_n + 1},$};

\tkzDefPoint(10,0){dots4};
\node at (dots4)[]{$\cdots$};
\node at (dots4)[below=1.1cm]{$\cdots$};

\tkzDefPoint(11,0){an2};
\node at (an2)[]{$a_n)$};
\node at (an2)[below=0.4cm]{$\updownarrow$};
\node at (an2)[below=1cm]{$a'_{m})$};

\draw[gray,decoration={brace, raise=10pt},decorate](an) -- node[above=10pt] {$x_n$}(an2);
\end{tikzpicture}
    \caption{Replacing the labels in the tuples $\vtr{a}$ with unique labels.}
    \label{fig:relabel}
\end{figure*}

Here we have overloaded notation to also use $\vtr{a}$ to denote the tuple containing multiple occurrences of elements of the vector $\vtr{a}$. Then the polynomial can be rewritten in terms of $\vtr{a}'$ as:
\[
\prod_{i = 1}^m (1 - a'_iz),
\]
where $m = \sum_{i = 1}^n x_i$. Then the $j$th coefficient of this polynomial is given by the elementary symmetric polynomial~\cite[\S 8]{toecc-book}:
\begin{align*}
   e_j(a_1, a_2, \ldots, a_n) &= e_j(a'_1, a'_2, \ldots, a'_n) \\
   &= (-1)^j \sum_{1 \leq i_1 < i_2 < \cdots < i_j \leq n} a'_{i_1}a'_{i_2}\cdots a'_{i_j} 
\end{align*}
Let us call the sum on the right $S(j, n)$. We would like to get an expression of $S(j, n)$ in terms of the original vector $\vtr{a}$. Now $S(j, n)$ is the sum in the $j$th elementary symmetric polynomial in terms of the elements of the \emph{tuple} $\vtr{a}$, $S(j, n-1)$ is the sum in the $j$th elementary symmetric polynomial in terms of the elements of the \emph{tuple} $\vtr{a}$ \emph{without} $a_1$, and so on. Under this notation $S(0, i) = 1$ for all integers $i \geq 0$, and $S(i, i -1) = 0$ for all integers $i \geq 1$. 

Each summand in $S(j, n)$ is a product of $j$ elements of the tuple $\vtr{a}$. To calculate $S(j, n)$, first consider $a_1$. There are a total of $x_1$ occurrences of $a_1$ in the tuple $\vtr{a}$. Taken $j$ at a time, we therefore have a total of $\binom{x_1}{j}$ occurrences of $a_1^j$ in $S(j, n)$. Next we consider $a_1^{j-1}$ with the last coefficient being any of the other coefficients. We can have $\binom{x_1}{j - 1}$ possible arrangements that yield $a_1^{j-1}$. For each of these arrangements we need to determine the last coefficient in the $j$-term product. We are left with $n - 1$ coefficients: $a_2, \ldots, a_n$ and we are taking them one at a time. Thus, we are computing the quantity $S(1, n-1)$. Likewise for $a^{j-2}$ we need to consider the number of possible arrangements that yield $a^{j-2}$ which are $\binom{x_1}{j-2}$ and the number of possible ways in which the last two spots can be filled by the remaining $n-1$ elements, which is $S(2, n-1)$. Continuing on this way, once we reach $a_1^0$, we see that all $j$ spots in the product are taken by the rest of the elements in $\vtr{a}$. Thus, we are computing $S(j, n - 1)$. Collecting these counts, we get
\begin{align}
    S(j, n) &= \binom{x_1}{j}a_1^j + \binom{x_1}{j-1}a_1^{j-1}S(1, n - 1) + \binom{x_1}{j-2}a_1^{j-2} S(2, n - 1) \nonumber\\
    &+ \cdots + \binom{x_1}{1}a_1 S(j-1, n-1) + \binom{x_1}{0}a_1^0 S(j, n-1) \nonumber\\
    &= \sum_{i = 0}^j \binom{x_1}{i}a_1^i S(i, n-1) \nonumber.
\end{align}
From this equation we see that:
\[
S(j, n-1) = \sum_{i = 0}^j \binom{x_2}{i}a_2^i S(i, n-2),
\]
and so on. Thus, 
\[
A_j = e_j(a_1, \ldots, a_n) = (-1)^j S(j, n).
\]
\end{proof}

So, for example $S(0, n) = S(0, n-1) = \cdots = S(0, 0) = 1$, and hence $A_0 = 1$. Likewise, $S(1, n) = x_1 a_1 S(0, n-1) + S(1, n-1) = x_1a_1 + S(1, n - 1)$. By the recursive nature of the definition, we get $S(1, n -1) = x_2 a_2 + S(1, n -2)$. Continuing on, we get $S(1, 1) = x_na_n + S(1, 0) = x_n a_n$. Thus, $S(1, n) = \sum_{i=1}^n x_i a_i$, and so $A_1 = - \sum_{i=1}^n x_i a_i$. While these values can be easily computed if we know the vectors $\vtr{a}$ and $\vtr{x}$, not knowing the later means that we need to try $q^n$ possibilities, i.e., all possible values of $x_1$, times all possible values of $x_2$, and so on, to see which ones match $A_j$. Thus, the complexity of finding $\vtr{x}$ from the $\sigma$-polynomial of $\vtr{x}$ through this way is proportional to $\mathcal{O}(q^n)$. 

\descr{Are Collisions Possible?} As discussed in Section~\ref{subsec:correctness}, given the $\sigma$-polynomial $\sigma_{\vtr{x}}$ of an image $\vtr{x}$, it is trivial to find a collision in the \emph{polynomial space}. For instance one such collision is $\sigma_\vtr{x}$ and $z^{t+1} + \sigma_\vtr{x}$. For the same reason, it is also possible to launch a second preimage attack by computing $\sigma_\vtr{x}$ of any image $\vtr{x}$ and then computing $z^{t+1} + \sigma_\vtr{x}$, for instance. However, in both cases as discussed above, the resulting polynomial needs to be a valid $\sigma$-polynomial of some image for it to be a collision in the \emph{image space}. This is not straightforward as the brute-force way to find collisions in the image space is computationally expensive as the analysis after Theorem~\ref{theo:relabel} shows.

\descr{How Random is the Digest?} Theorem~\ref{theo:relabel} also sheds light on the format of the hash digest, i.e., the polynomial $\sigma_\vtr{x}$, of an image $\vtr{x}$. For instance, changing just one pixel of the image, say the $i$th pixel from $x_i$ to $x_i \pm 1$, where $1 \leq i \leq n$, changes each coefficient $A_j = (-1)^j S(j, n)$ of $\sigma_\vtr{x}$, as each term $S(j, n)$ is affected. In order to demonstrate this, we used the parameters $n = 28 \times 28 = 784$, $q = 256$ and $t = 2007$. With the prime $p = 787$, we sampled a random vector $\vtr{a}$, and further sampled 1,000 random images $\vtr{x}$. We then created four sets of 1,000 images as follows. The first set contained images $\vtr{x}'$ which had a single random pixel from $\vtr{x}$ changed by a value of 1; the second set changed the value of 10 random pixels in $\vtr{x}$ by 1; the third set changed the value of 100 random pixels by 1; and the fourth set contained random images drawn independently of $\vtr{x}$. Each image $\vtr{x}$ in the original set was compared with the corresponding images in the four sets by computing the Hamming distance between the coefficients of the resulting $\sigma$-polynomials. The set of images with a single pixel change had an average Hamming distance of $2004.379$, ten pixel changes resulted in an average Hamming distance of $2004.371$, hundred pixel changes resulted in an average Hamming distance of $2004.371$, and completely random images had an average Hamming distance of $2004.49$. Note that the degree of the polynomial is $2007$, but with one coefficient fixed to 1. This gives evidence that the digests are random, and even a single pixel change can completely change the digest. 

While this provides some evidence that the hash digest is random, we would like to emphasize that the digest itself leaks information: namely whether a given image is within a certain $\ell_1$-distance of a target image as this information is encoded in the digest. Thus, given a set of $\sigma$-polynomials of images, it is possible to classify whether a given target image is similar, in the sense of $\ell_1$-distance, to any image in the list. Thus, by definition, a property-preserving hash function does not prevent these types of classification attacks. 

%
%
\section{Experimental Results}
We use the Imagenette dataset~\cite{imagenette} (Imagenette-320 to be specific) for our adversarial image detection application, which is a smaller subset of the well-known Imagenet dataset~\cite{deng2009imagenet}. This is a dataset of 9,459 RGB images of size $224 \times 224$. Thus $n = 224 \times 224 \times 3$. 

\descr{Similarity Metrics.} To measure the difference between original and adversarial images we use three similarity metrics: (a) the Learned Perceptual Image Patch Similarity (LPIPS) metric~\cite{zhang2018perceptual}, which is widely used as a proxy for human perceptual similarity. Generally, a perturbed image is similar to the original one when its LPIPS is lower than 0.2, and the difference can be significantly perceived  when it is more than 0.3~\cite{zhang2018perceptual}, (b) Pixel Change Ratio, which shows the percentage change in absolute pixel values compared to the original image, and (c) normalized asymmetric $\ell_1$-distance (NAD) for two images $\vtr{x}$ and $\vtr{x}^*$ defined as
\begin{equation}
\label{eq:nad}
    \text{NAD}(\vtr{x}, \vtr{x}^*) = \frac{\max\{\norm{\vtr{x} \dotdiv \vtr{x}^*}_1, \norm{\vtr{x}^* \dotdiv \vtr{x}}_1\}}{qn} \times 100
\end{equation}
Recall that the maximum possible distance between two images is $(q-1)n \approx qn$. In this section we assume that $t_+ = t_- = \frac{t}{2}$. Note that for $P_\textsf{as}( \vtr{x}, \vtr{x}^*)$ to evaluate to 1, we must have:
\begin{equation}
\label{eq:nad-t-relation}
t_+ = t_- = \frac{t}{2} \leq \frac{qn \times \text{NAD}(\vtr{x}, \vtr{x}^*)}{100} \Rightarrow t \leq \frac{qn \times \text{NAD}(\vtr{x}, \vtr{x}^*)}{50}
\end{equation}
\subsection{\texorpdfstring{Possible Values of $t$}{Possible Values of t}}
\label{subsec:possible-t}
Ideally, the threshold $t$ for the asymmetric $\ell_1$-distance predicate $P_\mathsf{as}$ should be such that for any two images in the database $\database$ the predicate evaluates to 0, thus ensuring zero false positives. To do so, we define the \emph{empirical error} on the database $\database$ as: 
\begin{equation}
\label{eq:emp-error-pas}
\text{err}_{P_{\mathsf{as}}}(\database) = \sum_{1 \leq i < j \leq N} \frac{P_{\mathsf{as}}(\vtr{x}_i, \vtr{x}_j)}{\binom{N}{2}},
\end{equation}
and the parameters of $P_{\mathsf{as}}$ are $t_+ = t_- = \frac{t}{2}$ and $\delta = 3$. Although in general $P_{\mathsf{as}}(\vtr{x}_i, \vtr{x}_j) \neq P_{\mathsf{as}}(\vtr{x}_j, \vtr{x}_i)$, with $t_+ = t_-$ and large $t$, the difference is not profound enough to matter. We plot the error $\text{err}_{P_{\mathsf{as}}}(\database)$ in Figure~\ref{fig:err-pas} against increasing values of $t$. For all values of $t \leq 325,000 = 3.25 \times 10^5$ we get $\text{err}_{P_{\mathsf{as}}}(\database) = 0$. This is $t \approx 0.008qn$, or $\text{NAD} \approx 0.4$ from Eq.~\eqref{eq:nad-t-relation}. Thus, any value of $t$ less than this should produce no false positives for images in this dataset. Unfortunately, as we shall see, this value of $t$ is not high enough to defend against some attacks. We note that until $t = 2,000,000$ which is $t \approx 0.05qn$ or $\text{NAD} \approx 2.5$, the error rate is less than $2\%$. Thus, we can discard the few images that cause ``collisions'' which is most likely because these images are similar, to use a higher value of $t$.

\begin{figure}[!ht]
    \centering
    \includegraphics[scale=0.4]{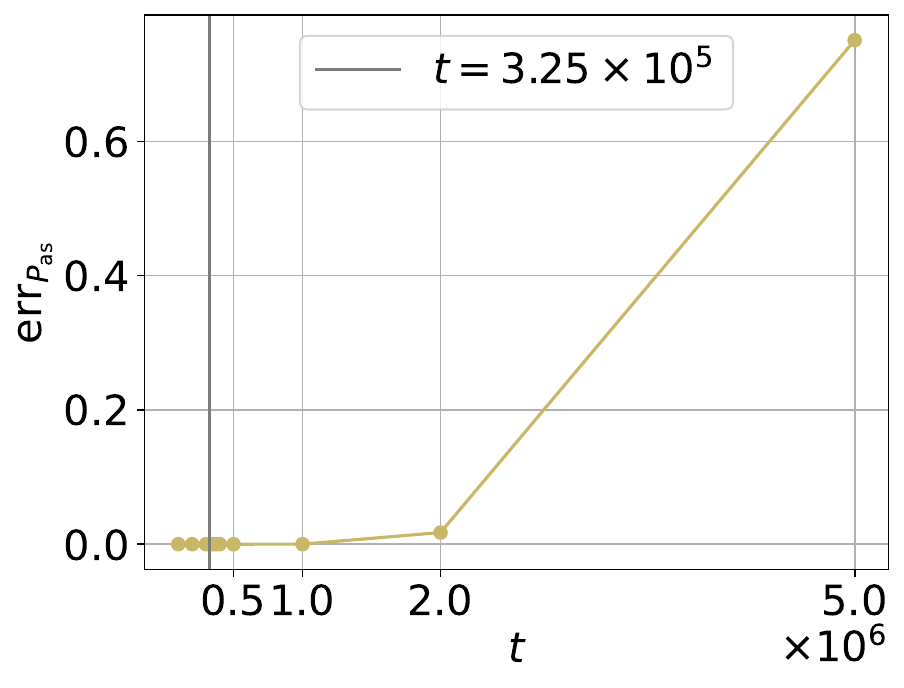}
    \caption{The empirical error on the Imagenette dataset $\database$. We get non-zero error with $t \geq 350,000 = 3.25 \times 10^5$.}
    \label{fig:err-pas}
\end{figure}

\subsection{Impact of Adversarial Attacks}
The Fast Gradient Sign Method (FGSM)~\cite{goodfellow2014explaining} and Projected Gradient Descent (PGD)~\cite{madry2018towards} are two well-known adversarial input attacks. In both attacks the noise parameter $\epsilon$ can be adjusted to add more noise to the input image, with the cross-entropy loss as the objective function. Figure~\ref{fig:fgsm-lpips} shows the impact on the quality of an image through LPIPS, pixel change ratio, and NAD as we increase $\epsilon$ in FGSM. At $\epsilon = 0.4$, the NAD is 0.4512, which approximately corresponds to $t = 325,000$ from Eq~\eqref{eq:nad-t-relation}. At this $t$, the LPIPS is 0.1786. At $\epsilon = 0.1$, with NAD = 1.1277, which corresponds to $t \approx 869,122$, we have LPIPS 0.5714, and hence the image quality has significantly degraded. 


\begin{figure*}[!ht]
    \centering
        \includegraphics[width=1.0\textwidth]{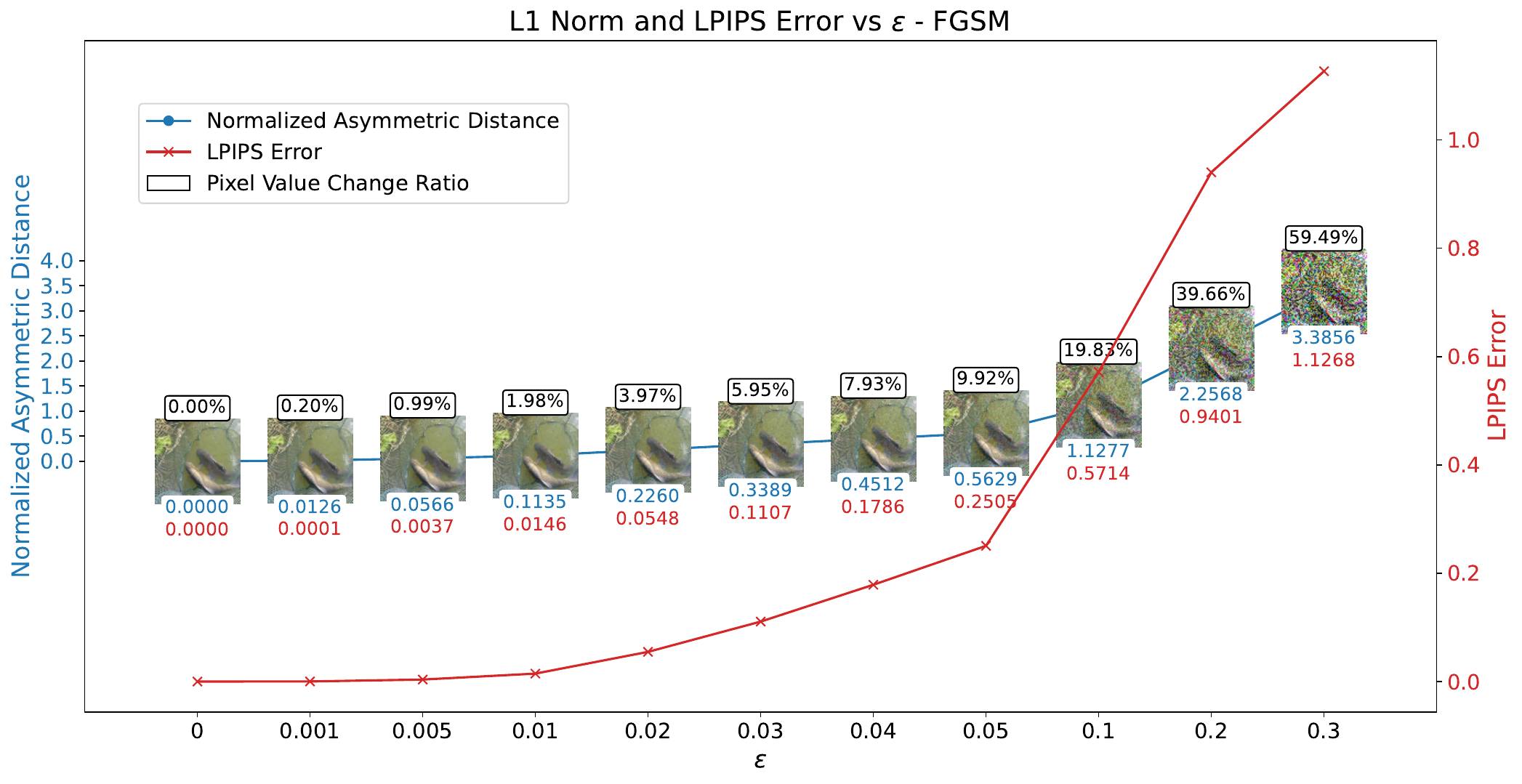}
        \label{fig:fgsm-lpips-1}
    \caption{The impact of adding noise to the image using the FGSM attack on the metrics LPIPS, pixel change ratio and NAD.}
    \label{fig:fgsm-lpips}
\end{figure*}

Figure~\ref{fig:all-lpips} shows the change in image quality as we continue to alter the original image using the PGD attack. To show the impact on a larger number of images, we sample $1,000$ images from the Imagenette dataset and apply both the FGSM and PGD attacks on them by varying $\epsilon$. The results are shown in Table~\ref{tab:fgsm_pgd_stats}. For FGSM, the average NAD of 1.1287, and for PGD an average NAD of 0.8163 produces LPIPS of 0.5596 and 0.4485, respectively, which is significant perceptual loss on the images. 

\begin{figure*}[!ht]
    \centering
    \begin{subfigure}[b]{0.85\textwidth}
        \centering
        \includegraphics[width=\textwidth]{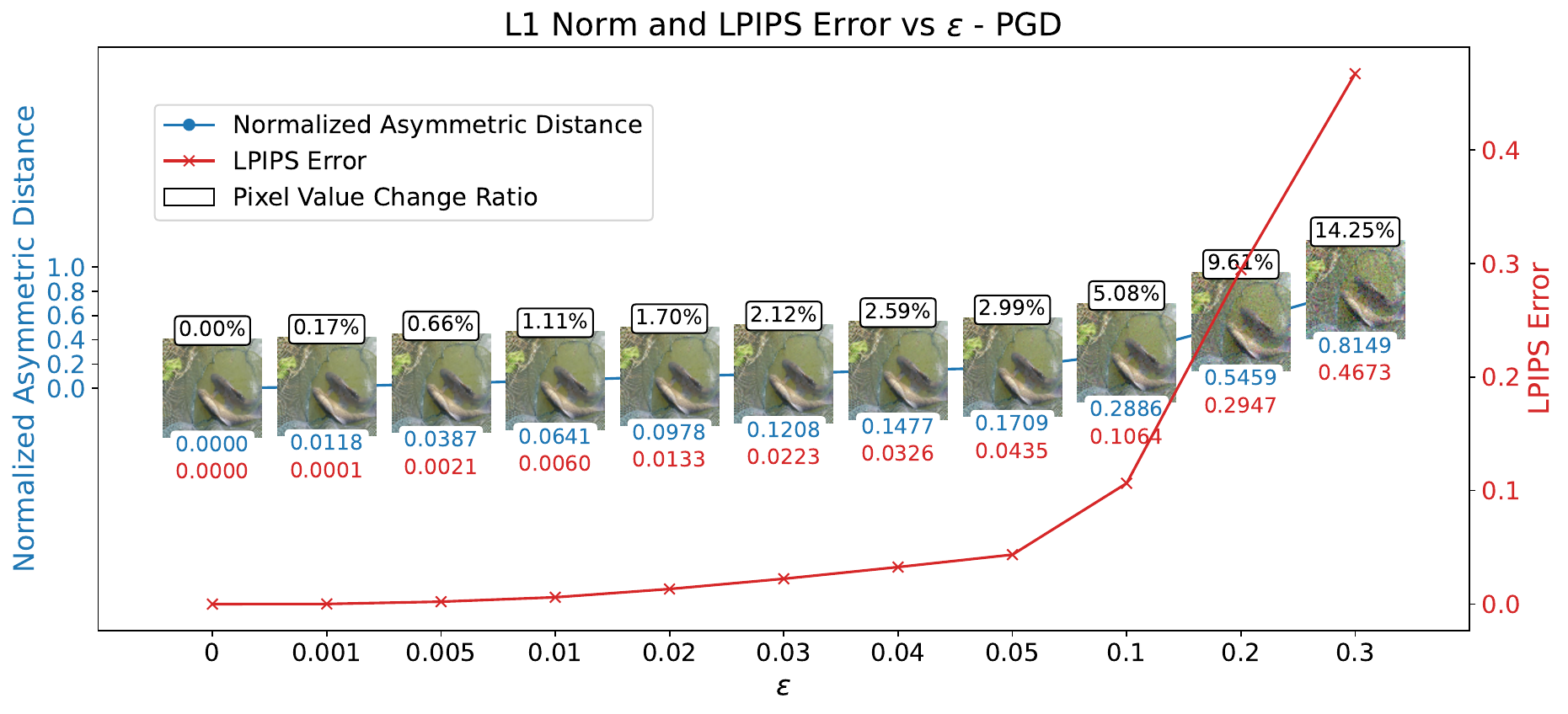}
        \label{fig:fgsm-lpips-2}
    \end{subfigure}
    \vskip\baselineskip
    \begin{subfigure}[b]{0.85\textwidth}
        \centering
        \includegraphics[width=\textwidth]{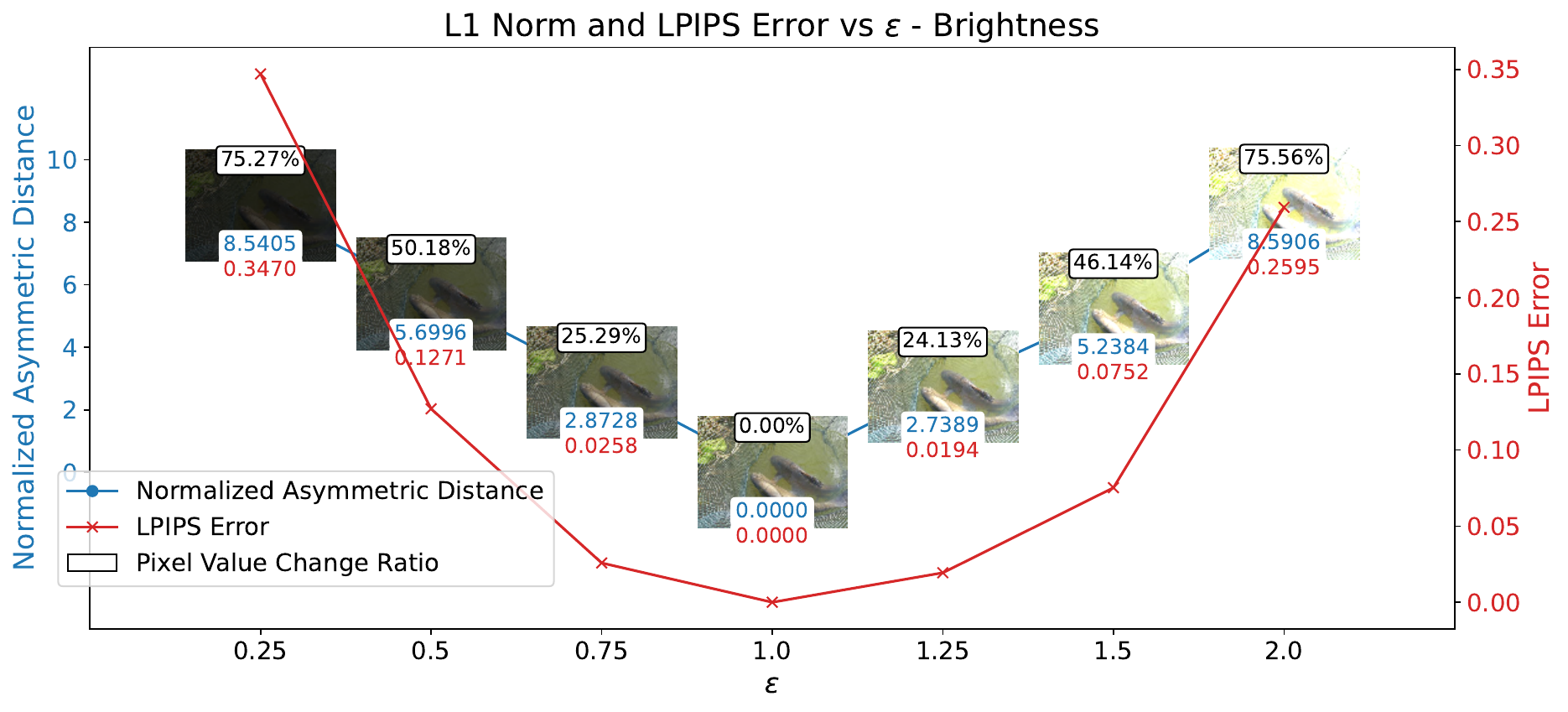}
        \label{fig:fgsm-lpips-3}
    \end{subfigure}
    \hfill
    \begin{subfigure}[b]{0.85\textwidth}
        \centering
        \includegraphics[width=\textwidth]{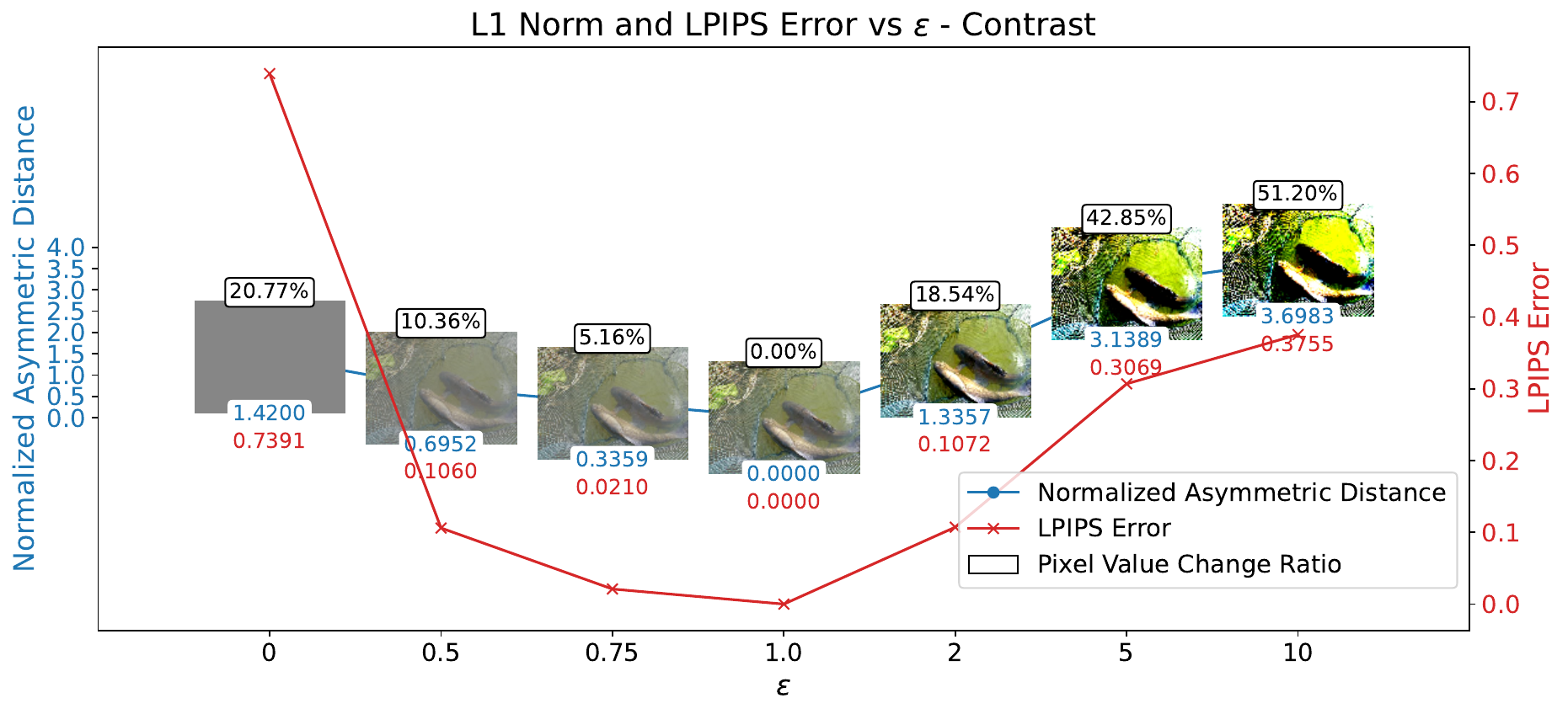}
        \label{fig:fgsm-lpips-4}
    \end{subfigure}
    \caption{The impact of increasing alterations to the image through the PGD attack, brightness, and contrast, on the metrics LPIPS, pixel change ratio and NAD.}
    \label{fig:all-lpips}
\end{figure*}

\begin{table}[h]
\centering
\begin{tabular}{c|ccc|ccc}
$\epsilon$ & Attack & LPIPS & NAD & Attack & LPIPS & NAD \\
\hline \hline
0.000 & FGSM & 0.0000 & 0.0000 & PGD & 0.0000 & 0.0000 \\
0.005 &      & 0.0026 & 0.0580 &     & 0.0016 & 0.0379 \\
0.020 &     & 0.0524 & 0.2266 &     & 0.0145 &  0.1005 \\
0.050 &     & 0.2454 & 0.5644 &     & 0.0470 & 0.1768 \\
0.100 &     & 0.5596 & 1.1287 &     & 0.1187 & 0.2973 \\
0.200 &     & 0.9913 & 2.2576 &     & 0.2871 & 0.5523 \\
0.300 &     & 1.2105 & 3.3857 &     & 0.4485 & 0.8163
\end{tabular}
\caption{Statistics for different $\epsilon$ values on FGSM and PGD.}
\label{tab:fgsm_pgd_stats}
\end{table}

\subsection{Impact of Image Transformations}
Apart from adversarial input attacks, an image can also be changed to evade detection via simple image transformations~\cite{struppek2022neuralhashbreak}. These include filters such as increasing brightness, adjusting contrast, rotating, and cropping the image. We select two such techniques: changing brightness and contrast. By adjusting an \emph{enhancement factor} we can vary the brightness and contrast, with a value of 1 giving the original image.\footnote{See \url{https://pillow.readthedocs.io/en/stable/reference/ImageEnhance.html}} We use the same symbol, i.e., $\epsilon$, to denote the enhancement factor. We show the impact of $\epsilon$ on the image quality using the same sample image in Figure~\ref{fig:all-lpips}. One thing to note is that through brightness and contrast adjustments, at certain $\epsilon$ values even though the quality of the image is both visibly and through the pixel change ratio metric, quite bad, the LPIPS is rather small. For instance, at $\epsilon = 0.5$, the LPIPS is only $0.1271$ for the brightness attack, even though the pixel change ratio is 50.18\%. Thus a smaller value of LPIPS may suffice for the brightness and contrast attacks to deteriorate the adversarial image. In Table~\ref{tab:brightness_contrast_stats} we show the average LPIPS and NAD on the 1,000 images chosen in our experiment. For both techniques, $\epsilon = 0.250$ results in LPIPS of more than $0.3$.

\begin{table}[h]
\centering
\begin{tabular}{c|ccc|ccc}
$\epsilon$ & Attack & LPIPS & NAD & Attack & LPIPS & NAD \\
\hline \hline
0.250 & Brightness & 0.3183 & 6.8516  & Contrast & 0.3204 & 1.7338 \\
0.500 &      & 0.1140 & 4.5666 &     & 0.1109 & 1.1559 \\
0.750 &     & 0.0244 & 2.3057&     & 0.0236 & 0.5695  \\
1.000 &     & 0.0000 & 0.0000 &     & 0.0000 & 0.0000 \\
1.250 &     & 0.0258 & 1.9095 &     & 0.0179 & 0.5303 \\
1.500 &     & 0.0800 & 3.5170 &     & 0.0526 & 0.9300 \\
2.000 &     & 0.1955 & 5.8659 &     & 0.1208 & 1.4895
\end{tabular}
\caption{Statistics for Brightness and Contrast attacks.}
\label{tab:brightness_contrast_stats}
\end{table}

\subsection{Computational Time}
We implement the PPH scheme using the Python library \verb+galois+~\cite{Hostetter_Galois_2020}. We set $t = 0.01qn$ which is more than the quantity $0.008qn$ established in Section~\ref{subsec:possible-t}. We choose $t_+ = t_- = t/2$, and $\delta = 3$. For small grayscale images, such as the $28 \times 28$ sized images used in the MNIST dataset~\cite{lecun1998mnist}, our scheme can compute the $\sigma$-polynomial of an image $\vtr{x}$ in about $0.26$ seconds. This is a one-off cost for the client, and hence not prohibitive. For database creation, we need to convert each image $\vtr{x} \in \database$ to its $\sigma$-polynomial and then invert it. This can be done in time $0.66$ seconds per image. Finally the $\mathsf{eval}_h$ function can be computed in $0.08$ seconds given the $\sigma$-polynomials of the input image and the inverse $\sigma$-polynomial of the target image. 

For larger image sizes, the time can grow large as our algorithms are of the order $\mathcal{O}(t^2)$. Our idea is to divide the image into blocks, and then compute the predicate $P_{\mathsf{as}}$ per block with $n_B = n/B$ being the size of each block and $t_B = t/B$ being the threshold per block, for a block size of $B$. Note that dividing the image into blocks is not unprecedented. It is done by the 
Blockhash perceptual hashing algorithm~\cite{yang2006block}, and this strategy is known to be more robust against local changes to images~\cite{schneider1996robustblock, hao2021pph-attack}. With this, for an RGB $224 \times 224$ image, the $\mathsf{eval}_h$ can be computed in about $0.013$ seconds per block or $13$ second per image. Detailed times are shown in Table~\ref{tab:times} which is the result of running the algorithms a total of 1,000 times. To calculate the time of $\mathsf{eval}_h$ we alter the image such that the predicate is not satisfied, which is the worst-case time. In Table~\ref{tab:eps-times}, we further show the time taken when we choose $t$ such that LPIPS is high for each of the four attacks. Here $t$ is chosen from NAD according to Eq.~\ref{eq:nad-t-relation}. The corresponding values of $\epsilon$ and LPIPS are taken from Tables~\ref{tab:fgsm_pgd_stats} and~\ref{tab:brightness_contrast_stats}.

\begin{table}[]
    \centering
\ifisfullpaper
    \resizebox{\columnwidth}{!}{
\fi
    \begin{tabular}{c|c|c|c|c|c|c|c}
    Size & Color & Blocks & $n_B$ & $t_B$ & Time $\sigma$ & Time $\sigma^{-1}$ & Time $\mathsf{eval}_h$\\
    \hline\hline

    $224 \times 224$	& RGB & 1000 &	150	& 384 & $0.0235$ &	$0.0963$ & $0.0128$ \\
    $128 \times 128$ & RGB	& 100 & 491	& 1256	& $0.1238$ & $0.3712$ &	$0.0455$ \\
    $64	\times 64$ & RGB &	100	& 122	& 312	& $0.0185$ & $0.0777$ &	$0.0101$ \\
    $28 \times 28$ & RGB & 10 & 235	& 601 &	$0.0436$ &	$0.1584$ & $0.0204$ \\
    $28	\times 28$ & Gray &	1 &	784	& 2007 & $0.2596$ &	$0.6667$ & $0.0784$    
    \end{tabular}
\ifisfullpaper
}
\fi
\caption{Total time in seconds taken by our scheme to produce the $\sigma$-polynomial of an image, the $\sigma$-polynomial and its inverse of an image, and the solution using the $\mathsf{eval}_h$ function, where the images are divided into blocks with $t = 0.01qn$.}
\label{tab:times}
\end{table}

\begin{table}
    \centering
\ifisfullpaper
    \resizebox{\columnwidth}{!}{
\fi
    \begin{tabular}{c|c|c|c|c|c|c|c|c}
    Attack & $\epsilon$ & LPIPS & NAD & $n_B$ & $t_B$ & Time $\sigma$ & Time $\sigma^{-1}$ & Time $\mathsf{eval}_h$\\
    \hline\hline
    FGSM &	 0.100 & 0.5596 &  1.1287 & 150 & 869 & 0.0303 &	0.1993 &	0.0309 \\ 
    PGD & 0.300 & 0.4485 & 0.8163 & 150	& 629	&	0.0280 & 0.1506 &	0.0220 \\
    Brightness	& 0.250 & 0.3183 &	6.8516 & 150	& 5280 &	0.0847 & 1.3360 &	0.2651 \\
    & 2.000 & 0.1955 & 5.8659	& 150 & 4520 & 0.0763 &	1.0865 &	0.2143 \\
    Contrast & 0.250 & 0.3204 & 1.7338	 & 150	& 1336	& 0.0386 & 0.2999 &0.0489 \\
    & 2.000 & 0.1208 & 1.4895 & 150 &	1147 &	0.0358 & 0.2592 &	0.0422
    \end{tabular}
\ifisfullpaper
}
\fi
\caption{Total time in seconds taken by our scheme to produce the $\sigma$-polynomial of an image, the $\sigma$-polynomial and its inverse of an image, and the solution using the $\mathsf{eval}_h$ function, where $t$ is chosen such that LPIPS is high for each attack.}
\label{tab:eps-times}
\end{table}

Our implementation was done on a standard Apple M3 ARM processor with 8 (performance) cores, and 16 GB RAM. We note that with more cores, and involving GPUs, these times can be substantially improved, as the algorithms are parallelizable: each block and each database image can be evaluated separately. 

\subsection{Summary of Parameters}
\label{subsec:params}
Due to a large number of parameters used in this paper, we have provided a summary of them and their role in Table~\ref{tab:params} to help the reader better navigate the scheme.

\begin{table}[h]
\centering
\begin{tabularx}{\textwidth}{c|X}
Parameter & Description and Role \\
\hline \hline
$n$ & Size of an image, e.g., $n = 28 \times 28$; equals the number of pixels \\
$q$ & Range of pixel values, e.g., $q = 256$ for grayscale images \\
$t$ & Distance threshold; images within distance $t$ are considered similar\\
$m$ & Size of the hash digest\\
$p$ & Prime number greater than $n$ to construct the finite field $\mathbb{Z}_p$\\
$\vtr{a}$ & A vector of random, unique non-zero coefficients from $\mathbb{Z}_p$, one for each of the $n$ pixels\\
$\sigma_\vtr{x}$ & The $\sigma$-polynomial associated with an image $\vtr{x}$; this encodes the image as a polynomial with coefficients from $\mathbb{Z}_p$\\
$z^{t+1}$ & The monic polynomial which reduces the inverse of $\sigma_\vtr{x}$ to a degree $t$ polynomial; the reduced polynomial is stored as the encoded image \\
$\mathsf{eval}_h$ & The evaluation function which uses the extended Euclidean algorithm to determine whether the $\sigma$-polynomial of an input image is within asymmetric $\ell_1$-distance of $t$ from the target image \\
$t_+$ and $t_-$ & The two one-sided distance thresholds for the asymmetric $\ell_1$-distance predicate, interpreted as the set differences $\vtr{y} - \vtr{x}$ and $\vtr{x} - \vtr{y}$, respectively, if images $\vtr{x}$ and $\vtr{y}$ are represented as multisets \\
$\delta$ & A parameter to reduce errors if two images are dissimilar according to the asymmetric $\ell_1$-distance \\
$B$ & Number of blocks to divide larger images into smaller chunks for computational efficiency; $n$ and $t$ are updated per block as $n_B = n/B$ and $t_B = t/B$
\end{tabularx}
\caption{Summary of Main Parameters.}
\label{tab:params}
\end{table}

\section{Related Work}
\label{sec:rw}
The notion of robust property-preserving hash (RPPH) functions was formally introduced by Boyle et al~\cite{boyle2018adversarially}, where they also give a construction of an RPPH for gap-Hamming distance predicate: the predicate outputs 1 if the Hamming distance is lower than one threshold, 0 if it is higher than the other, and a special symbol if it lies within the gap. The authors also show a construction of a non-rubust PPH for gap-Hamming distance predicate using a locality sensitive hashing (LSH) scheme from~\cite{kushilevitz1998efficient}. Following their work, new constructions for gap-Hamming distance as well as exact Hamming distance predicates have been proposed~\cite{holmgren2022nearly, fleischhacker2021robust, fleischhacker2022property}. In particular, the construction from~\cite{holmgren2022nearly} is based on the idea of efficient list decoding of linear codes. This inspired us to search for list decoding of errors measured in the Euclidean distance ($\ell_2$ distance). Mook and Peikert~\cite{mook2021lattice} show a construction of list decoding of error-correcting codes based on Reed-Solomon codes for $\ell_2$ distance. Unfortunately, as we show in Section~\ref{subsec:feasibility}, this procedure is unlikely to be efficient even for small values of the distance threshold $t$, as the size of the list (possible vectors) blows up. While our result is for the $\ell_1$ metric, from Proposition~\ref{prop:l1-l2-relation}, it also applies to the $\ell_2$ metric. Our construction is based on $\ell_1$-error correcting codes from~\cite{tallinil1codes}, which themselves are derived from their earlier construction of error-correcting codes for Hamming distance~\cite{tallinihammingcodes}. Several adversarial attacks have been shown against perceptual hashing algorithms~\cite{struppek2022neuralhashbreak, hao2021pph-attack}. In particular, evasion attacks add small adversarial noise to cause a mismatch in the hashes of two perceptually similar images. This attack does not apply to property-preserving hashing since the probability of a mismatch between the predicate on the original domain and the hash domain is required to be negligible.   

\section{Conclusion}
\label{sec:conclude}
We have proposed the first property-preserving hash (PPH) function family for (asymmetric) $\ell_1$-distance predicates, with applications to countering adversarial input attacks. Our construction is efficient, as shown through our implementation. Our work leaves a number of avenues for future research. While the proposed hash function shows high compression for smaller distance thresholds $t$, our theoretical results show that further compression may be possible, especially when $t$ is large. Furthermore, our scheme is only robust against one-sided errors and only handles asymmetric $\ell_1$-distance predicates. It remains an open problem to find a robust PPH for the exact $\ell_1$-distance predicate. There may also be interest in finding a robust PPH scheme for the Euclidean distance predicate. Additionally, it is also desirable to find a scheme that is provably hiding, i.e., which shows that inverting the hash function is computationally infeasible. Finally, the implementation of our scheme has the potential to be further sped up due to its highly parallel nature. 

\section*{Acknowledgments}
We are grateful to Quang Cao, Ron Steinfeld and Josef Pieprzyk for helpful discussions on earlier ideas on this topic.

\appendix
\section{Some Useful Results}
\label{app:useful}

\descr{Metrics.} Let $S$ be a set. A function $d: S \times S \rightarrow \mathbb{R}$ is called a metric on $S$ if for all $x, y, z \in S$, (1) $d (x, y) \ge 0$ with equality if and only if $x = y$, (2) $d(x, y) = d(y, x)$, and (3) $d(x, y) \leq d(x, z) + d(z, y)$~\cite{searcoid-metric}. In this case $S$ is called a metric space. 
For $x, y \in \mathbb{Z}_q$, i.e., pixels, define the function $d: \mathbb{Z}_q \times \mathbb{Z}_q \rightarrow \mathbb{R}$ as: 
\begin{equation}
\label{eq:d}
d(x, y) = |x - y |
\end{equation}
It follows that $d$ is a metric on $\mathbb{Z}_q$, as can be easily verified.  
Similarly, define the Hamming distance $d_H: \mathbb{Z}_q \times \mathbb{Z}_q \rightarrow \mathbb{R}$ as: 
\begin{equation}
    d_H(x, y) = \begin{cases}
        1, \text{ if } x \neq y, \\
        0, \text{ otherwise}
    \end{cases}
\end{equation}
Then $d_H$ is also a metric. 
Thus, $\mathbb{Z}_q$ together with $d$ or with $d_H$ is a metric space. It follows that the following are metrics spaces on the set $\mathbb{Z}_q^n$~\cite[\S 1.6]{searcoid-metric}:
\begin{enumerate}
    \item The set $\mathbb{Z}_q^n$ together with the metric $d_0(\vtr{x}, \vtr{y}) = \sum_{i=1}^n d_H(x_i, y_i)$.
    \item The set $\mathbb{Z}_q^n$ together with the metric $d_1(\vtr{x}, \vtr{y}) = \sum_{i=1}^n d(x_i, y_i)$.
    \item The set $\mathbb{Z}_q^n$ together with the metric $d_2(\vtr{x}, \vtr{y}) = \sqrt{\sum_{i=1}^n (d(x_i, y_i))^2}$.
    \item The set $\mathbb{Z}_q^n$ together with the metric $d_\infty(\vtr{x}, \vtr{y}) = \max_{i} \{d(x_i, y_i)\}$,
\end{enumerate}
for all $\mathbf{x}, \mathbf{y} \in \mathbb{Z}_q^n$, with $x_i$ and $y_i$ being their $i$th elements. With these metrics we define the following norms for all $\mathbf{x} \in \mathbb{Z}_q^n$:

\begin{enumerate}
    \item $\ell_0$-norm (Hamming weight): $\lVert \vtr{x} \rVert_0 = d_0(\mathbf{x}, 0)$.
    \item $\ell_1$-norm: $\lVert \vtr{x} \rVert_1 = d_1(\vtr{x}, \vtr{0})$.
    \item $\ell_2$-norm (Euclidean): $\lVert \vtr{x} \rVert_2 = d_2(\vtr{x}, \vtr{0})$.
    \item $\ell_\infty$-norm: $\lVert \vtr{x} \rVert_\infty = d_\infty(\vtr{x}, \vtr{0})$.
\end{enumerate}
Technically these are not norms as we have not even defined the vector space of images. However, we will continue to use the term by assuming the operations to be over the vector space $\mathbb{R}^n$.  

\descr{Miscellaneous Results.}
\begin{proposition}
\label{prop:aq-result}
Let $a > 1.4$ be a real number, and let $q \geq 4$ be a positive integer. Then $a^q - 1 \geq a^{q-1}$.
\end{proposition}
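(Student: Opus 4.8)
The plan is to reduce the claim to a single product inequality and then bound the two factors separately. Rearranging $a^q - 1 \geq a^{q-1}$ gives $a^q - a^{q-1} \geq 1$, and factoring out $a^{q-1}$ turns this into $a^{q-1}(a-1) \geq 1$. So it suffices to show that this product is at least $1$ for every real $a > 1.4$ and every integer $q \geq 4$.

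First I would observe that both factors are monotone in the relevant parameters. Since $a > 1.4 > 1$, the quantity $a^{q-1}(a-1)$ is increasing in $a$ on the region of interest, and for fixed $a > 1$ it is increasing in $q$. Hence it suffices to establish the bound at the ``worst case'' corner $a \to 1.4^+$, $q = 4$, which avoids any induction on $q$. Next I would bound each factor using the hypotheses $a > 1.4$ and $q - 1 \geq 3$: we get $a - 1 > 0.4$ and $a^{q-1} > 1.4^{q-1} \geq 1.4^3 = 2.744$. Multiplying yields
\[
a^{q-1}(a-1) > 2.744 \times 0.4 = 1.0976 > 1,
\]
which is precisely the desired inequality.

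There is essentially no hard step here; the only thing to check is the numerical fact that $1.4^3 \cdot 0.4$ exceeds $1$, and this is exactly what pins down the constant $1.4$ in the hypothesis — the bound holds with a small amount of slack, so the threshold cannot be lowered much. The monotonicity remark is the only structural ingredient, letting me collapse the full parameter range to a single evaluation.
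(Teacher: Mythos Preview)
Your proof is correct. Both you and the paper reduce the inequality to $a^{q-1}(a-1) \geq 1$ and verify it at the corner $a = 1.4$, $q = 4$ via the same computation $1.4^3 \cdot 0.4 = 1.0976 > 1$. The only difference is how general $q$ is handled: the paper proceeds by induction on $q$ (multiplying the inductive hypothesis $a^k - 1 \geq a^{k-1}$ by $a$ and using $a > 1$), whereas you observe directly that $a^{q-1}(a-1)$ is increasing in $q$ for fixed $a > 1$, which collapses the entire range to the base case in one line. Your monotonicity argument is a mild simplification --- it replaces the inductive step with an observation --- but the two proofs share the same numerical core and the same structural idea.
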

\begin{proof}
We prove this via induction. For $q = 4$, the left hand side is $a^4 - 1$ and the right hand side is $a^3$. For the inequality to hold we must have $a^3(a-1) \geq 0$. Since this is an increasing function of $a$, and direct substitution shows that this inequality is satisfied by $a = 1.4$, it follows that it is satisfied by all $a > 1.4$. 
Now suppose the result holds for $q = k$, i.e., $a^k - 1 \geq a^{k-1}$. Multiplying both sides by $a$, we get
\begin{align*}
    (a^k - 1) a \geq a^{k-1} a\\
    a^{k + 1} - a \geq a^{k} \\
    a^{k + 1} - 1 \geq a^{k} + a - 1 \geq a^k,
\end{align*}
where the last step follows from the fact that $a > 1$.
\end{proof}

\descr{Bernoulli's Inequality.} Let $x > -1$ be a real number and let $q$ be a positive integer then 
\begin{equation}
\label{eq:bernoulli-inequality}
(1 + x)^q \geq 1 + qx    
\end{equation}
See for example~\cite{bernoulli-inequality}. 
\begin{proposition}
\label{prop:increasing-function-q}
Let $q \geq 2$ be an integer. Then the function
\[
f(q) = \left( 1 + \frac{0.9}{q}\right)^q
\]
is an increasing function of $q$. 
\end{proposition}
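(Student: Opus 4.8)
The plan is to establish monotonicity by showing that the ratio of consecutive values is at least one, i.e., $f(q+1)/f(q) \geq 1$ for every integer $q \geq 2$. Since $0.9 > 0$ and $q \geq 2$, every factor $1 + 0.9/q$ is positive, so passing to ratios is legitimate and preserves the direction of the inequality; establishing $f(q+1) \geq f(q)$ for all such $q$ then immediately yields that $f$ is increasing on the integers $q \geq 2$.

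First I would isolate one factor and rewrite the remaining $q$-th power as a single power of a ratio:
\[
\frac{f(q+1)}{f(q)} = \left(1 + \frac{0.9}{q+1}\right)\left(\frac{1 + \frac{0.9}{q+1}}{1 + \frac{0.9}{q}}\right)^{q}.
\]
A short computation of the inner fraction gives
\[
\frac{1 + \frac{0.9}{q+1}}{1 + \frac{0.9}{q}} = \frac{q(q+1.9)}{(q+1)(q+0.9)} = 1 - \frac{0.9}{(q+1)(q+0.9)},
\]
so the bracketed base has the form $1 - c$ with $c = 0.9/[(q+1)(q+0.9)]$. For $q \geq 2$ one checks that $0 < c < 1$, hence $-c > -1$ and Bernoulli's inequality \eqref{eq:bernoulli-inequality} applies, giving
\[
\left(1 - \frac{0.9}{(q+1)(q+0.9)}\right)^{q} \geq 1 - \frac{0.9\,q}{(q+1)(q+0.9)}.
\]

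It then remains to show that the product of $1 + 0.9/(q+1)$ with this lower bound is at least $1$. Writing $a = 0.9/(q+1)$ and $b = 0.9q/[(q+1)(q+0.9)]$, I would expand $(1+a)(1-b) = 1 + (a - b - ab)$ and verify that the residual $a - b - ab$ is strictly positive. Collecting terms over the common denominator yields
\[
a - b - ab = \frac{0.81}{(q+1)^{2}(q+0.9)} > 0,
\]
which closes the argument and in fact shows $f$ is \emph{strictly} increasing. The only delicate point — the potential obstacle — is that Bernoulli's inequality is a one-sided bound and might in principle be too lossy to survive multiplication by the factor $1 + a$; the computation above shows the cross-terms cancel favorably and leave a clean positive residual, so Bernoulli is exactly strong enough here. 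As an alternative that sidesteps this bookkeeping entirely, one could instead invoke AM--GM on the $q+1$ numbers consisting of $q$ copies of $1 + 0.9/q$ together with a single $1$: the geometric mean is $(1+0.9/q)^{q/(q+1)}$ and the arithmetic mean simplifies to $1 + 0.9/(q+1)$, so raising the inequality to the $(q+1)$-th power directly gives $(1+0.9/q)^{q} \leq (1 + 0.9/(q+1))^{q+1}$.
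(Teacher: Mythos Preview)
Your proof is correct and follows essentially the same approach as the paper: form the ratio $f(q+1)/f(q)$, rewrite the inner base as $1 - 0.9/[(q+1)(q+0.9)]$, and apply Bernoulli's inequality. The only cosmetic difference is that the paper pulls out the factor $(1+0.9/q)$ and applies Bernoulli to the $(q{+}1)$-th power, which makes the final product telescope to exactly $1$ without the residual $0.81/[(q+1)^2(q+0.9)]$ computation; your AM--GM alternative is a nice bonus not present in the paper.
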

\begin{proof}
Consider the ratio:\footnote{Part of this proof is derived from: \url{https://math.stackexchange.com/questions/1589429/how-to-prove-that-logxx-when-x1/}.}
\begin{align*}
    \frac{f(q + 1)}{f(q)} &= \frac{\left( 1 + \frac{0.9}{q+1}\right)^{q+1}}{\left( 1 + \frac{0.9}{q}\right)^q} = \frac{\left( 1 + \frac{0.9}{q+1}\right)^{q+1}}{\left( 1 + \frac{0.9}{q}\right)^q} \frac{\left( 1 + \frac{0.9}{q}\right)}{\left( 1 + \frac{0.9}{q}\right)} \\
    &= \left( \frac{q + 1 + 0.9}{q+1} \frac{q}{q + 0.9} \right)^{q+1} \left( 1 + \frac{0.9}{q}\right)\\
    &= \left( \frac{(q + 1 + 0.9)q + 0.9 - 0.9}{(q+1)(q+0.9)}\right)^{q+1} \left( 1 + \frac{0.9}{q}\right)\\
    &= \left( 1 + \frac{- 0.9}{(q+1)(q+0.9)}\right)^{q+1} \left( 1 + \frac{0.9}{q}\right)
\end{align*}
Now since $q \geq 2$, we have 
\[
\frac{- 0.9}{(q+1)(q+0.9)} \geq - \frac{0.9}{3 \times 2.9} > -1.
\]
Hence we can apply Bernoulli's inequality (Eq.~\ref{eq:bernoulli-inequality}) to obtain:
\begin{align*}
    \frac{f(q + 1)}{f(q)} &\geq \left( 1 - \frac{0.9}{q+0.9}\right) \left( 1 + \frac{0.9}{q}\right) = \left( \frac{q + 0.9 - 0.9}{q+0.9}\right) \left( \frac{q + 0.9}{q}\right) = 1.
\end{align*}
Thus $f(q+1) \geq f(q)$. 
\end{proof}




\bibliography{bibliography}

\end{document}